\documentclass[sigconf, nonacm]{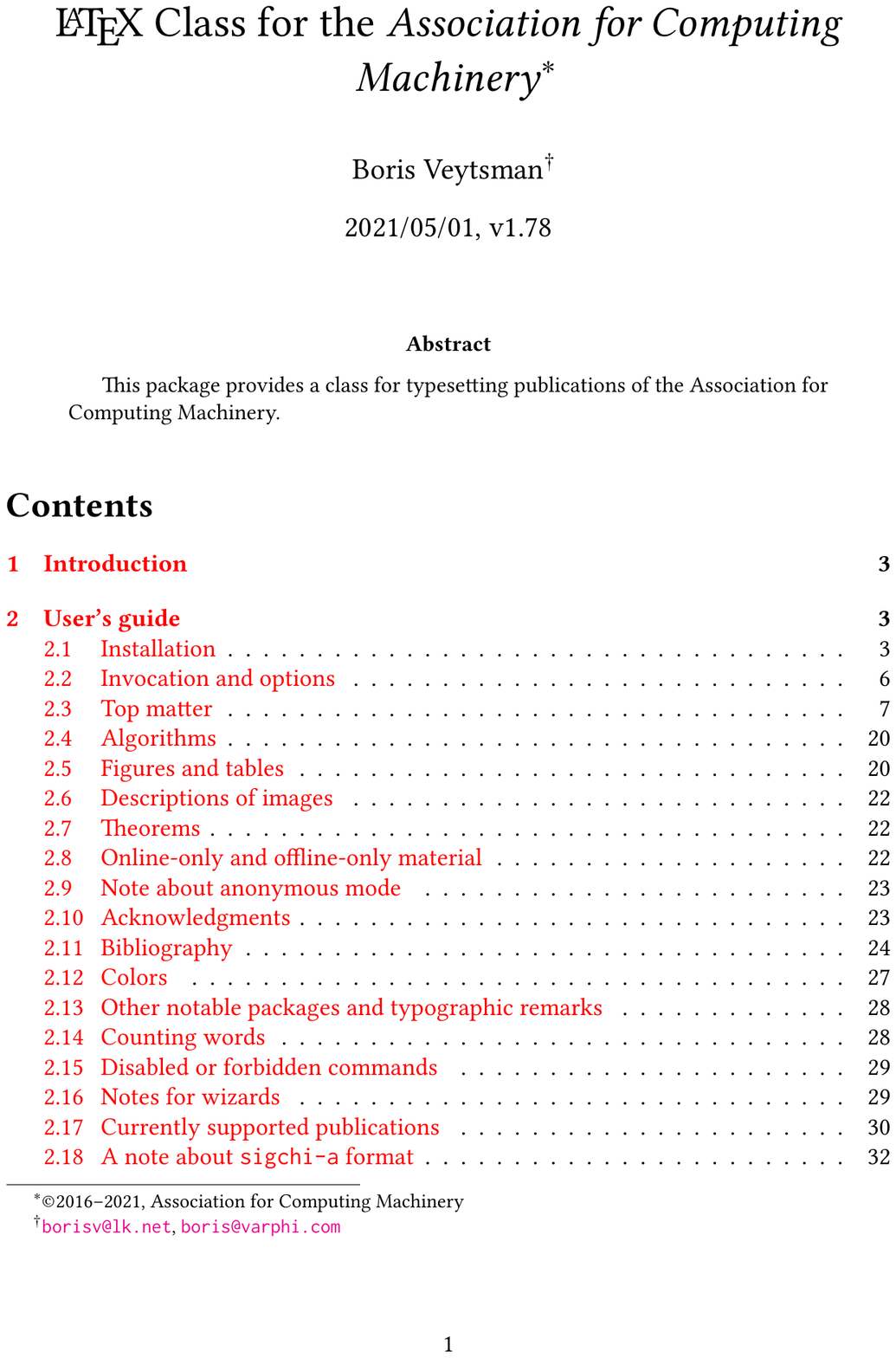}

\usepackage{color}
\usepackage{algorithmicx}
\usepackage{amsmath}
\usepackage{amsthm}
\usepackage{kbordermatrix}
\usepackage{bbm}
\usepackage{pifont}
\usepackage{wasysym}
\usepackage{enumitem}
\usepackage{marvosym}
\usepackage{tikz}
\usepackage{sidecap}
\usepackage[labelfont=bf, textfont=up]{caption}

\usepackage[normalem]{ulem}
\usepackage[capitalise]{cleveref}

\newtheorem{lemma}{Lemma}
\newtheorem{definition}[lemma]{Definition}
\newtheorem{example}[lemma]{Example}

\newtheorem{theorem}{Theorem}

\newtheorem{problem}[lemma]{Problem}

\newtheorem{corollary}[lemma]{Corollary}
\newtheorem{observation}[lemma]{Observation}

\newtheorem{fact}[lemma]{Fact}

\newcommand{\overbar}[1]{\mkern 1.5mu\overline{\mkern-1.5mu#1\mkern-1.5mu}\mkern 1.5mu}

\newcommand{\sett}{\text{set}}

\newcommand{\detbag}{\xrightarrow{\parbox{5mm}{\tiny \hspace{2mm}bag\vspace{-1mm}}}}

\newcommand{\detset}{\xrightarrow{\parbox{5mm}{\tiny \hspace{2mm}set\vspace{-1mm}}}}

\newcommand{\detttt}{\xrightarrow[]{}}
\newcommand{\range}[1]{\{ 1, \dots, #1 \}}

\newcommand{\N}{\mathbb{N}}
\newcommand{\R}{\mathbb{R}}
\newcommand{\Q}{\mathbb{Q}}
\newcommand{\Z}{\mathbb{Z}}
\newcommand{\Ss}{\mathcal{S}}
\newcommand{\Ps}{\mathcal{P}}
\newcommand{\Cs}{\mathcal{C}}

\newcommand{\vecpow}[2]{#1 \;\mars\; #2}

\newcommand{\linspan}{\text{span}}
\newcommand{\dotprod}[2]{\langle #1, #2 \rangle}

\newcommand{\set}[1]{\{#1\}}
\newcommand{\tuple}[1]{\langle#1\rangle}
\newcommand{\pair}[1]{\tuple{#1}}

\newcommand{\schema}{\Sigma}
\newcommand{\structure}{D}

\newcommand{\db}{\structure}

\newcommand{\views}{V}
\newcommand{\query}{q}

\newcommand{\domain}[1]{dom(#1)}

\newcommand{\homs}[1]{hom(#1)}
\newcommand{\nhoms}[1]{|\homs{#1}|}

\newcommand{\coeff}{c}
\newcommand{\hinstance}{I}
\newcommand{\hsolution}{f}

\newcommand{\Vector}[1]{\vec{#1}}
\newcommand{\Vx}{\Vector{x}}
\newcommand{\Vy}{\Vector{y}}

\newcommand{\Vt}{\Vector{t}}

\newcommand{\Va}{\Vector{a}}

\newcommand{\rel}[1]{\overbar{#1}}

\newcommand{\Pcom}[1]{{\color{orange}#1}}

\newcommand{\Pst}[1]{\sout{#1}}

\newcommand{\JMcom}[1]{{\color{green}#1}}

\newcommand{\JCom}[1]{{\color{red}#1}}

\newcommand{\Jst}[1]{\sout{#1}}

\renewcommand{\Pcom}[1]{}

\renewcommand{\Pst}[1]{}

\renewcommand{\JMcom}[1]{}

\renewcommand{\JCom}[1]{}

\renewcommand{\Jst}[1]{}


\newcommand{\querylang}{\mathcal{L}}

\crefname{observation}{Observation}{Observations}
\crefname{corollary}{Corollary}{Corollaries}

\begin{document}
\title{Determinacy of Real Conjunctive Queries. The Boolean Case.}
 
 \author{Jarosław Kwiecień}
 \affiliation{%
    \institution{University of Wrocław}
    \country{}
    \streetaddress{}
    \city{}
    \state{}
    \postcode{}}
 \author{Jerzy Marcinkowski}
 \affiliation{%
    \institution{University of Wrocław}
    \country{}
    \streetaddress{}
    \city{}
    \state{}
    \postcode{}}
 \author{Piotr Ostropolski-Nalewaja}
 \affiliation{%
    \institution{University of Wrocław}
    \country{}
    \streetaddress{}
    \city{}
    \state{}
    \postcode{}}

\begin{abstract}
In their classical 1993 paper \cite{CV93} Chaudhuri and Vardi notice that some fundamental database theory
results and techniques fail to survive when we try to see query answers as
bags (multisets) of tuples rather than as sets of tuples.

But disappointingly,  almost 30 years after \cite{CV93},
the bag-semantics based database theory is still in the infancy.
We do not even know whether conjunctive query containment is decidable.
And this is not due to lack of interest, but because, in the multiset world,
everything suddenly gets discouragingly complicated.

In this paper we try to re-examine, in the bag semantics scenario, the query determinacy problem, which has recently been intensively studied in the set semantics scenario. We show that query determinacy (under bag semantics) is decidable for boolean conjunctive queries and undecidable for unions of such queries (in contrast to the set semantics scenario, where the UCQ case remains decidable even for unary queries). We also show that -- surprisingly -- for path queries determinacy under bag semantics coincides with determinacy under set semantics (and thus it is decidable).
\end{abstract}

\maketitle

\section{Introduction}

\subsection{The context}\label{intro-1}

This paper is about the query determinacy problem. So let us maybe start with a definition:

\begin{definition}\label{pierwsza}
\textbullet~ For a query $\query $  and a finite set of queries $\views $, we say that $\views$ determines $\query$ (denoted as $\views \detttt \query$) 
if the implication:
\begin{equation*}\label{eq:determinacy}\tag{\ding{171}}
(\forall v\in V \;\;   v(\db) = v(\db')) \implies \query(\db) = \query(\db')
\end{equation*}
holds for every pair $\db, \db'$ of finite\footnote{Both ``finite'' and ``unrestricted'' versions of this problem were considered, but in this paper let us concentrate on the finite one, which is the only one to make sense in the multiset scenario.} structures\footnote{Where $v(D)$ is the result of applying $v$ to $D$. }.

\textbullet~  An instance of the determinacy problem, for a query language 
$\querylang$, consists of a query $\query \in \querylang$ and a finite set of views $\views \subseteq \querylang$. We ask whether 
$\views\detttt \query$.
\end{definition}

Many different variants of the determinacy  problem, for various query languages, and (when applicable) various arities of queries, have been
studied in the last three decades. And  the point has been reached, where we have a pretty complete classification
of the variants, in the sense that we know which of them are decidable (few) and which are not (most).

So, for example,  as observed in \cite{M20}, the problem is decidable if the queries in $V$ are unary  UCQs\footnote{``Unary'' means that they have one free variable. Similar result for unary conjunctive queries was proven 
 in \cite{NSV07}.}
 (unions of conjunctive queries) and $\query$ is any UCQ. 
 Let us outline how one can prove this:
 
 As noticed in he paper \cite{GM15}, 
  $V \detttt \query$ holds if and only if:\\
 \hspace*{15mm} (*) $\;\;Chase(TGD(V),green(\query))\models red(\query)$\\
  where $green(\query)$ and $red(\query)$ are some structures that can easily be constructed from $\query$ and $TGD(V)$ is some set of Tuple Generating Dependencies
  which can easily be constructed from $V$, and where $Chase(TGD(V),green(\query))$ is a result of applying the TGDs from $TGD(V)$ to $green(\query)$ until the fixpoint is reached. Then, it is easy to see that if the queries from $V$ are unary then the TGDs in $TGD(V)$ are frontier one.
  And  query entailment\footnote{That is, condition (*) above.} is decidable for sets of such TGDs \cite{MLMS11}. Then, if one is unhappy 
  with the fact that  $Chase(TGD(V),green(\query))$ is potentially  
  infinite, leading to infinite  $D$ and $D'$, 
  the finite controlabillity result for frontier-one TGDs (implied by \cite{BCS15}) 
  can be used to replace $Chase(TGD(V),green(\query))$ with a finite structure
  with the desired properties.
  
  We do not really want our readers to understand this complicated
  reasoning (unless they already do). We only outline it in order to show that
 database theory has  reached the point where it is no  longer merely a 
set\footnote{Or maybe ``multiset'' would be a better term in this context, as some of the results
were produced more than once.} of  results about the fundamental 
notions and phenomena,
but  a real scientific theory, able to explain and interpret facts which are apparently totally unrelated: we do not believe that the authors of 
\cite{MLMS11} and 
\cite{BCS15} ever expected their results to be used in a decidability proof of a variant of the determinacy problem.

  
  Unfortunately, this beautiful palace of database theory, both the results and the tools,
   collapses like a house of cards when we try to be slightly more realistic and assume that the queries do not return {\bf sets} of tuples, but they return {\bf multisets} (or {\em bags}) of tuples.

  
  And this is not a new observation.
  It was already spotted in \cite{CV93}, 
  where the authors try to see what happens 
  to the most important database theory fundamental, query containment, if bag semantics is assumed, concluding that the 
  {\em 
 ``techniques from the set-theoretic
 setting do not carry over to the
 bag-theoretic setting''.}
 
 The paper \cite{CV93} was understood, at least by part of the community, as a call {\em ``for a {\bf re-examination} of
the foundations of databases where the fundamental concepts and algorithmic problems are
investigated under bag semantics, instead of set semantics''} (see \cite{AK21}, page 2). But only rather limited progress has been achieved.
Even the decidability of conjunctive query containment
problem remains open in the multiset world. 
And this is in spite of a considerable effort,
which is reflected by a list of publications.

First, in 1995 \cite{IR95} show that containment of 
  UCQs, which is in NP when the classical (set) semantics is considered, becomes undecidable for the multiset semantics. 
Then (among other papers) there are \cite{JKV06} where it is shown that containment is undecidable if 
inequalities are allowed in conjunctive queries and \cite{ADG10} which shows decidability (and establishes complexity) for several
simple subcases.  And then, finally, there is a  paper \cite{AKNS20},
where query containment is in an elegant way related to the information-theoretic notion of entropy, and it is shown that 
decidability of even a quite limited subproblem of query containment would imply a solution to a long standing open problem in information theory.

Apart from the line of research focused on the query containment problem, the number of such {\em re-examination} attempts, while growing, remains low. And this 
is -- we understand -- not because of lack of interest, but because (as the containment problem illustrates) 
everything suddenly gets very complicated 
when multiset semantics is assumed. One example we know about is the recent paper 
\cite{AK21} where the authors re-examine the old result from 
\cite{BFMY83}, that a database schema is
acyclic if and only if the local-to-global consistency property for relations over that schema
holds.

  \subsection{Our contribution (and the future work).}

In this paper we attempt {\em a re-examination}, under multiset semantics, of the query determinacy problem. 

This means that we now read the equalities in formula \ding{171} 
as
equalities of multisets. To distinguish we will use the symbol 
$\detset$ to denote the old style set-semantics determinacy and $\detbag$ for 
determinacy under multi-set semantics.

  The first question one naturally needs to ask here is whether $\detbag$ is really a different notion
  than $\detset$. And, if they are indeed different,  the second question is:
  does at least one implication hold? Like in the case
  of query containment, where, as noticed already in \cite{CV93},
 containment under multiset semantics  is a strictly stronger property than containment under set semantics?
 
 To show that the two versions are really different let us use:
 
 \begin{example}\label{ex-1}
 Let $\query$ be the query $\exists u,y,z \; P(u,\mathbf{x}),R(\mathbf{x},y), S(y,z)$
 and let $V$ consist of two conjunctive queries:
 $$\exists u,y \;P(u,\mathbf{x}),R(\mathbf{x},y) \;\;\;\;\;\;\;\;\;\;\;\;\; \exists  y,z \: R(\mathbf{x},y), S(y,z)$$
 Then is is easy to see that $V\detset \query$ but $V\not \detbag \query$.
  \end{example}

 Regarding the second question notice that while equality 
 of $\psi(D)$ and $\psi(D')$ (for some query $\psi$) under multiset semantics  implies that they are 
 also equal under set semantics,
 the formula \ding{171}  has both positive and negative occurrence of equality of the answer sets. 
 So it is not obvious at all that multiset determinacy always implies determinacy in the set semantics world. 
 And indeed:
  
  \begin{example}\label{ex-zly} Let $\query$ be the query $\exists x \; R(x)$ and let $V$ consist of two queries:\vspace{-4mm}
  $$\;\;\;\;v_1=\exists x \; P(x) \;\;\;\;\;\;\;\;\;\;\;\;\; v_2=\exists x \; P(x)\vee \exists x \; R(x)$$
  
  Then it is easy to see that $V\not\detset \query$. But under 
  the multiset semantics for each $D$ we have $\query(D)=v_2(D)-v_1(D)$ (since we consider boolean queries here, the 
  answers are natural numbers), which implies that  $V\detbag \query$.
  \end{example}
  
  Can such example be constructed for conjunctive queries rather than UCQs? We do not know. We conjecture that the answer is ``no'',
  but proving it will probably be hard. What we can show (and we find it a bit surprising, because the situations where 
  set-semantics based notions coincide with their multiset-semantics counterparts seem to be rare)  is:
  
  \begin{theorem}\label{th-1} If $V$ is a set of path queries, and $\query$ is a path query\footnote{For a definition of path queries see \cref{sec:path-queries}} then 
   $V\detset \query$ if and only if $V \detbag \query$.
  \end{theorem}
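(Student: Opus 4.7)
My plan rests on the observation that, for boolean path queries, both determinacy notions ought to collapse to the trivial syntactic condition that $\query$ occurs in $V$ (up to equivalence); so I aim to prove $V \detset \query \iff \query \in V \iff V \detbag \query$. Identifying each boolean path query with its word over the alphabet of binary relation symbols, let $D_w$ denote the canonical path graph spelling out $w$. The key elementary fact I rely on throughout is that $n_{D_u}(v)$, the number of homomorphisms from $v$ into $D_u$, equals the number of occurrences of $v$ as a contiguous subword of $u$; in particular this number is $1$ when $u=v$ and $0$ whenever $|v|>|u|$.

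For the set side, the direction $\query \in V \Rightarrow V \detset \query$ is immediate via the view $v=\query$. For its converse, assuming $\query \notin V$, I would take $D := D_\query$ and $D' := \bigsqcup_{v\in V,\; v \text{ a substring of } \query} D_v$. Transitivity of the substring relation makes these two databases satisfy exactly the same views in $V$, yet $\query$ holds in $D$ and fails in $D'$: any homomorphism from $\query$ into $D'$ must land in a single component $D_v$, forcing $\query$ to be a substring of some $v$ already a substring of $\query$, hence $v=\query$, contradicting $\query \notin V$.

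The bag side is the heart of the argument. Form the square $(V\cup\{\query\})\times(V\cup\{\query\})$ matrix $M$ defined by $M_{u,v}:=n_{D_u}(v)$, and order queries by length. Then $M$ is lower-triangular with $1$'s on the diagonal, hence invertible over $\Q$. When $\query \notin V$, back-substitution in the $|V|$ equations $\sum_u \alpha_u M_{u,v}=0$ for $v\in V$, after normalising $\alpha_\query := 1$, yields a unique rational solution $\vec\alpha$; invertibility of $M$ then forces $\sum_u \alpha_u M_{u,\query} \neq 0$. Clearing denominators and splitting $\vec\alpha=\vec\alpha^+-\vec\alpha^-$ into non-negative integer parts produces databases $D := \bigsqcup_u \alpha_u^+ D_u$ and $D' := \bigsqcup_u \alpha_u^- D_u$ for which $n_D(v)-n_{D'}(v) = \sum_u \alpha_u M_{u,v}$, which vanishes for every $v\in V$ but not for $v=\query$, refuting $V \detbag \query$. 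The main obstacle I foresee is exactly this algebraic reformulation: identifying the occurrence matrix, verifying its triangular invertibility, and cleanly realising a rational null-vector as a difference of bona fide databases, especially when $|\query|$ is neither minimal nor maximal among the lengths in $V \cup \{\query\}$ so that the back-substitution has to respect the position of $\query$ in the ordering.
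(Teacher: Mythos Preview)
Your proposal rests on a misreading of the definition: in this paper a path query is \emph{not} boolean. It has two free variables $x$ and $y$, and its answer on $D$ is the (multi)set of pairs $\langle a,b\rangle$ for which there is a path labelled $q$ from $a$ to $b$ in $D$. So the object to be determined is a full binary relation (under bag semantics, an $n\times n$ matrix of multiplicities), not a single truth value or a single homomorphism count.

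This already makes your central equivalence $V \detset q \iff q \in V$ false. Take $\Sigma=\{A,B\}$, $V=\{A,B\}$, $q=AB$: from the binary relations $A(D)$ and $B(D)$ one recovers $AB(D)$ by ordinary relational composition, so $V\detset q$ holds although $q\notin V$. The actual characterisation (due to Afrati, and recalled in the paper) is that $V\detset q$ iff there is a path from $\varepsilon$ to $q$ in the prefix graph $G_{q,V}$, and the substance of Theorem~\ref{th-1} is that the \emph{same} reachability condition also characterises $V\detbag q$. The paper's proof of the non-trivial direction represents each word $w$ on $D$ by the matrix product $M_w^D$ of incidence matrices, turns a path in $G_{q,V}$ into a $q$-walk, and then---because the matrices $M_v$ need not be invertible---passes to the associated binary relations on $\Q^n$ and uses a squeeze argument ($H_q\subseteq H_w\subseteq H_q$) to conclude that $M_q^D$ is a function of $\{M_v: v\in V\}$. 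Your substring/occurrence-matrix machinery counts homomorphisms of a \emph{closed} path into a database and therefore addresses a different problem; it does not engage with the matrix-level reconstruction that the binary case requires.
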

  
  Determinacy of path queries (under the set semantics) is one of the few decidable cases \cite{A11}, and, as Theorem \ref{th-1} implies,
  it remains decidable in the multiset semantics world. For the proof of Theorem \ref{th-1}  see \cref{sec:path-queries}. Notice also that the queries from Example \ref{ex-1} are not far from being path queries, but still,
  for some reason, the thesis of Theorem \ref{th-1} does not hold for them.
  
 But the main focus of this paper is on understanding query determinacy  in the case of boolean queries.
 We first present:
 
 \begin{theorem}\label{th-2} The problem whether, for a set 
 $V$ of boolean UCQs and for another boolean UCQ $\query$, it is true that  $V\detbag \query$, is undecidable. 
  \end{theorem}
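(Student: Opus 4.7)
The plan is to reduce a suitable undecidable polynomial/Diophantine problem, in the spirit of the Ioannidis--Ramakrishnan reduction that underlies the undecidability of bag-semantics UCQ containment \cite{IR95}, to bag-semantics determinacy of boolean UCQs. The starting observation is that under bag semantics a boolean CQ $\psi$ evaluates on $D$ to the number of homomorphisms from the canonical structure of $\psi$ into $D$, a boolean UCQ is the sum of such counts over its disjuncts, and a CQ whose canonical structure splits as a disjoint union factors its count as a product. Consequently, the family of functions $D \mapsto \psi(D)$ realized by boolean UCQs under bag semantics contains, as a natural subclass, the polynomials with non-negative integer coefficients in primitive (connected) homomorphism counts.

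I would then parametrize a family of databases $D_{\vec n} = \biguplus_{i=1}^m n_i D_i$ obtained as disjoint unions of a fixed set of ``atomic'' databases $D_1, \dots, D_m$, so that for any boolean UCQ $\psi$ the value $\psi(D_{\vec n})$ is a polynomial $P_\psi(\vec n)$ whose shape is controllable by the choice of the disjuncts. Given an instance of the source undecidable problem, I would synthesize boolean UCQs $V = \{v_1, \dots, v_k\}$ and $q$ so that the induced polynomials $P_{v_i}, P_q$ encode the instance, and
\begin{equation*}
V \not\detbag q \;\Longleftrightarrow\; \exists\, \vec n, \vec n' \in \N^m \bigl[\, P_{v_i}(\vec n)=P_{v_i}(\vec n') \text{ for all } i,\; P_q(\vec n)\neq P_q(\vec n') \,\bigr].
\end{equation*}
By a Matiyasevich-style coding, deciding whether such a colliding pair exists is at least as hard as solvability of an arbitrary polynomial Diophantine equation, and is therefore undecidable.

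The main obstacle is soundness of the reduction against witnesses \emph{outside} the parametrized family: if the encoded system has no solution, one must still exclude distinguishing pairs $D, D'$ of arbitrary shape. I plan to handle this by choosing the atomic databases $D_i$ to be sufficiently rigid---for instance, cores over pairwise-disjoint fragments of the schema, chosen so that the only structures admitting a homomorphism into $D_{\vec n}$ with positive count are (disjoint unions of) the primitives we designed for. This rigidity lemma should force, for every finite $D$ over the schema used in the reduction, the vector of bag-values $(v_1(D),\ldots,v_k(D),q(D))$ to coincide with $(v_1(D_{\vec n}),\ldots,v_k(D_{\vec n}),q(D_{\vec n}))$ for some $\vec n \in \N^m$, thereby reducing arbitrary witnesses to parametrized ones. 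This combinatorial step is the delicate heart of the proof; once it is in place, undecidability of Theorem~\ref{th-2} follows directly from the Diophantine reduction.
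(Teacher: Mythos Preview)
Your overall plan matches the paper's: reduce from Hilbert's Tenth Problem, use that boolean CQs under bag semantics compute monomials in primitive homomorphism counts and UCQs compute sums of such monomials, and make non-determinacy equivalent to Diophantine solvability. Your rigidity device --- one dedicated relation symbol per primitive, so that for arbitrary $D$ the values $v_i(D),q(D)$ depend only on the cardinalities of the relations in $D$ --- is sound and is exactly what the paper does with its unary predicates $X_i$. So the step you flag as ``the delicate heart'' is in fact routine once the schema is chosen this way.

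The real gap is the step you wave through: ``by a Matiyasevich-style coding, deciding whether such a colliding pair exists is undecidable''. Your collision condition compares the \emph{same} non-negative-coefficient polynomials at two points $\vec n,\vec n'$, whereas encoding $p(\vec a)=0$ (after splitting $p=p^+-p^-$) requires comparing two \emph{different} polynomials $p^+,p^-$ at the \emph{same} point. Bridging this needs a $\{0,1\}$-valued ``switch'' that flips between $D$ and $D'$, so that a single view of the shape $h\cdot p^+(\vec x)+c\cdot p^-(\vec x)$ evaluates to $p^+(\vec a)$ on one structure and to $p^-(\vec a)$ on the other; your parametrization $\vec n\in\N^m$ supplies no $\{0,1\}$-valued coordinate. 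This is exactly the missing ingredient the paper imports (the ``$p_1\vee p_2$'' / ``cold--hot'' trick of \cite{SV05,M20}): add two \emph{nullary} predicates $H,C$, so that $D_H,D_C\in\{0,1\}$ for every $D$ automatically; take $q=H$ and put $V_1=H\vee C$ and the views $\exists y\,X_i(y)$ in $V$. Any distinct $D,D'$ with $V(D)=V(D')$ must then have $D_{X_i}=D'_{X_i}$ and, up to symmetry, $D_H=D'_C=1$, $D_C=D'_H=0$, so $q(D)\neq q(D')$ for free; a final view $\Psi_P\vee\Psi_N$ (with $H$ conjoined into the positive-coefficient disjuncts and $C$ into the negative ones) then satisfies $\Psi_P(D)+\Psi_N(D)=\Psi_P(D')+\Psi_N(D')$ iff $\sum_m m_D=0$. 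Plug this switch into your framework and the argument is complete; as written, the reduction is not.
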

  
This is in stark contrast to the situation in the set semantics world where, 
as we already mentioned in Section \ref{intro-1}, 
determinacy is decidable even for unary UCQs, not just boolean. 
But the proof of Theorem 2 is not hard. In order to show it, it was enough to notice that the 
``$p_1 \vee p_2$ trick'' from \cite{SV05} (or the ``cold-hot'' trick from \cite{M20}) can be safely used in the multiset semantics world. And then  to reuse the Hilbert 10th problem encoding from \cite{IR95}.
  
  Finally, our main technical contribution is: 
  
  \begin{theorem}\label{th-3}\label{theorem-3} The problem whether, for a set 
 $V$ of boolean CQs and for another boolean CQ $\query$, it is true that $V\detbag \query$, is decidable. 
  \end{theorem}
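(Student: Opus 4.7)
The plan is to reduce bag-semantics determinacy to a decidable algebraic problem about homomorphism-counting functions. Under bag semantics, a boolean CQ $\query$ returns $\nhoms{\query,\struct}$, so $V \detbag \query$ asks exactly whether the tuple $(\nhoms{v_i,\struct})_i$ determines $\nhoms{\query,\struct}$ over all finite $\struct$.

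The first step is to assemble the algebraic toolbox underlying hom-counts. Crucially, for connected queries, disjoint unions of structures give additivity $\nhoms{q, \struct_1 \sqcup \struct_2} = \nhoms{q,\struct_1} + \nhoms{q,\struct_2}$, whereas disjoint unions of queries give multiplicativity $\nhoms{q_1 \sqcup q_2, \struct} = \nhoms{q_1,\struct}\cdot\nhoms{q_2,\struct}$. After decomposing each query into its connected components, I would reformulate determinacy as: is the product of hom-counts of the connected components of $\query$ determined, as a nonnegative integer, by the products of hom-counts of the connected components of the $v_i$'s?

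The second step is to derive a polynomial-relation characterization from these identities. For \emph{connected} $\query$ and \emph{connected} $v_i$'s, additivity forces the determining function to be $\Z$-linear, reducing $V \detbag \query$ to checking an identity $\nhoms{\query,\cdot} = \sum_i \alpha_i \nhoms{v_i,\cdot}$ with $\alpha_i \in \Q$; candidate coefficients are obtained by evaluating on finitely many small $\struct$, and the identity is verified via the classical Lov\'asz-type fact that hom-count functions of non-isomorphic cores are $\Q$-linearly independent. For general $\query$ and $v_i$'s one must instead work with polynomial identities of bounded degree between $\nhoms{\query,\struct}$ and the $\nhoms{v_i,\struct}$'s, with the degree bound computable from the sizes of $\query$ and the $v_i$'s.

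The main obstacle is that when $V$ contains non-connected views, determinacy can hold through \emph{implicit} algebraic relations rather than explicit functional expressions. For instance $v = \exists x,y\; R(x),R(y)$ determines $q = \exists x\; R(x)$ because $\nhoms{v,\struct} = \nhoms{q,\struct}^2$ always, and a nonnegative integer is determined by its square --- yet $\nhoms{q,\cdot}$ is not a polynomial in $\nhoms{v,\cdot}$. The technical core of the proof must therefore (i) identify the right polynomial-relation characterization of $V \detbag \query$, (ii) bound the degrees and the search space of the relations that need to be considered, and (iii) verify that a polynomial relation does imply determinacy by reasoning about nonnegative-integer solutions, combining core-theoretic hom-count arguments with elementary number-theoretic reasoning.
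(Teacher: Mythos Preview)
Your proposal correctly identifies the multiplicative structure (hom-counts of disjoint-union queries factor as products over connected components), but it misdiagnoses the difficulty and never arrives at a decidable criterion. Your ``main obstacle'' example, where $v(\struct)=q(\struct)^2$, is in fact \emph{not} an obstacle: in the connected-component representation one has $\vec v=[2]$ and $\vec q=[1]$, so $\vec q=\tfrac12\vec v$ lies in the $\Q$-linear span of $\{\vec v\}$, and the determining function is explicitly $q(\struct)=v(\struct)^{1/2}$. More generally, because $v(\struct)=\prod_i w_i(\struct)^{\vec v(i)}$, a relation $\vec q=\sum_j\alpha_j\vec v_j$ with $\alpha_j\in\Q$ yields $q(\struct)=\prod_j v_j(\struct)^{\alpha_j}$ whenever all $v_j(\struct)>0$. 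So the right criterion is \emph{linear} in the exponent vectors, not ``polynomial of bounded degree'' in the hom-counts; your proposed search for implicit polynomial relations heads in the wrong direction and, as stated, is not a proof but a plan with its central step left open.

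Two concrete ingredients are missing from your outline. First, the views must be split into $V=\{v\in V_0: q\subseteq_\sett v\}$ and $V_0\setminus V$; only $V$ enters the span condition, because views outside $V$ can be forced to return $0$ on both structures of any counterexample pair (and if some $v\in V$ returns $0$ then $q$ does too). Second, and this is the real work, one must prove the converse: if $\vec q\notin\linspan\{\vec v: v\in V\}$ then there exist $D,D'$ witnessing non-determinacy. The paper does this by constructing a \emph{good} set $S=\{s_1,\ldots,s_k\}$ of basis structures whose evaluation matrix $M_S$ (entries $w_i(s_j)$) is nonsingular---via a Lov\'asz-type separation lemma, then a Vandermonde trick, then a product with $q$ to kill $V_0\setminus V$---and then uses nonsingularity to find, inside the cone $M_S(\R_{\geq 0}^k)$, two rational points $\vec p,\vec p'$ related by a coordinatewise scaling $t^{\vec z}$ where $\vec z\in\Z^k$ is orthogonal to every $\vec v$ but not to $\vec q$. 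Your plan contains none of this construction, and without it there is no argument that the proposed (or any) criterion is complete.
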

  
 The proof of Theorem \ref{th-3} is
 presented in Sections \ref{sec-main-th-1}-\ref{usinggoodS-proof}. A corollary from the proof of Theorem \ref{th-3}  is that 
 for boolean conjunctive queries $\detbag$ is a strictly stronger property than  $\detset$.\vspace{-2mm}\\
 
 \noindent {\bf Future work.} The natural open question we leave is the decidability status of the 
 CQ determinacy  for the multiset semantics, that is of the problem whether, for a set 
 $V$ of  CQs (with free variables) and for another CQ $\query$, it is true that $V\detbag \query$. The
 encoding method from the proof of  Theorem \ref{th-2} is useless when disjunction is no longer available. 
 And also the techniques from the proof of Theorem \ref{theorem-3} do not seem to generalize to the 
 scenario with free variables.

 
 
 \subsection{The tools. And related works.}
 
Regarding {\bf \small the tools}  used in the proofs of Theorem \ref{th-1} and  Theorem \ref{th-3},
let us quote \cite{CV93} again: {\em techniques  from the set-theoretic setting}
 {\em do not carry over to the bag-theoretic setting}. 
 The green-red chase (mentioned in Section \ref{intro-1}), which is a fundamental tool to study 
 determinacy  in the set-semantics world,
 just vanishes in the multiset setting, together with the results that depend on it, 
 like undecidability  for the CQ case.
 And in general, the importance of concepts that stem from 
the first order logic diminishes in this new world. Instead, tools based on notions from linear algebra arise 
in a very natural way here. This is not at all a new observation: in order to read \cite{AK21} one also needs to dust off the 
linear algebra textbook.

While we are (as far as we know) the first to consider query determinacy under multiset semantics,
there exists a line of research in database theory which concentrates on the number of 
answers to a query (homomorphisms), including paper \cite{CM17} (again in a natural way using arguments from linear algebra).
And also, due to solely mathematical motivations, such  homomorphism counts (and some related numbers) were studied 
by researchers in combinatorics, with numerous  papers published, including \cite{lovasz} and \cite{ELS79}, and a book\footnote{We only had access to a free version of this book available on the web.}
\cite{L19}.
Some of the results regarding  homomorphism counts are useful for us (see Section \ref{good-s-proof} where we use the main result from 
\cite{lovasz}). Some, while not directly useful, are  related to our paper, for example there is a construction 
in \cite{CM17} resembling Step 1 (and partially also Step 2) from our construction in Section \ref{good-s-proof}.

The title of \cite{ELS79} may suggest that there is a connection to determinacy and (as we learned)
some less careful readers can have an impression that the main result from \cite{ELS79} 
is almost our Theorem \ref{theorem-3}. So
let us take some space here to explain why this is not the case\footnote{It may be a good idea to skip the rest of this Section now and come back here after you read 
Section \ref{sec-main-th-1} at the earliest.}.

 
A set of connected non-isomorphic graphs ${\mathcal H} =\{H_1,\ldots H_m \}$ is  consideerd in \cite{ELS79}. For
$H\in \mathcal H$ and for another graph $G$
the number $t(H,G)$ (,,homomorphism density'') is defined as the probability that a random mapping from 
the set of verticies of $H$ to the set of verticies of $G$ will be a homomorphism. 

Let now $S$ be the set  $\{\langle t(H_1,G), \ldots,t(H_m,G) \rangle : G \text{ is a graph} \}$,
which clearly is a subset  of $[0,1]^m$  or, to be more precise, of $({\mathbb Q}\cap [0,1])^m$.
The main theorem 
of \cite{ELS79} (Theorem 1 there)
says that:

\noindent
{\bf (*)} \hspace{8mm} $S$ contains a subset $B$  which is dense in some ball.

Then, it seems to be claimed\footnote{Remarks after Corollary 5.45 in  \cite{L19}; unfortunately 
the language is quite sloppy there, and it is not entitely clear for us how this part of text should correctly be interpreted.}
 in \cite{L19} that it follows from (*) that no functional dependence
between the numbers  $ t(H_1,G),\ldots,t(H_m,G) $ can exist, meaning that 
$t(H_m,G)$ cannot be a function of arguments $t(H_1,G),\ldots ,t(H_{m-1},G)$. In our language this would mean 
that:

\noindent
{\bf(**)} \hspace{4mm}  $t(H_1,G),\ldots ,t(H_{m-1},G)$ do not determine $t(H_m,G)$ .

If this was indeed true that (*) implied (**) then one could use 
 the graph blow-up technique from \cite{L19} (Theorem 5.32) to translate the language of ,,homomorphism densities''
 into the language of homomorphism counts and, as a result, prove our Corollary \ref{wnioseczek}, which is a very special case of our Theorem \ref{theorem-3}. 
 
 But (*)  {\bf does not} imply (**). Let us define $C$ as the projection of $S$ on the first m-1 coordinates,
that is $C= \{ \langle q_1,\ldots,q_{m-1} \rangle : \exists q_m  \; \langle q_1,...,q_{m-1}, q_m \rangle \in S\}$.

Then (**) means that $S$ cannot be the graph of some function $f: C \rightarrow \mathbb Q$.
But all (*) tells us about $S$
is that its topological closure contains a ball.
And it is  easy  to construct such a  function
 $f:([0,1]\cap {\mathbb Q})^{m-1} \rightarrow {\mathbb Q}\cap [0,1]$ 
that the topological closure of the graph of $f$ not only contains a ball
but is actually the entire cube $[0,1]^m$.

What does indeed follow from (*) is that no such {\bf \small continuous} function $f$ can exist,
so in particular $t(H_m,G)$ cannot be expressed from $t(H_1,G), \ldots ,t(H_{m-1},G)$
by operations which preserve
continuousness.
But then it is a completely different story, as continuousness
may make sense
when talking about homomorphisms density, but not in the context of
homomorphism count.



\section{Preliminaries}\label{sec:prelims}

\subsection{ Database Theory Notions}

\subsubsection*{\bf Multisets} A multiset $X$ is a mapping $Y \to \mathbb{N}$ where $Y$ is some specified set\footnote{We (of course) think that $0\in \mathbb N $.}. With $X[a]$ we will denote the number of occurrences of $a$ in $X$. We write that $X[a] = 0$ if $a \not\in Y$. A {\em union} $X \cup X'$ of two multisets $X$ and $X'$ is a multiset such that $(X \cup X')[a] = X[a] + X'[a]$. We define other multiset operators analogously.

\subsubsection*{\bf Structures}
A {\em schema} $\schema$ is a finite set of relational symbols. A schema $\schema$ is {\em $n$-ary} if an arity of its relations is at most $n$.
A {\em  structure (or database)} $\db$ over schema $\schema$ is a finite set\footnote{Which means that we assume that answers to the queries are multisets, but the structures  are sets. However, all our results and techniques would survive if we defined structures which are multisets of facts.} consisting of facts. A {\em fact} is simply an atom $A(\Vt)$ where $\Vt$ is a tuple of {\em terms} from some fixed  infinite set of constants. 
The {\em active domain} of $\db$ (denoted with $\domain{\db}$) is the set of constants that appear in facts of $\db$.

\subsubsection*{\bf Homomorphisms}
For two structures $\structure$ and $\structure'$ over schema $\schema$, a homomorphism from $\structure$ to $\structure'$ is a function $h: dom(\structure) \to dom(\structure')$ such that for each atom $A(\Vt) \in \structure$ it holds that $A(h(\Vt)) \in \structure'$.
A set of homomorphisms from $\db$ to $\db'$ is denoted with $\homs{\db, \db'}$.
Note, that $\nhoms{\emptyset, \db} = 1$ for the empty structure $\emptyset$.

\subsubsection*{\bf Conjunctive Queries (CQs).}
A {\em conjunctive query} $\Phi = \exists{\Vy}\;\phi(\Vx, \Vy)$ is a first order formula such that $\phi(\Vx, \Vy)$ is a conjunction of atoms over variables from $\Vx$ and $\Vy$. With $vars(\Phi)$ we will denote the set of variables of $\phi(\Vx, \Vy)$. The {\em arity} of CQ $\Phi$ is simply $|\Vx|$.

The {\em frozen body} of a CQ $\Phi = \exists{\Vy} \; \phi(\Vx, \Vy)$ is a structure obtained from $\phi(\Vx, \Vy)$ by bijective replacement of variables with fresh constants.
For a CQ $\Phi = \exists{\Vy}\;\phi(\Vx, \Vy)$ and a structure $\db$, with $\homs{\Phi, \db}$ we denote the set of all homomorphisms from the frozen body of $\Phi$ to $\db$. A {\em result} $\Phi(\db)$ of a CQ $\Phi$ over a structure $\db$ is a multiset such that $\Phi(\db)[\Va] = |\set{h \in \homs{\Phi, \db} \;:\; \Va = h(\Vx)}|$.

\subsubsection*{\bf Path Queries}
For a binary schema $\schema$ a {\em path query} $\Lambda$ is a CQ of the  form
$\exists{x_1,\ldots, x_{n-1}} \;\; R_1(x,x_1), R_2(x_1, x_2), \ldots, R_n(x_{n-1}, y).$

Let $\schema^*$ denote the set of all words over relational symbols from $\schema$. Given the nature of path queries we will identify them with words from $\schema^*$, so instead of writing $\Lambda = \exists{x_1, x_2} A(x, x_1), B(x_1, x_2), C(x_2, y)$ we may conveniently write\footnote{Note however, that an empty word $\varepsilon$ is identified with the query $\Lambda(x, y) = "x = y"$, although it is not a valid path query.} $\Lambda = ABC$.

\subsubsection*{\bf Boolean Queries}
A CQ  $\query$ with no free variables is called {\em boolean}.
 Boolean CQs will be always identified with their frozen bodies.

Accordingly to previous definitions a result $\query(\db)$ of a boolean CQ  $\query$ over some structure $\db$ is a multiset containing $\nhoms{\query, \db}$ copies of the empty tuple. For brevity we write $q(D)$ instead of $q(D)[\tuple{}]$, so $q(D) = |hom(q, D)|$.

A {\em union of boolean conjunctive queries (boolean UCQ)} $\Psi$ is a disjunction of a finite number of boolean CQs. 
A {\em result} $\Psi(\db)$ of a boolean UCQ $\Psi$ over a $\db$ is the natural number $ \sum_{\Phi \in \Psi} \Phi(\db)$.

A boolean CQ
$\query$ is {\bf  contained under set semantics} in a boolean CQ $\query'$ 
(denoted as  $\query \subseteq_\sett \query'$)
if for every structure $\db$ it holds that 
$\query(\db) > 0 \Rightarrow \query'(\db) > 0$. 
It is well-known that $\query \subseteq_\sett \query'$ if and only if $\homs{\query', \query}$ is non-empty.

\subsection{\bf Graph Theoretic Tools}\label{gtt}


\subsubsection*{\bf Operations on Structures}
Following \cite{lovasz} we will use some operations on structures. For structures $A$ and $B$  over schema $\schema$:

\noindent
\textbullet~  $A + B$ is a disjoint union\footnote{That is if $dom(A) \cap dom(B) \neq \emptyset$ we 
bijectively rename variables of $B$ with fresh ones and then make $A + B = A \cup B$} 
of $A$ and $B$;
    
    \noindent
\textbullet~     $A \times B$ is a structure such that $dom(A \times B) = dom(A) \times dom(B)$ and for any $R \in \schema$ the following holds: $R(\pair{a_1, b_1}, \ldots, \pair{a_k, b_k})$ is an atom of $A \times B$ if and only if $R(a_1, \ldots, a_k) \in A$ and $R(b_1, \ldots, b_k) \in B$;
    
    \noindent
 \textbullet~we use  symbols $\sum$ and $\prod$  as generalized $+$ and $\times$ in the usual way;
  
  \noindent  
\textbullet~     for $t \in \N_+,\; tA = \sum_{i=1}^t A$ and $A^t = \prod_{i=1}^t A$. Furthermore, $0A$ is an empty structure and $A^0$ is a singleton $\{ \alpha \}$ such that for any $R \in \schema$ $R(\alpha, \alpha, \ldots, \alpha) \in A^0$ ($\alpha$ has loops of all types).

\subsubsection*{\bf Graph Theoretic Lemma}
From~\cite{lovasz} we recall:
\begin{lemma}\label{lem:hom-properties}
Let $A, B, C$ be structures and $t\in \N$, then:
\begin{enumerate}
    \item If $A$ is connected, then $\nhoms{A, B+C} = \nhoms{A, B} + \nhoms{A, C}$
    \item If $A$ is connected, then $\nhoms{A, tB} = t \cdot \nhoms{A, B}$
    \item $\nhoms{A, B \times C} = \nhoms{A, B} \cdot \nhoms{A, C}$
    \item $\nhoms{A, B^t} = \nhoms{A, B}^t$ 
    \item \label{eq:hom-properties-5} $\nhoms{A+B, C} = \nhoms{A, C} \cdot \nhoms{B, C}$
\end{enumerate}
\end{lemma}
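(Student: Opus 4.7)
The plan is to treat the five identities in an order that lets each later one be derived from an earlier one, so that we only need two genuinely new arguments: a connectedness argument (for item 1) and a universal-property-of-product argument (for item 3). Items 2 and 4 then follow by induction on $t$, and item 5 is a direct structural observation about homomorphisms out of a disjoint union.

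For item 1, I would argue that if $h\in\homs{A,B+C}$ and $A$ is connected, then the image $h(dom(A))$ is contained in a single connected component of $B+C$. Since $B+C$ is constructed by first disjointly renaming so that $dom(B)\cap dom(C)=\emptyset$ and then taking union of facts, no fact of $B+C$ can mention constants from both sides, and therefore the connected components of $B+C$ are all contained entirely in $dom(B)$ or entirely in $dom(C)$. Hence every homomorphism from a connected $A$ factors uniquely through either $B$ or $C$, giving a disjoint-union bijection $\homs{A,B+C}\cong \homs{A,B}\sqcup \homs{A,C}$. Item 2 follows immediately by induction on $t$, writing $tB = B+((t-1)B)$ and using item 1 together with the fact that $\nhoms{A,0B}=\nhoms{A,\emptyset}=0$ when $A$ is nonempty (and the degenerate case of empty $A$, where both sides are $1$, should be handled separately, but the lemma is only interesting for nonempty $A$).

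For item 3, the key observation is that $B\times C$ has a universal property: the two projections $\pi_B\colon B\times C\to B$ and $\pi_C\colon B\times C\to C$ (defined on underlying domain pairs) are homomorphisms by construction of the facts of the product, and conversely, any pair $(h_B,h_C)$ with $h_B\in\homs{A,B}$ and $h_C\in\homs{A,C}$ defines a map $a\mapsto \langle h_B(a),h_C(a)\rangle$ into $dom(B\times C)$; checking that this is a homomorphism amounts exactly to the definition of the facts $R(\langle a_1,b_1\rangle,\dots,\langle a_k,b_k\rangle)$ in $B\times C$. This gives a bijection $\homs{A,B\times C}\cong\homs{A,B}\times\homs{A,C}$, whence the cardinalities multiply. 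Item 4 is then an easy induction on $t$, with the base case $t=0$ handled by the convention that $B^0$ is a single element with loops of every type, so there is exactly one homomorphism from any $A$ into it, matching $\nhoms{A,B}^0=1$.

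Finally, for item 5, I would observe that $A+B$ is the disjoint union of $A$ and $B$ on disjoint domains, so specifying a homomorphism $h\colon A+B\to C$ is the same as independently specifying its restrictions $h|_A\in\homs{A,C}$ and $h|_B\in\homs{B,C}$; no fact of $A+B$ couples the two parts. This yields a bijection $\homs{A+B,C}\cong \homs{A,C}\times \homs{B,C}$. The only subtlety I expect is the bookkeeping around the renaming implicit in $A+B$ and around degenerate cases ($t=0$ in items 2 and 4, the empty structure in item 1), but none of these require new ideas beyond the conventions fixed in the preliminaries.
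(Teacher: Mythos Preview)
Your argument is correct and is the standard way these identities are established. Note, however, that the paper does not actually supply its own proof of this lemma: it is simply recalled from \cite{lovasz} (``From~\cite{lovasz} we recall:\ldots''). So there is no paper-proof to compare against; what you wrote is essentially the proof one finds in that reference, with the product identity (item 3) coming from the categorical universal property and the disjoint-union identities (items 1 and 5) from the obvious splitting of homomorphisms, and items 2 and 4 following by induction.
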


%
%

\subsection{Basic Mathematical Tools and Notations}

We are going use standard notation from linear algebra, which should be clear in most cases. Below we describe all the conventions that might be non-obvious:

\noindent
   {\large\textbullet~} For a set $A \subseteq \R^k$, $\linspan(A)$ means the linear span of $A$ (i.e. the smallest linear space containing $A$). For a set $B \subseteq \R$, we define $\linspan^B(A)= \{ b_1 \vec{a_1} + ... + b_n \vec{a_n} \mid n \in \N; \vec{a_1}, ..., \vec{a_n} \in A; b_1, ..., b_n \in B \}.$
    
    \noindent
   {\large\textbullet~} For two vectors $\vec u, \vec{u'} \in \R^k$, $\dotprod{\vec u}{\vec {u'}}$ denotes the dot product of $\vec u, \vec {u'}$. Vector $\vec u$ is \emph{orthogonal} to $\vec {u'}$ if and only if $\dotprod{\vec u}{\vec {u'}} = 0$.
   
 \noindent
   {\large\textbullet~} For a vector $\vec u \in \R^k, i \in \range{k}$, $u(i)$ denotes the value of the $i$-th coordinate of $\vec u$.
    
    \noindent
   {\large\textbullet~} For a matrix $M \in \R^{k \times k}, i, j \in \range{k}$, $M(i, j)$ denotes the value of the element in the $i$-th row and $j$-th column of $M$.
   
    \noindent
   {\large\textbullet~} For a matrix $M \in \R^{k \times k}$ and a set $A \subseteq \R^k$, $M(A) = \{ M \vec x \mid \vec x \in A \}$.\\

\vspace{-2.5mm}
We will use the following well-known mathematical facts:
\begin{fact}\label{fact:orthogonal-vector}
Let $\vec{u}_1, ..., \vec{u}_n, \vec{u} \in \Q^k$ such that $\vec{u} \notin \linspan\{ \vec{u}_1, ..., \vec{u}_n \}$. Then there is a vector $\vec{z} \in \Q^k$ such that $\vec{z}$ is orthogonal to $\vec{u}_1, ..., \vec{u}_n$ but is not orthogonal to $\vec{u}$.
\end{fact}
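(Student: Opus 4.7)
} The plan is to construct the required $\vec{z}$ explicitly by carrying out Gram--Schmidt orthogonalization over $\Q$, rather than invoking any abstract machinery about orthogonal complements.

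First I would reduce to the case that $\vec{u}_1,\ldots,\vec{u}_n$ are linearly independent: discarding any $\vec{u}_i$ that is a $\Q$-linear combination of the preceding ones changes neither $\linspan\{\vec{u}_1,\ldots,\vec{u}_n\}$ nor the set of vectors orthogonal to all of them, so both the hypothesis $\vec{u} \notin \linspan\{\vec{u}_1,\ldots,\vec{u}_n\}$ and the sought conclusion are unaffected.

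Next I would apply Gram--Schmidt to the (now independent) list $\vec{u}_1,\ldots,\vec{u}_n$, producing pairwise orthogonal nonzero vectors $\vec{w}_1,\ldots,\vec{w}_n$ with $\linspan\{\vec{w}_1,\ldots,\vec{w}_j\} = \linspan\{\vec{u}_1,\ldots,\vec{u}_j\}$ for each $j$. The key point is that each step uses only the field operations $+,-,\cdot,/$ (I would skip normalization, so no square roots are needed), so every $\vec{w}_j$ stays in $\Q^k$. I would then set
\[
\vec{z} \;=\; \vec{u} \;-\; \sum_{j=1}^{n}\frac{\dotprod{\vec{u}}{\vec{w}_j}}{\dotprod{\vec{w}_j}{\vec{w}_j}}\,\vec{w}_j \;\in\; \Q^k.
\]
A direct computation gives $\dotprod{\vec{z}}{\vec{w}_j}=0$ for every $j$, so $\vec{z}$ is orthogonal to each $\vec{w}_j$ and hence to each $\vec{u}_i$. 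Moreover $\vec{z}\neq \vec{0}$, since otherwise $\vec{u}$ would lie in $\linspan\{\vec{w}_1,\ldots,\vec{w}_n\} = \linspan\{\vec{u}_1,\ldots,\vec{u}_n\}$, contradicting the hypothesis. Finally, using orthogonality of $\vec{z}$ to all $\vec{w}_j$, one gets $\dotprod{\vec{z}}{\vec{u}} = \dotprod{\vec{z}}{\vec{z}} > 0$, so $\vec{z}$ is not orthogonal to $\vec{u}$, as required.

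I do not anticipate a real obstacle here: this is a textbook linear-algebra fact. The only subtlety worth flagging is that the scalars are rational rather than real, and the Gram--Schmidt construction is precisely what sidesteps this concern, as it uses only the four rational field operations and therefore keeps the entire computation inside $\Q^k$.
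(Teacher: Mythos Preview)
Your argument is correct: the Gram--Schmidt construction without normalization stays inside $\Q^k$, the resulting $\vec{z}$ is orthogonal to each $\vec{w}_j$ (hence to each $\vec{u}_i$), and the computation $\dotprod{\vec{z}}{\vec{u}}=\dotprod{\vec{z}}{\vec{z}}>0$ finishes it. The paper itself does not prove this statement at all---it is listed among ``well-known mathematical facts'' and simply invoked---so there is nothing to compare against; your write-up supplies a clean self-contained justification where the paper chose to omit one.
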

\begin{fact}\label{fact:homeomorphism}
If matrix $M \in \R^{k \times k}$ is nonsingular, then the mapping $\vec x \mapsto M \vec x$ is a homeomorphism (a continuous bijection whose inverse function is continuous too).
\end{fact}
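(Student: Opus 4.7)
The plan is to establish separately the three defining properties of a homeomorphism: that the map $f: \vec x \mapsto M\vec x$ is a bijection on $\R^k$, that $f$ is continuous, and that $f^{-1}$ is continuous.

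First I would observe that since $M$ is nonsingular, it has a two-sided inverse $M^{-1} \in \R^{k\times k}$. The map $g: \vec x \mapsto M^{-1}\vec x$ then satisfies $g(f(\vec x)) = M^{-1}(M\vec x) = \vec x$ and $f(g(\vec x)) = M(M^{-1}\vec x) = \vec x$ for every $\vec x \in \R^k$. Hence $f$ is a bijection with $f^{-1} = g$.

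Next, continuity of $f$ follows by looking coordinate-wise: the $i$-th coordinate of $M\vec x$ is $\sum_{j=1}^k M(i,j)\cdot x(j)$, which is a polynomial (in fact linear) function of the coordinates of $\vec x$, hence continuous. A map into $\R^k$ is continuous iff each coordinate is, so $f$ is continuous. The very same argument applied to $M^{-1}$ shows that $g = f^{-1}$ is continuous, completing the proof.

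There is no real obstacle here; the statement is invoked later merely to transport topological properties (denseness, containing a ball, etc.) through invertible linear changes of coordinates, and the above routine verification is all that is required.
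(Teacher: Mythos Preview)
Your proof is correct. The paper does not actually prove this fact at all; it is simply listed among ``well-known mathematical facts'' and used as a black box, so your routine verification already goes beyond what the paper provides.
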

\begin{fact}\label{fact:q-dense}
$\Q^k$ is a dense subset of $\R^k$, i. e., for any $\vec x \in \R^k$ and $r > 0$ there is $\vec y \in \Q^k$ such that $\|\vec x - \vec y\| < r$.
\end{fact}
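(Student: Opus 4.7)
The plan is to reduce the statement to the well-known one-dimensional fact that $\Q$ is dense in $\R$, i.e., for every $x \in \R$ and every $\varepsilon > 0$ there exists $q \in \Q$ with $|x - q| < \varepsilon$. This 1-D density is a standard consequence of the Archimedean property of $\R$ (between any two distinct reals one can squeeze a rational of the form $m/n$), so I would cite it rather than reprove it.

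Given $\vec x = (x_1, \ldots, x_k) \in \R^k$ and $r > 0$, I would set $\varepsilon = r / \sqrt{k}$ and, for each coordinate $i \in \range{k}$, invoke the 1-D density to pick $y_i \in \Q$ with $|x_i - y_i| < \varepsilon$. Assembling $\vec y = (y_1, \ldots, y_k) \in \Q^k$, one computes
\[
\|\vec x - \vec y\|^2 \;=\; \sum_{i=1}^{k} (x_i - y_i)^2 \;<\; k \cdot \varepsilon^2 \;=\; r^2,
\]
and taking square roots gives $\|\vec x - \vec y\| < r$, which is exactly what the fact claims.

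There is no genuine obstacle here: this is textbook material, and the only mild care needed is the choice $\varepsilon = r/\sqrt{k}$ so that the coordinatewise approximations combine (via the Euclidean norm) to a sub-$r$ approximation. If one preferred the sup norm instead of the Euclidean norm, the same proof works with $\varepsilon = r$, and the equivalence of norms on $\R^k$ means the statement is insensitive to this choice.
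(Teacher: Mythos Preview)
Your argument is correct and entirely standard. The paper does not actually prove this fact; it simply lists it among ``well-known mathematical facts'' and uses it without proof, so your coordinatewise reduction to the one-dimensional density of $\Q$ in $\R$ is perfectly adequate (and more than the paper itself provides).
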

\begin{corollary}\label{cor:interior}
Suppose $M \in \R^{k \times k}$ is nonsingular. Then there is $\vec p \in M(\R_{\geq 0}^k) \cap \Q^k$ such that
\begin{equation*}\label{eq:interior}\tag{$\star$}
\exists r > 0 ~ \forall \vec x \in \R^k \;\; \| \vec x - \vec p \| < r \Rightarrow \vec x \in M(\R_{\geq 0}^k)
\end{equation*}
\end{corollary}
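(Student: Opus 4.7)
The plan is to exhibit an explicit witness $\vec p$ by passing to the interior. The key observation is that the open positive orthant $\R_{>0}^k = \{\vec x \in \R^k : x(i) > 0 \text{ for all } i\}$ is an open, non-empty subset of $\R^k$, and it is contained in $\R_{\geq 0}^k$.

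First I would apply Fact \ref{fact:homeomorphism}: since $M$ is nonsingular, the map $\vec x \mapsto M\vec x$ is a homeomorphism of $\R^k$, and in particular it carries open sets to open sets. Hence $U := M(\R_{>0}^k)$ is an open subset of $\R^k$, and it is non-empty (it contains, e.g., $M\vec{1}$ where $\vec{1} = (1, \ldots, 1)$). Because $U$ is a non-empty open set, Fact \ref{fact:q-dense} (density of $\Q^k$ in $\R^k$) supplies a point $\vec p \in U \cap \Q^k$: concretely, choose any $\vec q \in U$ and a radius $\rho > 0$ with the Euclidean ball $B(\vec q, \rho) \subseteq U$, then take any rational $\vec p$ within distance $\rho$ of $\vec q$.

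Finally, since $U$ is open and $\vec p \in U$, there exists $r > 0$ with $B(\vec p, r) \subseteq U$. Because $\R_{>0}^k \subseteq \R_{\geq 0}^k$ we have $U \subseteq M(\R_{\geq 0}^k)$, and so every $\vec x$ with $\|\vec x - \vec p\| < r$ lies in $M(\R_{\geq 0}^k)$; in particular $\vec p$ itself does, giving $\vec p \in M(\R_{\geq 0}^k) \cap \Q^k$ together with condition ($\star$).

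There is no real obstacle here: the statement is a straightforward combination of the three facts. The only mild pitfall is to remember to work with the strict orthant $\R_{>0}^k$ rather than $\R_{\geq 0}^k$, since the latter is not open and density alone would not yield a ball of positive radius around the chosen rational point.
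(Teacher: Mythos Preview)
Your proof is correct and follows essentially the same approach as the paper's: both use \cref{fact:homeomorphism} to argue that the image of the (interior of the) nonnegative orthant is open, and then \cref{fact:q-dense} to locate a rational point there. Your version is slightly more explicit in working directly with the open orthant $\R_{>0}^k$ rather than invoking ``homeomorphic images preserve non-empty interior,'' but the substance is identical.
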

\noindent
{\sc Proof.}
From  \cref{fact:homeomorphism} we know that the set $M(\R_{\geq 0}^k)$ has non-empty interior (i. e. the set of points satisfying \ref{eq:interior}), since it is a homeomorphic image of a set with non-empty interior. By \cref{fact:q-dense} we get that this interior must contain a point with rational coordinates.\hfill $\square$

\subsubsection*{\bf Important Notational Convention.} $0^0$ equals $1$ in this paper.

\section{The Path Queries Case}\label{sec:path-queries}
\setcounter{theorem}{0}

In this section we prove:

\setcounter{theorem}{0}
\begin{theorem} If $V$ is a set of path queries, and $\query$ is a path query, then 
   $V\detset \query$ if and only if $V \detbag \query$.
  \end{theorem}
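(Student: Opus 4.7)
My plan is to prove both implications separately, translating counterexamples between set and bag semantics by exploiting the matrix-product interpretation of path query evaluation.

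\textbf{Setup.} Identifying a path query with its word in $\Sigma^{*}$ and a database with a labeled directed multigraph, the bag answer $w(D)[(a, b)]$ counts the $w$-labeled paths from $a$ to $b$ in $D$, which coincides with the $(a, b)$-entry of the matrix product $M^{D}_{R_{1}} \cdots M^{D}_{R_{n}}$ of adjacency matrices; the set answer is just the Boolean support of the same matrix. In particular, on a simple path database the bag answer of every path query is 0 or 1, so bag- and set-inequalities coincide there.

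\textbf{Direction $V \detbag q \Rightarrow V \detset q$.} Suppose $D_{1}, D_{2}$ witness $V \notdetset q$: they have set-equal $V$-answers and set-different $q$-answers, with some pair $(a, b) \in q(D_{1}) \setminus q(D_{2})$. I plan to construct $\hat{D}_{1}, \hat{D}_{2}$ with bag-equal $V$-answers and a preserved bag-gap on $(a, b)$ for $q$. The strategy is a vertex-weighted blow-up combined with auxiliary balancing: replace each vertex $v$ in $D_{i}$ by $k_{v}^{(i)}$ fresh copies and each edge $R(u, v)$ by the full biclique between the blocks of $u$ and $v$. Under this operation, the bag count of any path query along a fixed sequence of intermediate vertices scales as the product of the $k_{v}^{(i)}$'s along the path. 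By choosing the $k_{v}^{(i)}$ to satisfy a linear system whose feasibility is guaranteed by the set-equality on $V$ (the supports agree, so only the multiplicities must be matched), the resulting blow-ups exhibit equal bag counts on every $v \in V$; the set-gap on $q$ becomes a bag-gap.

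\textbf{Direction $V \detset q \Rightarrow V \detbag q$.} Given witnesses $D_{1}, D_{2}$ to $V \notdetbag q$ (bag-equal on $V$, bag-different on $q$), I plan to produce a set-counterexample. The tool is the tensor-power / product construction from Lemma~\ref{lem:hom-properties}: form $D_{i} \times T$ for a suitable auxiliary structure $T$. By the product law $w(D_{i} \times T)[\,\cdot\,] = w(D_{i})[\,\cdot\,] \cdot w(T)[\,\cdot\,]$, bag-equality on $V$ is preserved (coordinate-wise factoring), while the bag-gap on $q$ at some pair $(a, b)$ induces a value gap on the corresponding pair $(\langle a, \ast\rangle, \langle b, \ast\rangle)$ in the product. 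A careful choice of $T$ (together with a threshold/projection argument) then converts this numerical gap into a support gap, giving two databases whose set-$V$-answers coincide yet whose set-$q$-answers differ.

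\textbf{Main obstacle.} The most delicate step is the first direction, where the blow-up multiplicities $k_{v}^{(i)}$ must simultaneously align bag-counts on every view in $V$ without collapsing the distinguishing pair for $q$. This requires solving a system whose coefficients encode the matrix-product structure of path query evaluation; the argument hence exploits the ``linearity'' of paths (queries compose by concatenation, answers by matrix multiplication) -- exactly the property that fails for the more general conjunctive queries of Example~\ref{ex-1}, explaining why the theorem does not extend beyond path queries.
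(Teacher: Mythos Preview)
Your proposal has a genuine gap in the direction $V \detset q \Rightarrow V \detbag q$. You argue by contrapositive: given a bag-counterexample $D_1,D_2$ (bag-equal on $V$, bag-different on $q$), you want to produce a set-counterexample via $D_i \times T$. But observe that bag-equality on $V$ already implies set-equality on $V$; hence, under the standing hypothesis $V \detset q$, the supports of $q(D_1)$ and $q(D_2)$ are \emph{forced to be equal}. Now the entrywise product law you invoke gives
\[
q(D_i\times T)\big[\,(\langle a,t\rangle,\langle b,t'\rangle)\,\big]\;=\;q(D_i)[(a,b)]\cdot q(T)[(t,t')],
\]
so the support of $q(D_i\times T)$ is the Cartesian product of the support of $q(D_i)$ with that of $q(T)$. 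Equal supports therefore stay equal for \emph{every} choice of $T$, and no ``threshold/projection'' can manufacture a support gap out of a pure multiplicity gap. The tensor-product route is a dead end here; this is precisely the hard direction, and the paper handles it by an entirely different argument: it shows that whenever the prefix graph $G_{q,V}$ has an $\varepsilon$--$q$ path, the relation $H_q$ (the graph of the linear map $x\mapsto M_q^D x$) equals a composition of the relations $H_v^{\pm 1}$ for $v\in V$, via a cancellation argument on ``$q$-walks'' (Lemmas~\ref{niby-thue} and~\ref{cor:reductions}). This expresses $M_q^D$ as a function of $\{M_v\}_{v\in V}$ without ever inverting a possibly singular $M_v$.

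Your first direction is also not established. The ``linear system'' governing the blow-up factors $k_v^{(i)}$ is in fact a system of multilinear equations (one per view, per endpoint pair, summing products $\prod k_{u_j}$ over all intermediate-vertex sequences), and you give no argument for its solvability while keeping the $q$-gap. The paper bypasses this entirely: the standard set-semantics counterexample for path queries (two disjoint copies of the frozen body of $q$ versus a twisted version, as in Appendix~B) consists of structures on which every path query has multiplicity $0$ or $1$, so it is \emph{already} a bag-counterexample. No blow-up is needed. In short, the paper does not translate counterexamples between the two semantics at all; it proves that both $\detset$ and $\detbag$ coincide with the same combinatorial condition on $G_{q,V}$.
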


One can find this theorem a bit surprising. Path queries are a reasonably wide class of queries. And we have already learned
that one should not expect a set-semantics based notion to agree with its multi-set based counterpart on 
a wide class of objects\footnote{On the other hand, for path queries, query containment under set semantics also (trivially)
coincides with query containment under bag semantics. We have no idea whether there is any relation between this observation and Theorem \ref{th-1}.}. But, as it turns out, both versions of determinacy for path queries enjoy the same elegant combinatorial characterisation:

\begin{definition}
For a set $V$ of path queries and for another path query $q$ we define an undirected graph $G_{q,V}$ as follows:
\begin{itemize}
    \item $dom(G_{q,V}) = \{ w \in \schema^* \mid w \text{ is a prefix of } q \}$.
 In particular, the empty word $\varepsilon$ and $q$ itself\footnote{Recall that we identify path queries with words over alphabet $\schema$.} are elements of $dom(G_{q, V})$.
    \item There is an edge between $w$ and $w'$ if and only if $w'=wv$ for some $v\in V$.
\end{itemize}
\end{definition}

The following fact is well known \cite{A11,M20}:

\begin{fact}
$V \detset q$ iff there is a path in $G_{q, V}$ from $\varepsilon$ to $q$.
\end{fact}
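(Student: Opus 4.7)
The plan is to prove the two directions separately, following the rewritability-based analysis of path-query determinacy from \cite{A11}.

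For $(\Leftarrow)$: assume a path $\pi = (w_0 = \varepsilon, w_1, \ldots, w_m = q)$ in $G_{q,V}$. To show $V \detset q$, fix databases $D, D'$ with $v(D) = v(D')$ for every $v \in V$, pick any $(a,b) \in q(D)$, and witness it by a $q$-labeled path $c_0 = a, c_1, \ldots, c_n = b$ in $D$. Since each $w_i$ is a prefix of $q$ the map $\mu(w_i) = c_{|w_i|}$ is well-defined. Each edge $(w_i, w_{i+1})$ of $\pi$ is labeled by some $v \in V$ with either $(\mu(w_i), \mu(w_{i+1})) \in v(D) = v(D')$ (forward orientation, $w_{i+1} = w_i v$) or $(\mu(w_{i+1}), \mu(w_i)) \in v(D) = v(D')$ (backward, $w_i = w_{i+1} v$); in either case the tuple is realised by some $v$-labeled path in $D'$. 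I would argue by induction on $m$ that these $v$-paths can be spliced along $\pi$ into a $q$-labeled path in $D'$ from $a$ to $b$, concluding $(a,b) \in q(D')$.

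For $(\Rightarrow)$: assume no such path and let $S$ be the connected component of $\varepsilon$ in $G_{q,V}$, so $q \notin S$. I would construct a counterexample by perturbing the canonical path database $D_q$ with vertices $\{0, 1, \ldots, |q|\}$ and edges labeled by the letters of $q$. The guiding principle is the partition of positions induced by $G_{q,V}$-components: one can ``split'' vertices corresponding to prefixes that lie outside $S$ while preserving every view tuple, because each edge of $G_{q,V}$ stays inside a single component and so every tuple in $v(D_q)$ already lives entirely on one side of the $S$-boundary. The pair $(0, |q|)$ witnessing $q(D_q)$, however, straddles the boundary, so the split breaks exactly the $q$-path while leaving all views intact, producing the desired $D, D'$.

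The main technical obstacle is the splicing step in $(\Leftarrow)$. A backward edge of $\pi$ corresponds at the relational level to the non-invertible composition $w_i(D) = w_{i+1}(D) \circ v(D)$, so one cannot directly recover $w_{i+1}(D')$ from $w_i(D')$ and $v(D')$. Handling this requires tracking how $\pi$ zig-zags over the positions $\{0, 1, \ldots, |q|\}$ and checking that, once the backward detours cancel, the net concatenation of $v$-paths in $D'$ is in fact a genuine $q$-labeled path; I expect the resulting bookkeeping to be the most delicate part and to follow Afrati's original argument closely.
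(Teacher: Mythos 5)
Your $(\Rightarrow)$ direction (no path implies no determinacy) is essentially the construction the paper gives in Appendix~B: two disjoint copies of the frozen body of $q$, with each letter-edge re-wired to cross between the copies exactly when its two endpoint prefixes lie in different connected components of $G_{q,V}$. Your ``splitting'' sketch is looser about what happens to the intermediate vertices of a view-path that wanders back and forth across the $S$-boundary (the two-copy formulation handles this cleanly, since such a path always returns to the copy dictated by the component of its endpoints), but the idea is the same and can be completed.

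The $(\Leftarrow)$ direction, however, has a genuine gap, and it is exactly the one you flag: the splicing step fails, and no bookkeeping over the positions $\{0,\dots,|q|\}$ rescues it. The view-paths in $D'$ witnessing $(\mu(w_i),\mu(w_{i+1}))\in v(D')$ have arbitrary, pairwise unrelated intermediate vertices, so after ``cancelling a backward detour'' the remaining pieces simply do not meet. Concretely, for $q=ABCD$, $V=\{ABC,BC,BCD\}$ and the path $\varepsilon\to ABC\to A\to ABCD$, with witness $a \xrightarrow{A} c_1 \xrightarrow{B} c_2 \xrightarrow{C} c_3 \xrightarrow{D} b$ in $D$, you obtain in $D'$ an $ABC$-path from $a$ to $c_3$ and a $BCD$-path from $c_1$ to $b$, and there is no vertex at which these can be glued into an $ABCD$-path. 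Worse, any invariant of the form ``the relation $w_i(D)$ is determined for every vertex $w_i$ visited by the path'' is false: in this example $A(D)$ is not determined by the three views (take $D=\{A(1,2)\}$ and $D'=\emptyset$; all views evaluate to $\emptyset$ on both), so no prefix-by-prefix argument can work.

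The correct argument is global and works with the view relations and their \emph{inverses as relations}, which is precisely the machinery the paper develops for the bag case in Section~3. Lift each letter $R$ to the map $X\mapsto\{y\mid \exists x\in X,\ R(x,y)\}$ on subsets of the domain; then each such map is a genuine function, $f_v$ is recoverable from the set $v(D)$ (so $f_v^D=f_v^{D'}$ for $v\in V$), and the reduction lemmas \cref{zawieranie-id} and \cref{cor:reductions} apply verbatim to the relations $\rel{f_w}$ in place of the linear maps $h_{M^D_R}$. Combined with \cref{niby-thue}, the two inclusions $H_q\subseteq H_w\subseteq H_q$ for the $q$-walk $w$ induced by your path express $\rel{f_q}$ as a composition of the $\rel{f_v}$ and their inverses, hence $f_q^D=f_q^{D'}$ and $q(D)=q(D')$ by evaluating on singletons. (The paper itself does not prove this Fact but cites \cite{A11,M20}; the above is the transfer of its Section~3 argument from $\mathbb{Q}^n$ to the powerset of the domain.)
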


In order to prove Theorem \ref{th-1} we will show that the same is true for determinacy in the multiset setting:

\begin{lemma}\label{lem:path-queries}
$V \detbag q$ iff there is a path, in $G_{q, V}$, from $\varepsilon$ to $q$.
\end{lemma}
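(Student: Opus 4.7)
The plan is to establish both directions by first proving an equivalent characterisation: a path from $\varepsilon$ to $q$ in $G_{q,V}$ exists if and only if $q$ admits a \emph{forward decomposition} $q = v_1 v_2 \cdots v_k$ with each $v_i \in V$. The easy direction (decomposition implies path) is immediate: the forward decomposition yields the path $\varepsilon, v_1, v_1 v_2, \ldots, v_1\cdots v_k = q$ in which each step is an extending edge. For the converse, I would observe that the last edge of any simple path $w_0 = \varepsilon, \ldots, w_n = q$ must be an extend, since $q$ has maximum length among prefixes of $q$ (a contract at the last step would force $|w_{n-1}| > |q|$, impossible). Hence $q = w_{n-1} v$ for some $v \in V$, and one peels off $v$ and recurses. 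The technical subtlety is that the sub-path from $\varepsilon$ to $w_{n-1}$ may visit prefixes strictly longer than $w_{n-1}$, so a naive induction on $|q|$ does not apply directly. I would handle this by a double induction on $(|q|, n)$ combined with a swap argument: any excursion above length $|w_{n-1}|$ on a shortest simple path can be shortcut, for otherwise two distinct path-vertices have the same length and coincide as prefixes of $q$, contradicting simplicity.

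Once the characterisation is in place, the $(\Leftarrow)$ direction of the lemma follows from a walk-matrix computation. Associate to each structure $D$ and each path query $w$ the $|\domain{D}|\times|\domain{D}|$ matrix $M_w^D$ with $M_w^D[a,b]$ equal to the number of $w$-walks in $D$ from $a$ to $b$. The bag answer $w(D)$ is then equivalent to the matrix $M_w^D$; concatenation of path queries corresponds to matrix multiplication, $M_{uw}^D = M_u^D M_w^D$; and $v(D) = v(D')$ as multisets is equivalent to $M_v^D = M_v^{D'}$. Hence a forward decomposition $q = v_1\cdots v_k$ yields $M_q^D = \prod_i M_{v_i}^D = \prod_i M_{v_i}^{D'} = M_q^{D'}$, so $q(D) = q(D')$.

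For the $(\Rightarrow)$ direction I would argue the contrapositive: assuming no path (and hence no forward decomposition), construct a pair $D, D'$ with $v(D) = v(D')$ as multisets for every $v \in V$ yet $q(D) \neq q(D')$. A natural starting point is the component structure of $G_{q,V}$: letting $C_\varepsilon$ and $C_q$ denote the components of $\varepsilon$ and $q$ (distinct by assumption), build $D$ as a disjoint union of ``canonical chains'' indexed by the components, each chain faithful to how $V$-walks traverse the corresponding component, and then obtain $D'$ from $D$ by a local twist rerouting certain edges between chains so as to preserve every $V$-walk count exactly while flipping the $q$-walk count. I expect the main obstacle to be multiset-level preservation: the classical set-semantics counter-example of~\cite{A11} only ensures agreement of \emph{which} pairs appear in $v(D)$ versus $v(D')$, whereas here the \emph{counts} must match. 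Since $v$-walk counts depend multiplicatively on the local joining structure at each node, the twist must be calibrated very carefully; the hypothesis that no element of $V$ connects $C_\varepsilon$ with $C_q$ is precisely what should make such a calibration possible.
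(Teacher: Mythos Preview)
Your proposed equivalence ``path in $G_{q,V}$ iff forward decomposition $q = v_1\cdots v_k$'' is false, and this breaks your $(\Leftarrow)$ argument. Take the paper's own running example: $q = ABCD$ and $V = \{ABC, BC, BCD\}$. There is a path $\varepsilon \to ABC \to A \to ABCD$, but $q$ has no forward decomposition over $V$ (the only view that is a prefix of $q$ is $ABC$, and after peeling it off you are left with $D$, which no view covers). Your shortcut argument fails here: the shortest simple path from $\varepsilon$ to $A$ is $\varepsilon, ABC, A$, which necessarily makes an excursion above $|A|$; there is no edge $\varepsilon \sim A$ to shortcut to, and simplicity alone does not forbid such excursions.

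The paper confronts exactly this difficulty. It keeps the back-and-forth path (encoded as a $q$-walk over $\Sigma \cup \Sigma^{-1}$) and, instead of matrix products, works with the \emph{relations} $H_w \subseteq \Q^n \times \Q^n$ induced by the matrices, so that $H_{R^{-1}}$ makes sense even when $M_R$ is not invertible. The key observation is that the two one-step reductions $AA^{-1}\to\varepsilon$ and $A^{-1}A\to\varepsilon$ yield opposite inclusions on the $H$-relations (\cref{zawieranie-id}, \cref{cor:reductions}), and every $q$-walk reduces to $q$ via \emph{each} type separately (\cref{niby-thue}); sandwiching gives $H_q = H_{(v_{p_1})^{\epsilon_1}\cdots(v_{p_m})^{\epsilon_m}}$, which is the desired expression of $H_q$ purely in terms of $\{H_v : v \in V\}$. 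Your matrix-product idea is the special case where all $\epsilon_j = 1$; the general case genuinely needs the relation trick.

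For $(\Rightarrow)$, your sketch is in the right spirit but over-anticipates the difficulty: the paper observes that the classical set-semantics counterexample (two disjoint copies of $q$, twisted according to reachability in $G_{q,V}$) already preserves the \emph{multiset} answers to every $v \in V$, so no extra calibration is needed.
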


\noindent
{\bf  The rest of this section is devoted to the proof  of Lemma \ref{lem:path-queries}.}

It turns out that the (simple) proof of the  ($\Rightarrow$) direction for the set semantics survives also in the multiset context.  We include it here for the sake of completeness, but due to the space limitations defer it to  Appendix B.\\

\noindent
{\bf Let us now deal with the  ($\Leftarrow$) direction.}
Assume that $q$ and $V$ are fixed and such that there is a path in $G_{q,V}$, of some length $m$, 
from $\varepsilon$ to $q$.
This means that there exist:\\
\textbullet~ a sequence $w_0, w_1,...w_m$ of prefixes of $q$, with $\varepsilon=w_0$ and $w_m=q$;\\
\textbullet~ a sequence of numbers $\epsilon_j$, for $j\in \{1,\ldots m\}$, each of them either equal 1 or $-1$;\\
\textbullet~ a sequence $v_{p_1}, \ldots v_{p_m}$ of elements of $V$
such that, for each $j\in \{1, \ldots m\}$, one of the conditions is true:

\hspace*{3mm}\textbullet~ $\epsilon_j=1$ and $w_j=w_{j-1}v_{p_j}$;\hfill
\textbullet~ $\epsilon_j=-1$ and $w_jv_{p_j}=w_{j-1}$.\\

\vspace{-1mm}
\noindent
We are going to show that in such case there will also be $V \detbag q$.
So we assume that there are two stuctures $D$ and $D'$ such that $v(D)=v(D')$ for each $v\in V$. 
Without loss of generality we can also assume that domains of $D$ and $D'$ are 
equal\footnote{By domain we do not mean the active domain here: we accept that there are elements, in $dom(D)$
or $dom(D')$ which do not appear in any facts.}, so let
$dom(D)=dom(D')= \{ a_1, ..., a_n \}$.

\subsection{The $q$-walks and  how to reduce them.}

Let $\bar\schema=\schema \cup \schema^{-1}$ be a new alphabet\footnote{Or schema, in the world of
path queries words and queries are the same thing.}, where $\schema^{-1}=\{R^{-1}\mid R\in\schema\}$. 

\begin{definition}\label{def:walk}
Let $w \in \bar\schema^*$ and $w = A_1^{\iota_1} ... A_k^{\iota_k}$ for some
$A_1, ..., A_k \in \schema$ and for some $ \iota_1, ..., \iota_k \in \{ 1, -1 \}$. 
Then $w$ is called a \emph{$q$-walk } if:

\begin{enumerate}
    \item for each $ i \in \range k$ it holds that $ 0 \leq \sum_{j=1}^i \iota_j \leq |q|$;
    \item $\sum_{j=1}^k \iota_j = |q|$;
    \item for each $ i \in \range k$ it holds that $A_i = 
    \begin{cases}
        Q_{s_{i}+1}\;\;\; \text{ if } \iota_i = 1 \\
        Q_{s_{i}}\;\;\;\;\;\; \text { ~~if } \iota_i = -1
    \end{cases}$
    
    where $s_i = \sum_{j=1}^{i-1} \iota_j$ and $Q_j$ is the $j$-th symbol in $q$.
\end{enumerate}
\end{definition}

\noindent
Our path in $G_{q,V}$, leading from $\varepsilon$ to $q$, induces, in a natural way, a $q$-walk\footnote{If $w \in \Sigma^*$, by $w^{-1}$ we mean $w$ reversed with every letter $\alpha$ replaced with $\alpha^{-1}$.}~~
$ (v_{p_1})^{\epsilon_1}(v_{p_2})^{\epsilon_2}\ldots (v_{p_m})^{\epsilon_m} $.
%
%
For clarity, let us illustrate this with:

\begin{example}
Imagine that $q = ABCD$ and $V = \set{ABC, BC, BCD}$. Then there is a path $ \varepsilon \to ABC \to A \to ABCD $ in $G_{q,V}$.
This path induces a $q$-walk   $(ABC)(BC)^{-1}(BCD)$, which is equal to $ABCC^{-1}B^{-1}BCD$.

\end{example}

Now we are going to explain how each $q$-walk 
can be turned into $q$ by a sequence of simple reductions:

\newcommand{\thue}{\;\xrightarrow
{\parbox{3mm}{\footnotesize \hspace{1.5mm}+/-\vspace{-5mm}}}
}

\newcommand{\thuetrans}{\xrightarrow
{\parbox{3mm}{\footnotesize \hspace{1.5mm}+/-\vspace{-5mm}}}
{\parbox{1mm}{\footnotesize \hspace{-5mm}*\vspace{2mm}}\hspace{-1mm}}
}

\newcommand{\theu}{\;\xrightarrow
{\parbox{3mm}{\footnotesize \hspace{1.5mm}-/+\vspace{-5mm}}}
}

\newcommand{\theutrans}{\xrightarrow
{\parbox{3mm}{\footnotesize \hspace{1.5mm}-/+\vspace{-5mm}}}
{\parbox{1mm}{\footnotesize \hspace{-5mm}*\vspace{2mm}}}\hspace{-1mm}
}

\begin{definition}
For any $w,w' \in \bar\schema^*$ and for any $A\in \schema$ we define:

\begin{center}
$ wAA^{-1}w' 
\thue
ww' \;\;\;\;\;$
and
$\;\;\;\;\; wA^{-1}Aw' 
\theu
ww' $.
\end{center}\vspace{2mm}

Relations $\thuetrans$ and $\theutrans$ are defined as the reflexive transitive closure of
 $\thue$ and of $\theu$, respectively.\\
\end{definition}

%

\begin{lemma}\label{niby-thue}
If $w\in\bar\schema^*$ is a $q$-walk, then
$ w
\thuetrans
q \;\;$
and
$\;\;w 
\theutrans
q $.
\end{lemma}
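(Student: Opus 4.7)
The plan is to induct on the length $k$ of the $q$-walk $w = A_1^{\iota_1} \cdots A_k^{\iota_k}$, treating both reductions in parallel. The base case is $k = |q|$: since $\sum_j \iota_j = |q|$ and each $|\iota_j| = 1$, every $\iota_j$ must equal $+1$, and condition (3) of \cref{def:walk} then forces $A_i = Q_i$, so $w = q$ and no reduction is required. For $k > |q|$, at least one $\iota_j$ equals $-1$, and I will exhibit a cancellable pair that lets me shorten $w$ and appeal to the inductive hypothesis.

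For the $\thuetrans$ direction I will take the smallest index $i$ with $\iota_i = -1$; condition (1) at $i = 1$ rules out $\iota_1 = -1$, so $i \geq 2$, and minimality gives $\iota_{i-1} = +1$. Then $s_i = s_{i-1} + 1$, and condition (3) forces $A_{i-1} = Q_{s_{i-1}+1} = Q_{s_i} = A_i$, so the two letters of $w$ in positions $i-1, i$ spell $X X^{-1}$ with $X = A_{i-1}$, and one $\thue$-step cancels them. The $\theutrans$ case is dual: condition (1) at $i = k-1$ combined with condition (2) rules out $\iota_k = -1$, so taking the largest $j$ with $\iota_j = -1$ gives $j < k$ and $\iota_{j+1} = +1$. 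The same arithmetic then yields $A_j = A_{j+1}$, so positions $j, j+1$ spell $Y^{-1} Y$ with $Y = A_j$ and one $\theu$-step cancels them.

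The only real work, and the main obstacle I anticipate, is to check that the word $w'$ obtained by a single such cancellation is itself a $q$-walk, so that induction applies. This is pure bookkeeping: excising the cancelled pair removes one $+1$ and one $-1$ from the $\iota$-sequence, so condition (2) remains satisfied; the partial sums of $w'$ form a subsequence of those of $w$ (one "up-down" or "down-up" excursion is deleted), so condition (1) is preserved; and each surviving letter is pinned to the $q$-symbol indexed by its partial sum under the natural reindexing that identifies positions $\geq i-1$ in $w'$ with positions $\geq i+1$ in $w$, so condition (3) transfers directly. Once $w'$ is verified to be a $q$-walk of length $k - 2$, the induction hypothesis yields $w' \thuetrans q$ and $w' \theutrans q$, and prepending the single cancellation step completes the proof.
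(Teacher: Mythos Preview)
Your proof is correct and follows essentially the same route as the paper's: induction on the length of the $q$-walk, locating an adjacent $(+1,-1)$ pair for $\thuetrans$ (respectively a $(-1,+1)$ pair for $\theutrans$), observing that condition (3) forces the two letters to coincide, and checking that the cancelled word is again a $q$-walk. The paper is terser---it treats only the $\thuetrans$ direction, declares the other ``symmetric'', and dismisses the $q$-walk preservation as ``easy to see''---whereas you spell out both directions and the bookkeeping for conditions (1)--(3); but the argument is the same.
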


For the proof of Lemma \ref{niby-thue} see Appendix C.


\subsection{Seeing $D$ (and $D'$) as relations in ${\mathbb Q}^n\times {\mathbb Q}^n$.}

\begin{definition}
Let $R \in \schema$. Then $M_R^D$ is the incidence matrix of the relation $R$ in structure $D$, 
that is $M_R^D \in \Q^{n \times n}$ and
$ M_R^D(i, j) = 1 $ if and only if  $R(a_i, a_j) \in D $ and $ M_R^D(i, j) = 0 $ if and only if 
$R(a_i, a_j) \notin D$.
\end{definition}
\begin{definition}
Let $w \in \schema^*$. Then we define a matrix $M_w^D \in {\mathbb Q}^{n \times n}$ in the  inductive way:  \hspace{4mm}
 {\large \textbullet~~} $M_\varepsilon^D$ is the $n\times n$ identity matrix.\\
 \hspace*{31mm} {\large \textbullet~~} For $R \in \schema, w \in \schema^*$, $M^D_{Rw} = M^D_R M^D_w$.

\end{definition}

It is well-known that:

\begin{fact}\label{superexpress}
If $w \in \schema^*$ then $w(D)[a_i, a_j] = M_w^D(i, j)$. 
\end{fact}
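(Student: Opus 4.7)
The plan is a routine induction on the length of $w$, mirroring the classical bijection between length-$k$ walks in a directed graph and entries of the $k$-th power of its adjacency matrix, lifted to the multi-relational setting.

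For the base case $w = \varepsilon$, the notational convention fixed in \cref{sec:prelims} identifies the empty word with the equality query $\Lambda(x,y) = \text{``}x = y\text{''}$. Hence $\varepsilon(D)[a_i, a_j] = 1$ if $i = j$ and $0$ otherwise, which is exactly the $(i,j)$-entry of the $n \times n$ identity matrix $M_\varepsilon^D$.

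For the inductive step, write $w = R w'$ with $R \in \schema$ and $|w'| < |w|$, and denote the variables of the frozen body by $x, x_1, x_2, \ldots, y$, so that the first atom is $R(x, x_1)$ and the remaining atoms form the frozen body of $w'$ on variables $x_1, \ldots, y$. A homomorphism $h \in \homs{w, D}$ with $h(x) = a_i$ and $h(y) = a_j$ is determined uniquely by the choice of $h(x_1)$; writing $h(x_1) = a_k$, such an $h$ exists iff $R(a_i, a_k) \in D$ and the restriction of $h$ to $\{x_1, x_2, \ldots, y\}$ is a homomorphism from the frozen body of $w'$ mapping $x_1 \mapsto a_k$ and $y \mapsto a_j$. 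Summing over $k$ gives
\[
w(D)[a_i, a_j] \;=\; \sum_{k=1}^{n} M_R^D(i, k) \cdot w'(D)[a_k, a_j].
\]
Applying the inductive hypothesis to $w'$ replaces $w'(D)[a_k, a_j]$ by $M_{w'}^D(k, j)$, and the resulting sum is exactly $(M_R^D \, M_{w'}^D)(i,j) = M_w^D(i,j)$ by the recursive definition of $M_w^D$.

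I do not expect any real obstacle here: the statement is essentially a reformulation of matrix multiplication in terms of homomorphism counts for linear conjunctive queries, and the only point requiring care is the $\varepsilon$ case, which is handled cleanly by the convention that identifies the empty word with the equality query and $M_\varepsilon^D$ with the identity matrix.
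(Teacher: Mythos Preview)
Your inductive argument is correct and is precisely the standard one; the paper does not actually prove this fact, merely stating it as ``well-known,'' so there is no alternative approach to compare against. The only microscopic wrinkle is the case $|w'|=0$ in your inductive step (where there is no genuine intermediate variable $x_1$), but your handling of the $\varepsilon$ base case via the equality-query convention makes the sum collapse correctly there as well.
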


Matrices $M^{D'}_R$ and $M^{D'}_w$ are defined analogously, and, obviously, Fact \ref{superexpress} remains
true for them. 

Of course in general we cannot assume that $M^{D}_w=M^{D'}_w$ for $w\in \Sigma^*$. 
But, since for each $v\in V$ we have $v(D)=v(D')$, we know that  for each $v\in V$ we have
 $M_v^D = M_v^{D'}$, so we can write just $M_v$ instead of $M_v^D$ or $M_v^{D'}$. 
Recall that  we need to show that $M_q^D = M_q^{D'}$. So, if we manage to somehow present 
 $M_q^D$ (and hence also $M_q^{D'}$) as a function of arguments $\{ M_v \mid v \in V \}$, then we are done.

Let us also remark that if $M_v$ were invertible, for all $v \in V$,  then it would be  easy 
to see that $M_q^D = M_{v_{p_1}}^{\epsilon_1} ... M_{v_{p_m}}^{\epsilon_m}$ and likewise $M_q^{D'}$. 
However, in the general case, there is of course no 
reason to think that the matrices $M_v$ are invertible, and thus we  need our argument to be
 a little bit  more sophisticated.

Now the matrices will be understood as linear functions. And these functions will be understood as relations. 
And, while  we know that not all matrices are invertible, and in consequence not all the functions under consideration are, 
relations  can always be inverted!

By $I$ we will denote the identity relation: $I = \{ \pair{x, x} \mid x \in {\mathbb Q}^n \}$.

\begin{definition}
\begin{enumerate}
    \item For a matrix $M \in {\mathbb Q}^{n \times n}$ let the function $h_M : {\mathbb Q}^n \to {\mathbb Q}^n$ be defined as $h_M(v) = Mv$. 
    \item For a function $f$ let $\rel f$ denote the relation equal\footnote{This means that $\rel f = \{ \pair{x, y} \mid f(x) = y \}$. We make such  distinction 
    since composition and inversion work for functions slightly differently than for relations.} to $f$.
    \item For $R \in \schema$ let $H_R = \rel{h_{M_R^D}}$ and $ H_{R^{-1}} = H_R^{-1}$
    \item For $w \in \bar\schema^*$ we define $H_w$ inductively:\\
    {\large\textbullet}\hspace{2mm} $H_\varepsilon = I$ \hspace{15mm}
   {\large\textbullet}\hspace{2mm} $H_{\alpha w} = H_w H_\alpha$ for $\alpha \in \bar\schema, w \in \bar\schema^*$
    
\end{enumerate}
\end{definition}

The relations $H_w$ depend on $D$ (in the sense that they would not be equal if we computed them in $D'$ instead of $D$), so 
 the reader may think that there should be $H^D_w$ instead of $H_w$. But omitting the superscript leads to no confusion: 
$D$ is the only structure for which the relations $H_w$ are ever considered.

\begin{observation}
For $w \in \schema^*$, $H_w = \rel{h_{M_w^D}}$ and $H_{w^{-1}} = (H_w)^{-1}$
\end{observation}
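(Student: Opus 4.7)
Both equalities are purely bookkeeping and follow by straightforward induction on the length of $w \in \Sigma^*$. The only content is carefully tracking the convention that in a product of relations $RS$ one applies $R$ first and $S$ second, so that composing graphs of linear maps corresponds to \emph{reversing} the order of matrix multiplication. This matches the recursion $H_{\alpha w} = H_w H_\alpha$ used in the paper: pushing a letter onto the front of a word appends composition on the right, which is exactly what is needed so that the matrix of the last letter applied ends up as the leftmost factor.

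\textbf{First equality.} For $w = \varepsilon$ the matrix $M_\varepsilon^D$ is the $n \times n$ identity, so $h_{M_\varepsilon^D}$ is the identity function on $\mathbb{Q}^n$ whose graph equals $I = H_\varepsilon$. For the inductive step, write $w = Rw'$ with $R \in \Sigma$ and $w' \in \Sigma^*$. By the definitions, $H_w = H_{w'} H_R$, while $M_w^D = M_R^D M_{w'}^D$. The inductive hypothesis gives $H_{w'} = \rel{h_{M_{w'}^D}}$, and the base case gives $H_R = \rel{h_{M_R^D}}$; under the composition convention above, their composite is the graph of $x \mapsto M_R^D(M_{w'}^D x) = M_w^D x$, which is exactly $\rel{h_{M_w^D}}$.

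\textbf{Second equality.} Unfolding the same recursion for $w = R_1 R_2 \cdots R_k \in \Sigma^*$ gives $H_w = H_{R_k} H_{R_{k-1}} \cdots H_{R_1}$. The relational identity $(RS)^{-1} = S^{-1} R^{-1}$ together with the base definition $H_{R^{-1}} = H_R^{-1}$ yield $(H_w)^{-1} = H_{R_1^{-1}} H_{R_2^{-1}} \cdots H_{R_k^{-1}}$. On the other hand, $w^{-1} = R_k^{-1} R_{k-1}^{-1} \cdots R_1^{-1}$, and applying $H_{\alpha w} = H_w H_\alpha$ iteratively to this word produces precisely $H_{R_1^{-1}} H_{R_2^{-1}} \cdots H_{R_k^{-1}}$. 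The two expressions coincide. There is no substantive obstacle: the entire argument is about keeping the conventions for relation composition and for $w^{-1}$ aligned with the recursive definition of $H_w$.
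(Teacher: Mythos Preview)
Your proof is correct and follows essentially the same approach as the paper: an easy induction on $|w|$, carefully matching the relation-composition convention (left factor applied first) to the recursion $H_{\alpha w}=H_w H_\alpha$ and to the matrix recursion $M_{Rw}^D=M_R^D M_w^D$. The paper's sketch phrases the key step as the identity $H_{ww'}=H_{w'}H_w=\rel{h_{M_{w'}^D}}\,\rel{h_{M_w^D}}=\rel{h_{M_w^D}\circ h_{M_{w'}^D}}=\rel{h_{M_{ww'}^D}}$ and leaves the second claim implicit; your explicit unfolding $H_w=H_{R_k}\cdots H_{R_1}$ together with $(RS)^{-1}=S^{-1}R^{-1}$ is exactly the intended argument, just spelled out in more detail.
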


For the proof of the Observation use (easy) induction and the fact that for $w, w' \in \schema^*$ it holds that 
$H_{ww'} = H_{w'}H_w = \rel{h_{M_{w'}^D}} \rel{h_{M_w^D}} = \rel{h_{M_w^D} \circ h_{M_{w'}^D}} = \rel{h_{M_{ww'}^D}}$. 
\hfill $\square$\\

\noindent
It is well-known that the correspondence $M \mapsto h_M$ is 1-1. Also the correspondence $f \mapsto  \rel f$ is 1-1.
So in order to represent  $M_q^D$  as a function of arguments $\{ M_v \mid v \in V \}$ it is enough
to represent $H_q$ as a function of $\{H_v\mid  v \in V \}$. Which we do in the next subsection.

\subsection{Using Lemma \ref{niby-thue}.}

Let us start this subsection with a really very simple lemma:

\begin{lemma}\label{zawieranie-id}
Let $f : {\mathbb Q}^n \to {\mathbb Q}^n$. Then
 $\rel f ~ (\rel f)^{-1} \supseteq I$
 and  $(\rel f)^{-1} ~ \rel f \subseteq I$.

\end{lemma}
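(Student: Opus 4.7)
The plan is to unwind definitions directly; the lemma is essentially bookkeeping that records the most basic fact about the relation associated with a function. First I would write out the two relations explicitly: $\rel f = \{\pair{x, f(x)} : x \in {\mathbb Q}^n\}$ and hence $(\rel f)^{-1} = \{\pair{f(x), x) : x \in {\mathbb Q}^n\}$. I would also fix the composition convention already implicit in the paper (the one used in the Observation just above), namely $\pair{x, z} \in A\,B$ iff there is some $y$ with $\pair{x, y} \in A$ and $\pair{y, z} \in B$.

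For the first inclusion $\rel f\,(\rel f)^{-1} \supseteq I$, I would just take an arbitrary $\pair{x, x} \in I$ and observe that the witness $y = f(x)$ works: $\pair{x, f(x)} \in \rel f$ by definition, and $\pair{f(x), x} \in (\rel f)^{-1}$ again by definition. This step uses only that $f$ is total, which is automatic here since $f: {\mathbb Q}^n \to {\mathbb Q}^n$.

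For the second inclusion $(\rel f)^{-1}\,\rel f \subseteq I$, I would take an arbitrary $\pair{x, z} \in (\rel f)^{-1}\,\rel f$, pick a witness $y$ with $\pair{x, y} \in (\rel f)^{-1}$ and $\pair{y, z} \in \rel f$, and read off that $f(y) = x$ from the first and $f(y) = z$ from the second; hence $x = z$, so $\pair{x, z} \in I$. This step uses only that $f$ is single-valued.

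There really is no obstacle: the lemma is a one-line consequence of the two defining properties of a function (totality and single-valuedness), and it does not even use that $f$ is linear or that the domain is ${\mathbb Q}^n$. The only care needed is to respect the composition convention of the paper so that the roles of $\rel f$ and $(\rel f)^{-1}$ in the two statements are not accidentally swapped — note that we do not in general get equality in the second inclusion, because $f$ need not be surjective, which is precisely why the weaker $\subseteq I$ (rather than $= I$) is stated.
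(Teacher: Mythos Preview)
Your proof is correct and essentially identical to the paper's: the paper writes out the two compositions as explicit sets (e.g.\ $\rel f\,(\rel f)^{-1}=\{\pair{x,y}\mid f(x)=f(y)\}$ and $(\rel f)^{-1}\,\rel f=\{\pair{x,x}\mid \exists z\; f(z)=x\}$) and reads off the inclusions, which is the same element-chase you perform, using totality for the first inclusion and single-valuedness for the second.
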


\noindent
{\sc Proof.}
$\rel f ~ (\rel f)^{-1} = \{ \pair{x, y} \mid \exists z \; f(x) = z \land f(y) = z \} =$\\ $\;\;\;\;\;\;\;=\{ \pair{x, y} \mid f(x) = f(y)\} \supseteq I $ \vspace{2mm}

\noindent
$(\rel f)^{-1} ~ \rel f = \{ \pair{x, y} \mid \exists z \; f(z) = x \land f(z) = y \} =$\\  $\;\;\;\;\;\;\;=\{ \pair{x, x} \mid \exists z ~ f(z) = x \} \subseteq I $ \hfill $\square$\\

Now we will see what the relations $H_w$ are good for:

\begin{lemma}\label{cor:reductions}
For $u,u' \in \bar\Sigma^*$:
\begin{enumerate}
    \item if  $u\thue u'$ then $H_{u} \subseteq H_{u'}$;
    \item  if $u\theu u'$ then $H_{u} \supseteq H_{u'}$.
\end{enumerate}
\end{lemma}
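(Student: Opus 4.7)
The plan is to reduce each part directly to \cref{zawieranie-id} by unpacking the definitions of $\thue$, $\theu$ and of $H_u$. The two parts are symmetric, so I will treat (1) in detail and note only the change needed for (2).

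First I will verify, by a one-line induction on $|u_1|$ using the recursive clause $H_{\alpha w}=H_w H_\alpha$, that $H$ reverses concatenation:
$$
H_{u_1 u_2} \;=\; H_{u_2}\, H_{u_1} \qquad \text{for all } u_1,u_2\in\bar\schema^*.
$$
So, if $u=wAA^{-1}w'$ and $u'=ww'$, I can read off $H_u = H_{w'}\, H_{A^{-1}}\, H_A\, H_w$ and $H_{u'}=H_{w'}H_w$, and the whole problem reduces to comparing the middle factor $H_{A^{-1}}H_A$ with $I$.

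Setting $f:=h_{M_A^D}$, the definitions give $H_A=\rel f$ and $H_{A^{-1}}=(\rel f)^{-1}$. \cref{zawieranie-id} is exactly what I need: it yields $(\rel f)^{-1}\rel f\subseteq I$, i.e.\ $H_{A^{-1}}H_A\subseteq I$. Since composition of binary relations is monotone in both arguments, I can conclude
$$
H_u \;=\; H_{w'}\, (H_{A^{-1}}H_A)\, H_w \;\subseteq\; H_{w'}\, I\, H_w \;=\; H_{u'},
$$
which is (1). For (2) the same calculation with $u=wA^{-1}Aw'$ produces $H_u=H_{w'}\,H_A H_{A^{-1}}\,H_w$, and the other half of \cref{zawieranie-id}, $\rel f\,(\rel f)^{-1}\supseteq I$, flips the inclusion.

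The only genuine thing to watch out for is the order of the factors: since $H_{\alpha w}=H_w H_\alpha$ reverses letters relative to the word, for part (1) the middle factor really is $H_{A^{-1}}H_A$ and not $H_AH_{A^{-1}}$, and it is crucial that this is the ordering bounded \emph{above} by $I$ rather than below. Once that bit of bookkeeping is settled, there is no substantive obstacle.
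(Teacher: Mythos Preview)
Your proof is correct and follows essentially the same route as the paper's: both unfold $H_{u}$ via the concatenation-reversal identity $H_{u_1u_2}=H_{u_2}H_{u_1}$, isolate the middle factor $H_{A^{-1}}H_A$ (resp.\ $H_AH_{A^{-1}}$), and then invoke \cref{zawieranie-id} together with monotonicity of relational composition. Your explicit attention to the factor ordering is exactly the point that makes the argument work.
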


For the proof of Lemma \ref{cor:reductions} see Appendix C. 
Notice that, for a $q$-walk $w$, Lemmas \ref{cor:reductions} and \ref{niby-thue} give us two approximations of $H_w$:

\begin{lemma}
If $w$ is a \emph{$q$-walk}, then $H_q\subseteq  H_w \subseteq H_q$. \hfill $\square$
\end{lemma}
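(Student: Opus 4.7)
The plan is to derive this as an immediate consequence of the two preceding lemmas, with the only work being to verify that the single-step containments provided by Lemma \ref{cor:reductions} extend properly along the reflexive transitive closures produced by Lemma \ref{niby-thue}.

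First, I would invoke Lemma \ref{niby-thue} to obtain both reduction sequences: since $w$ is a $q$-walk, we have $w \thuetrans q$ and $w \theutrans q$. Each of these is a finite chain $w = u_0 \to u_1 \to \ldots \to u_\ell = q$ of single-step reductions (either all of type $\thue$, or all of type $\theu$).

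Next, I would walk along the first chain and apply Lemma \ref{cor:reductions}(1) at each step: from $u_i \thue u_{i+1}$ we get $H_{u_i} \subseteq H_{u_{i+1}}$, and since set inclusion is transitive, composing these along the chain yields $H_w \subseteq H_q$. Doing the same along the second chain, using Lemma \ref{cor:reductions}(2), gives the reverse containments step by step, hence $H_w \supseteq H_q$, i.e. $H_q \subseteq H_w$. Chaining the two conclusions produces $H_q \subseteq H_w \subseteq H_q$ as required (so in fact $H_w = H_q$).

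There is no real obstacle here — the content of the lemma has already been packed into Lemma \ref{niby-thue} (combinatorics of reductions on $q$-walks) and Lemma \ref{cor:reductions} (semantic behavior of a single reduction on the relations $H_u$). The only thing worth spelling out explicitly is that both $\thuetrans$ and $\theutrans$ are the reflexive transitive closures of $\thue$ and $\theu$, and that $\subseteq$ and $\supseteq$ on relations are themselves reflexive and transitive, so lifting Lemma \ref{cor:reductions} from one step to arbitrary rewriting chains is immediate by induction on the length of the chain.
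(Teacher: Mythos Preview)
Your proposal is correct and matches the paper's approach exactly: the paper simply states that Lemmas \ref{niby-thue} and \ref{cor:reductions} together give the two approximations of $H_w$, and marks the lemma with a $\square$ without further argument. Your explicit induction along the rewriting chains is precisely the unpacking of what the paper leaves implicit.
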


\noindent
Our next corollary is certainly not going to come as a surprise:

\begin{corollary}
If $w$ is a \emph{$q$-walk}, then $H_q= H_w$. 
\end{corollary}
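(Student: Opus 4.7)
The plan is trivial once the previous lemma is in hand: the chain $H_q \subseteq H_w \subseteq H_q$ is simply antisymmetric, so equality follows immediately. In other words, the corollary is a purely formal consequence of the sandwich provided by the preceding lemma, with no further combinatorial or algebraic work to do.

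For completeness, let me describe how the ingredients line up so the logical flow is clear. By Lemma \ref{niby-thue}, the hypothesis that $w$ is a $q$-walk gives us both $w \thuetrans q$ and $w \theutrans q$. I would then apply Lemma \ref{cor:reductions} inductively along each reduction sequence: the first sequence, composed of $\thue$-steps, yields $H_w \subseteq H_q$ by item~(1), while the second sequence, composed of $\theu$-steps, yields $H_w \supseteq H_q$ by item~(2). These two inclusions are exactly the statement of the preceding lemma, and together they force $H_w = H_q$ as sets of ordered pairs.

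There is essentially no obstacle here; the only thing worth emphasizing in the write-up is that the transitive closures in Lemma \ref{niby-thue} propagate through Lemma \ref{cor:reductions} because set inclusion is transitive, so a single $\thue$- or $\theu$-step lemma is enough to conclude inclusion for the full $\thuetrans$ or $\theutrans$ reduction. With this observation, the proof of the corollary fits in a single sentence: two reverse inclusions give equality.
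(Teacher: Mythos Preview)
Your proposal is correct and matches the paper's approach exactly: the corollary is stated as an immediate consequence of the preceding lemma, whose sandwich $H_q \subseteq H_w \subseteq H_q$ forces equality. Your additional remarks about how Lemmas~\ref{niby-thue} and~\ref{cor:reductions} combine (via transitivity of inclusion) are precisely how the paper derives that preceding lemma, so nothing is missing.
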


Now, recall that
$ (v_{p_1})^{\epsilon_1}(v_{p_2})^{\epsilon_2}\ldots (v_{p_m})^{\epsilon_m} $
is a $q$-walk. So, by the last corollary
$ H_q \; = \; H_{v_{p_m}}^{\epsilon_m}\ldots H_{v_{p_2}}^{\epsilon_2}H_{v_{p_1}}^{\epsilon_1} $.

Which shows that  $H_q$ is indeed a function of  $\{H_v\mid  v \in V \}$ and ends the proof of Lemma \ref{lem:path-queries}($\Leftarrow$) and of Theorem \ref{th-1}.


\section{The boolean case. Our main results.}\label{sec-main-th-1}

In contrast to the set-semantics world, where determinacy is easily decidable for unary UCQs, 
and trivially decidable for boolean UCQs, in the multiset  setting already the
boolean case is undecidable:

\begin{theorem} The problem whether, for a set 
 $V$ of boolean UCQs and for another boolean UCQ $\query$, it holds that $V\detbag \query$ is undecidable. 
  \end{theorem}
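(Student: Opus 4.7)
The plan is to obtain undecidability by reducing the bag-semantics \emph{equivalence} problem for boolean UCQs---shown undecidable via the Hilbert's 10th problem encoding of~\cite{IR95}---to the determinacy problem of the statement. The reduction is a one-predicate padding gadget, in the spirit of the ``$p_1\vee p_2$'' trick of~\cite{SV05}: introduce a fresh relational symbol whose multiplicity can be tuned arbitrarily, wrap both $Q_1$ and $Q_2$ with it via disjunction, and read off bag-equivalence from bag-determinacy.

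Concretely, given an instance $(Q_1,Q_2)$ of bag-equivalence of boolean UCQs over a schema $\Sigma$, I would pick a fresh unary predicate $S\notin\Sigma$, set $R := \exists x\, S(x)$, and define
\[
V \;=\; \{\, Q_1 \vee R \,\}, \qquad \query \;=\; Q_2 \vee R.
\]
Under bag semantics these evaluate pointwise to $V(D) = Q_1(D) + R(D)$ and $\query(D) = Q_2(D) + R(D)$; moreover $R(D)$ is just the number of $S$-facts in $D$, and can be freely tuned in $\N$ by adding or removing $S$-facts (which do not affect $Q_1(D)$ or $Q_2(D)$, as $S$ is not in $\Sigma$). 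The claim to prove is that $V \detbag \query$ iff $Q_1 \equiv_{\text{bag}} Q_2$.

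The ``$\Leftarrow$'' direction is immediate: bag-equivalence makes $V(D)=\query(D)$ on every $D$. For ``$\Rightarrow$'' I would argue contrapositively. Assuming $Q_1\not\equiv_{\text{bag}}Q_2$, pick a $\Sigma$-structure $A$ with $Q_1(A)\neq Q_2(A)$, and set $D_1 := A$ and $D_2 := \{S(a_1),\ldots,S(a_k)\}$ with $k=Q_1(A)$ and $a_1,\ldots,a_k$ fresh constants (so $D_2=\emptyset$ if $k=0$). Since every CQ-disjunct of $Q_1, Q_2$ has at least one atom, $Q_1(\emptyset)=Q_2(\emptyset)=0$, hence $V(D_1) = Q_1(A) = V(D_2)$ while $\query(D_1) = Q_2(A) \neq Q_1(A) = \query(D_2)$, witnessing $V \not\detbag \query$. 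Finally, I invoke~\cite{IR95}: their Hilbert's 10th encoding in fact produces pairs $(Q_1,Q_2)$ for which one direction of bag-containment is automatic, so that bag-equivalence captures the entire Diophantine hardness. Combined with the gadget above, bag-determinacy of boolean UCQs is undecidable. The one place that could look tricky is the boundary case $Q_1(A)=0$, but it is handled uniformly by the same construction, which only uses that bag-counts are non-negative integers; beyond that I do not expect any serious obstacle.
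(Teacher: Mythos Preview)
Your gadget reduction from bag-equivalence to bag-determinacy is correct, but the problem you reduce \emph{from}---bag-equivalence of boolean UCQs---is decidable, so the argument does not yield undecidability. A boolean UCQ evaluates as $Q(D)=\sum_i |hom(q_i,D)|$, and the functions $D\mapsto |hom(G,D)|$ for pairwise non-isomorphic finite structures $G$ are linearly independent over~$\Q$: writing each $G$ as a disjoint sum of connected components turns $|hom(G,\cdot)|$ into a monomial in the functions $|hom(w,\cdot)|$ for connected $w$, and by the construction in Section~\ref{good-s-proof} (Steps~1--3, which use nothing specific to $q$ or $V_0$) these monomials are linearly independent already on the lattice $\{M_S\vec t:\vec t\in\N^k\}$. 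Consequently $Q_1\equiv_{\text{bag}}Q_2$ holds iff the two multisets of disjuncts agree up to isomorphism---a trivial syntactic check. Your claim that the IR95 instances have one containment direction ``automatic'' cannot be right either: on such instances containment would collapse to equivalence and hence be decidable, contradicting IR95.

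The paper's proof does not pass through equivalence. It reduces directly from Hilbert's 10th using two nullary predicates $H,C$ (the ``hot/cold'' gadget of \cite{SV05,M20}). The view $H\vee C$, together with the views $\exists y\,X_i(y)$, forces any counterexample pair $(D,D')$ to agree on all the unary ``variable'' counts and to swap $H\leftrightarrow C$. The remaining view $\Psi_P\vee\Psi_N$, with $\Psi_P$ guarded by $H$ and $\Psi_N$ guarded by $C$, then makes $V_I(D)=V_I(D')$ equivalent to $\sum_{m\in I} m_D=0$. Thus a counterexample to determinacy exists iff the Diophantine equation has a solution. The crucial point your padding misses is that determinacy is an \emph{existential} statement about pairs $(D,D')$; the hot/cold device is exactly what ties that existential quantifier to the existential quantifier in Hilbert's 10th, something a fresh-predicate sum cannot do.
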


\noindent
This negative result is not really hard to prove (see Appendix A). 
The main technical result of this paper, however, is:

\begin{theorem}
The problem whether, for a given set $V_0$ of  boolean conjunctive queries and for a given  boolean conjunctive query $q$, 
it holds that $V_0 \detbag q$, 
is decidable. 
\end{theorem}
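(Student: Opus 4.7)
The plan is to recast the problem in terms of homomorphism counts. Under bag semantics a boolean CQ satisfies $q(D)=\nhoms{q,D}$, so the question becomes whether $\nhoms{q,D}$ is, as a function of the finite structure $D$, already determined by the values $\nhoms{v_1,D},\ldots,\nhoms{v_k,D}$. First I would pass to connected components: let $H_1,\ldots,H_N$ enumerate the isomorphism classes of connected structures appearing as connected components of $q$ or of some $v_i$. By \cref{lem:hom-properties} (item \ref{eq:hom-properties-5}), each $v_i(D)$ and $q(D)$ is a monomial $\prod_j \nhoms{H_j,D}^{a_{i,j}}$ in the ``primitive'' counts $\nhoms{H_j,D}$, whose exponents form vectors $\vec a_1,\ldots,\vec a_k,\vec a_q\in\N^N$ depending only on the queries.

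Taking formal logarithms, determinacy becomes the assertion that for every pair of structures $D,D'$, orthogonality of $\vec L(D)-\vec L(D')$ to each $\vec a_i$ implies orthogonality to $\vec a_q$, where $\vec L(D):=(\log\nhoms{H_j,D})_{j=1}^N\in\R^N$. The natural algebraic criterion is $\vec a_q\in\linspan_\Q(\vec a_1,\ldots,\vec a_k)$, which is clearly decidable by Gaussian elimination. The easy direction --- showing this condition forces $V_0\detbag q$ --- is essentially a formal manipulation: if $\vec a_q=\sum_i c_i\vec a_i$ with $c_i\in\Q$ then $q(D)=\prod_i v_i(D)^{c_i}$ whenever all $v_i(D)$ are nonzero, and the corner cases where some $\nhoms{H_j,D}$ vanishes can be handled separately by an argument using set-semantics containment among the connected components $H_j$.

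The hard direction is the converse: if $\vec a_q\notin\linspan_\Q(\vec a_1,\ldots,\vec a_k)$, exhibit witnesses. By \cref{fact:orthogonal-vector} there is a rational $\vec z\in\Q^N$ orthogonal to every $\vec a_i$ but not to $\vec a_q$, and the remaining task is to realise a positive integer multiple of $\vec z$ as $\vec L(D)-\vec L(D')$ for actual finite structures. I would carry this out in the spirit of \cref{good-s-proof}: use Lovász's reconstruction theorem \cite{lovasz} together with the operations $+$, $\times$ and $t$-fold product on structures from \cref{gtt} to assemble a finite family of ``generator'' structures whose log-profile vectors span the full achievable subspace of $\R^N$. \cref{cor:interior} then places a rational point $\vec p$ in the interior of the cone of realisable profiles, so that $\vec p\pm\varepsilon\vec z$ stays in the cone for small rational $\varepsilon$; clearing denominators via repeated products of the generators lifts these to integer hom-count vectors realised by explicit structures $D,D'$ with matching $v_i$-counts and mismatched $q$-count.

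The main obstacle I anticipate is exactly this realisation step. Achievable log-profiles form a \emph{discrete} subset of $\R^N$, so the purely topological density results (such as the $(*)$ of Erdős--Lovász--Spencer cautioned about in the introduction) do not automatically produce witnessing structures --- a functional relation can easily hide inside a discrete graph whose closure is dense in a ball. What the proof needs instead is an explicit combinatorial recipe, parametrised so that its hom-counts land on prescribed rational points of the realisable cone and flexible enough to cover every direction forbidden by \cref{fact:orthogonal-vector}. This is why the argument must proceed through a multi-step construction in \cref{good-s-proof} rather than as an immediate consequence of the density result, with Lovász's theorem guaranteeing the needed independence of the generators and \cref{cor:interior} furnishing the slack around the interior point along which $\vec z$ can be added without leaving the cone of realisable profiles.
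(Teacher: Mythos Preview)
Your high-level strategy matches the paper's, but the decidable criterion you state is incorrect. You take the span over \emph{all} views in $V_0$; the paper instead first restricts to $V=\{v\in V_0\mid q\subseteq_\sett v\}$, takes the basis $W$ to consist only of the connected components of $V\cup\{q\}$, and proves that $V_0\detbag q$ iff $\vec q\in\linspan\{\vec v:v\in V\}$. The restriction is not cosmetic: the ``corner cases where some $\nhoms{H_j,D}$ vanishes'' that you wave at cannot be repaired by set-containment reasoning among the components $H_j$.

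For a concrete failure of your criterion, take graphs $q=K_2$, $v_1=K_3$, $v_2=K_2+K_3$. In your basis $\{K_2,K_3\}$ the exponent vectors are $\vec a_q=(1,0)$, $\vec a_{v_1}=(0,1)$, $\vec a_{v_2}=(1,1)$, so $\vec a_q=\vec a_{v_2}-\vec a_{v_1}$ lies in your span and you would declare determinacy. But with $D=K_2$ and $D'=2K_2$ one has $v_1(D)=v_1(D')=0$ (there is no homomorphism $K_3\to K_2$), hence also $v_2(D)=v_2(D')=0$, while $q(D)=\nhoms{K_2,K_2}$ and $q(D')=2\nhoms{K_2,K_2}$ differ. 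The point is that when some $v_i\in V_0\setminus V$ satisfies $v_i(D)=0$, nothing forces $q(D)=0$, so the identity $q(D)=\prod_i v_i(D)^{c_i}$ breaks down irreparably on exactly the inputs where you need it. The paper's fix is to exclude such $v_i$ from the span condition altogether; \cref{najprostsza} then guarantees that the \emph{remaining} zero cases do force $q(D)=0$.

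There is a symmetric omission in your hard direction: the witnessing structures $D,D'$ must also satisfy $v(D)=v(D')$ for the discarded views $v\in V_0\setminus V$, and your orthogonal vector $\vec z$ says nothing about them. The paper handles this by requiring its set $S$ of basis structures to be \emph{decent} (every $v\in V_0\setminus V$ vanishes on every $s\in S$), which it arranges in Step~4 of \cref{good-s-proof} by multiplying each generator by $q$. Your realisation sketch does not address this, and without it the construction only controls the views in $V$.
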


\noindent
{\bf \small The rest of this section, and Sections \ref{sec-main-th-3}-\ref{usinggoodS-proof} are devoted to the proof of \cref{theorem-3}. 
 A set $V_0$ of boolean conjunctive queries and a boolean conjunctive query $q$ are fixed from now on.}

\begin{definition}
Let $V$ be the set  $\{ v \in V_0 \mid q \subseteq_\sett v \}$.  Let us also denote $V' = V \cup \{ q \}$.
\end{definition} 

Queries from $V$ are the ones that cannot return 0 in any interesting (from the point of view of this proof)
structure $D$. Queries from $V_\setminus V$ are free to return 0, and they actually will.

\begin{observation}\label{najprostsza}
If $D$ is any structure, $v\in V$ and $v(D)= 0$ then also  $q(D)= 0$.
\end{observation}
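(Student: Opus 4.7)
The observation is essentially an immediate unpacking of the definition of $V$. The plan is to simply take the contrapositive of the set-containment condition defining membership in $V$.

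First I would recall that by definition $V = \{ v \in V_0 \mid q \subseteq_\sett v \}$, so for any $v \in V$ we have $q \subseteq_\sett v$. Unfolding the definition of set-semantics containment given in the preliminaries, this means: for every structure $D$, if $q(D) > 0$ then $v(D) > 0$. Taking the contrapositive of this implication applied to our particular $D$, if $v(D) = 0$ then $q(D) = 0$, which is exactly the claim.

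If one prefers the homomorphism-based phrasing, I would instead note that $q \subseteq_\sett v$ is equivalent (by the well-known characterization recalled in the preliminaries) to the existence of a homomorphism $h \in \homs{v, q}$ from the frozen body of $v$ into the frozen body of $q$. Now suppose for contradiction that $q(D) > 0$, so there exists $g \in \homs{q, D}$. Then the composition $g \circ h$ is a homomorphism from the frozen body of $v$ into $D$, hence $v(D) = \nhoms{v, D} \geq 1 > 0$, contradicting $v(D) = 0$.

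There is no real obstacle here; the observation is stated mainly to fix notation and to justify later reasoning that treats queries in $V$ as ``always nonzero whenever $q$ is nonzero.'' The only thing to be careful about is not to confuse the bag-semantics value (a natural number counting homomorphisms) with the set-semantics containment relation $\subseteq_\sett$, since throughout the paper $q(D)$ denotes the number of homomorphisms, whereas $\subseteq_\sett$ is defined purely in terms of whether this number is positive or zero.
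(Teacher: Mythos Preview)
Your proof is correct and matches the paper's intent: the paper states this as an observation with no proof, treating it as an immediate consequence of the definition of $V$ and of $\subseteq_\sett$. Your contrapositive argument is exactly the one-line justification the reader is expected to supply.
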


\begin{definition}\label{def-wu}
Let $W = \{ w_1, ..., w_k \}$ be the 
set\footnote{When we say ``set'' we mean that each such connected component only occurs once in $W$. And we think that isomorphic structures are equal.} of all connected components of the query\footnote{$\sum$ is an operation on structures here, as defined in Section \ref{gtt}.} $\sum_{v \in V'} v$. In other words, $W$ is a 
minimal set of structures such that for every connected component $u$ of $\sum_{v \in V'} v$ there is $w \in W$ isomorphic to $u$. From now on, the letter
$k$ will always denote the cardinality of $W$.
\end{definition}

Queries from $W$ are going to serve us as \emph{basis}  queries, in the linear algebra sense:

\begin{observation}\label{reprezentacja-wek}
Let $v \in V'$. Then $v = \sum_{i=1}^k a_i w_i$ for some $a_1, ..., a_k \in \N$.
\end{observation}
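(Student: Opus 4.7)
The plan is to unfold the definition of $W$. Given $v \in V'$, I view its frozen body as a structure and decompose it as a disjoint sum of its own connected components, say $v \cong c_1 + c_2 + \cdots + c_\ell$, with the decomposition unique up to isomorphism and reordering. The crucial point is that the sum operation $+$ recalled in \cref{gtt} is by definition a disjoint union (with variables of one summand renamed to be fresh), so taking further disjoint sums neither merges nor subdivides existing components. In particular, each $c_j$ is, up to renaming of variables, a connected component of the bigger structure $\sum_{v' \in V'} v'$, because $v$ itself sits as one of the summands of that structure.

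By \cref{def-wu}, $W=\{w_1,\ldots,w_k\}$ contains an (up to isomorphism unique) representative of every connected component of $\sum_{v' \in V'} v'$, so each $c_j$ is isomorphic to some $w_{i(j)}\in W$. Setting $a_i \in \N$ to be the number of indices $j$ with $i(j)=i$, I obtain $v \cong \sum_{i=1}^k a_i w_i$, where $a_i w_i$ is the shorthand $\sum_{t=1}^{a_i} w_i$ from \cref{gtt}. Under the convention, implicit in treating $W$ as a \emph{set} of queries in \cref{def-wu}, that isomorphic structures are identified, this isomorphism becomes the required equality. The statement is thus a direct unfolding of definitions and presents no real obstacle; its role is to fix once and for all the nonnegative-integer coordinate vector of each $v \in V'$ relative to the ``basis'' $W$, enabling the linear-algebraic arguments that will drive the rest of the proof of \cref{theorem-3}.
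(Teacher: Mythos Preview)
Your proof is correct and is exactly the natural unfolding of \cref{def-wu} that the paper has in mind; indeed, the paper states this observation without any proof, treating it as immediate from the definition of $W$.
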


Note, that such representation is unique. Thus:

\begin{definition}
 For a query $v\in V'$ we define the \emph{vector representation of $v$} as  $\vec v = \begin{bmatrix} a_1 \\ \vdots \\ a_k \end{bmatrix} $, where $a_1, ..., a_k$ are as in \cref{reprezentacja-wek}.
 \end{definition}
 
So all the queries of interest are now seen as vectors in some $k$-dimensional vector space.

\begin{observation}\label{obs:query-eval}
If $D$ is any structure and $ v \in V'$ then $v(D) = \prod_{i = 1}^k w_i(D)^{\vec v(i)}$.
\end{observation}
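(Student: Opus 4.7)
The plan is a direct unpacking of definitions followed by an iterated application of \cref{lem:hom-properties}(\ref{eq:hom-properties-5}). First I would recall that since $v \in V'$ is boolean, $v(D) = \nhoms{v,D}$, where $v$ is identified with its frozen body; and by \cref{reprezentacja-wek} together with the definition of $\vec v$, we have the \emph{structural} equality $v = \sum_{i=1}^k a_i w_i$ with $a_i = \vec v(i)$, where $\sum$ is the disjoint-union operation on structures from \cref{gtt}.

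Next I would iterate \cref{lem:hom-properties}(\ref{eq:hom-properties-5}), which asserts $\nhoms{A+B,C} = \nhoms{A,C}\cdot \nhoms{B,C}$. A trivial induction on the number of summands gives
\[
\nhoms{\textstyle\sum_{i=1}^k a_i w_i,\, D} \;=\; \prod_{i=1}^{k} \nhoms{a_i w_i,\, D}.
\]
Another application of the same identity, $a_i$ times, to each factor yields $\nhoms{a_i w_i,\, D} = \nhoms{w_i,D}^{a_i} = w_i(D)^{a_i}$, and combining these two steps gives exactly $v(D) = \prod_{i=1}^{k} w_i(D)^{\vec v(i)}$.

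The only subtlety is the boundary case $a_i = 0$: then $a_i w_i$ is the empty structure, for which the preliminaries fix $\nhoms{\emptyset,D} = 1$, while the corresponding factor $w_i(D)^{0}$ on the right-hand side equals $1$ thanks to the global convention $0^0 = 1$ adopted at the end of Section~2.3 (this is what rescues the identity when $w_i(D) = 0$). Both sides are therefore unaffected by indices $i$ with $a_i = 0$, and the identity holds unconditionally. There is no real obstacle; the observation serves only to recast query evaluation in a multiplicative form amenable to the linear-algebraic arguments that follow.
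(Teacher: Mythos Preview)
Your proof is correct and follows the same route as the paper's own argument, which simply writes $v(D) = |hom(v,D)| = |hom(\sum_{i=1}^k \vec v(i)\,w_i, D)| = \prod_{i=1}^k w_i(D)^{\vec v(i)}$ and attributes the last step to \cref{lem:hom-properties}. Your version is more explicit in isolating item~(\ref{eq:hom-properties-5}) and in treating the boundary case $a_i=0$, but the underlying idea is identical.
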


\noindent
{\em Proof:~}
Notice that
$v(D) = |hom(v, D)| = |hom(\sum_{i=1}^k \vec v(i) w_i, D)| =$ $\prod_{i = 1}^k w_i(D)^{\vec v(i)}$. 
The last equality follows from \cref{lem:hom-properties}.
\hfill $\square$

\vspace{1mm}
Now we are ready for our  {\bf Main Lemma}:

\begin{lemma}\label{main-lemma-3}
$V_0 \detbag q$ if and only if $\vv q \in \linspan \{ \vv v \mid v \in V \}$.
\end{lemma}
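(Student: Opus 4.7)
The plan is to treat the two directions separately. The ($\Leftarrow$) direction is a short calculation once \cref{obs:query-eval} is combined with logarithms, while ($\Rightarrow$) needs an explicit pair of distinguishing databases, which I expect the remaining sections to produce.

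For ($\Leftarrow$), suppose $\vv q = \sum_{v \in V} \alpha_v \vv v$ with $\alpha_v \in \R$, and fix $D, D'$ with $v(D) = v(D')$ for every $v \in V_0$. If some $v \in V$ gives $v(D) = 0$, then $v(D') = 0$ too and \cref{najprostsza} forces $q(D) = q(D') = 0$. Otherwise every $v \in V$ evaluates positively on both $D$ and $D'$, and \cref{obs:query-eval} forces $w_i(D), w_i(D') > 0$ whenever some $\vv v$ with $v \in V$ has $\vv v(i) > 0$; the assumption $\vv q = \sum \alpha_v \vv v$ guarantees the same coverage for every index appearing with a nonzero coefficient in $\vv q$, so residual indices factor trivially out under the $0^0 = 1$ convention. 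Setting $\vec \ell_D(i) = \log w_i(D)$, taking logarithms of \cref{obs:query-eval} turns each equation $v(D) = v(D')$ into $\langle \vv v, \vec \ell_D - \vec \ell_{D'} \rangle = 0$, and real linearity propagates this orthogonality to $\vv q$, giving $q(D) = q(D')$.

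For ($\Rightarrow$), suppose $\vv q \notin \linspan\{\vv v : v \in V\}$ and use \cref{fact:orthogonal-vector} to fix a rational $\vec z$ orthogonal to every $\vv v$ with $v \in V$ but not orthogonal to $\vv q$. The target is a pair $D, D'$ whose log-profiles satisfy $\vec \ell_D - \vec \ell_{D'} = \lambda \vec z$ for some $\lambda \neq 0$: exponentiating the orthogonality relation then yields $v(D) = v(D')$ for every $v \in V$ while $q(D) \neq q(D')$. The queries in $V_0 \setminus V$ can be killed on both sides simultaneously by composing with $q$ through the product of \cref{gtt}: for any structure $X$ and every $v_0 \in V_0 \setminus V$, one has $v_0(q \times X) = v_0(q) \cdot v_0(X) = 0$, since $q \not\subseteq_\sett v_0$ forces $v_0(q) = 0$. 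Hence it suffices to produce $X, Y$ with $\vec \ell_X - \vec \ell_Y$ a nonzero rational multiple of $\vec z$ and set $D = q \times X$, $D' = q \times Y$.

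Reaching a prescribed log-profile can be attempted by taking $X = \prod_i w_i^{x_i}$, which by \cref{lem:hom-properties} yields $\log w_j(X) = \sum_i x_i \log w_j(w_i)$, so $\vec \ell_X = P \vec x$ for the matrix $P$ with $P(j,i) = \log w_j(w_i)$. After restricting to a maximal linearly independent subfamily of $W$ so that the relevant $P$ is nonsingular, \cref{cor:interior} supplies a full-dimensional patch of realizable rational profiles, giving enough room to produce a displacement proportional to $\vec z$ realized by integer exponents. The main obstacle I foresee is that $w_j(w_i)$ can equal $0$ for $i \neq j$, so the log matrix $P$ is not automatically well-defined on all of $W$ and the naive product assembly may not reach profiles in which every $w_j$ is positive; navigating this while preserving both the nonsingularity of $P$ and the $v \in V_0 \setminus V$ nullification is the technical step that I expect the "good $S$" machinery of the subsequent sections to carry out.
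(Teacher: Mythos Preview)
Your $(\Leftarrow)$ argument is correct and matches the paper (which writes $q(D) = \prod_j v_j(D)^{\alpha_j}$ directly rather than via logarithms, but the content is the same). For $(\Rightarrow)$, the high-level plan---pick a rational $\vec z$ orthogonal to each $\vec v$ but not to $\vec q$, realize two structures whose log-profiles differ by a nonzero multiple of $\vec z$, and kill $V_0 \setminus V$ by taking products with $q$---is exactly what the paper does: the perturbation $\vec p' = t^{\vec z} \circ \vec p$ of \cref{lem:exist-p} is precisely your log-displacement, and Step~4 of \cref{good-s-proof} is your $q \times X$ trick.

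The gap is in how you realize a structure with a prescribed profile. You propose $X = \prod_i w_i^{x_i}$, which makes the \emph{log}-profile linear in $\vec x$ via the matrix $P(j,i) = \log w_j(w_i)$, and you then need $P$ nonsingular. The paper instead assembles $s = \sum_i x_i s_i$ as a \emph{disjoint union}, making the \emph{profile itself} equal to $M_S \vec x$, and needs only $M_S$ nonsingular---which is what \cref{goodS} actually delivers. These are genuinely different conditions, and the paper's good $S$ does not give you yours: there $w_i(s_j) = w_i(s^{(2)})^{j-1}\, w_i(q)$, so the entrywise logarithm has columns of the form $(j-1)\vec a + \vec b$ for fixed $\vec a,\vec b$, hence rank at most $2$ and singular whenever $k \geq 3$ (and some entries of $M_S$ may vanish, leaving the log undefined). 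Your fallback of ``restricting to a maximal linearly independent subfamily of $W$'' cannot help either, since $\vec q$ and the $\vec v$'s live in $\R^k$ and you must control all $k$ coordinates of the profile. The missing move is exactly this switch from product to disjoint union; that is what makes \cref{cor:interior} applicable and the cone argument of \cref{usinggoodS-proof} go through.
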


\noindent
Clearly, \cref{theorem-3} easily follows from \cref{main-lemma-3} 
as finding $V$ is of course decidable (in $\Sigma^P_2$ -- we first need to guess a set of homomorphisms and then check that we 
guessed all of them), while finding $W$ and testing whether  $\vv q \in \linspan \{ \vv v \mid v \in V \}$ are polynomial.

In Sections \ref{sec-main-th-3}--\ref{usinggoodS-proof} we present the (more complicated) $(\Rightarrow) $ part of the 
proof of  \cref{main-lemma-3}. The  (much easier)  $(\Leftarrow)$ part  is deferred to 
Appendix D. We however illustrate the idea of the $(\Leftarrow)$ part with:

\begin{example}
Let $w_1, w_2, w_3$ be some non-empty, pairwise non-isomorphic structures and let:  \hspace{15mm}
$ q = w_1 + w_2 + 2 w_3 $\\
   \hspace*{15mm}  $ v_1 = 2 w_1 + w_2 + 3 w_3 $ \hspace{10mm}
  $ v_2 = 5 w_1 + 2 w_2 + 7 w_3 $

  \vspace{1mm}
Then for a structure $D$: \hfill
$q(D) = w_1(D) w_2(D) w_3(D)^2 $,\\
$   v_1(D) = w_1(D)^2 w_2(D) w_3(D)^3 $
   ~and~ $\; v_2(D) = w_1(D)^5 w_2(D)^2 w_3(D)^7$.

    \vspace{1mm}
 If $v_2(D) \neq 0$, then $ q(D) = v_1(D)^3 / v_2(D) $
  so it is uniquely determined by $v_1(D), v_2(D)$. This equality corresponds to the equality of vector representations 
  $ \vec q = 3 \vec v_1 - \vec v_2$.
  If $v_2(D) = 0$, then for some $i \in \{ 1, 2, 3 \}, w_i(D) = 0$, so $q(D) = 0$ and it is determined again.

\end{example}

It easily follows from Lemma \ref{main-lemma-3} that in the very specific case of connected queries 
no non-trivial determinacy is possible:

\begin{corollary}\label{wnioseczek}
If all the queries in $V_0$ are connected, and $q$ is connected, then
$V_0 \detbag q$ if and only if $q \in  V_0$.
\end{corollary}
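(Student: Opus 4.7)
The plan is to deduce this as an immediate application of \cref{main-lemma-3}: by that lemma it suffices to show that, under the connectedness hypothesis, the condition $\vec q \in \linspan\{\vec v \mid v \in V\}$ collapses to $q \in V_0$.

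First I would observe that if every query in $V_0$ is connected (and $q$ is connected), then every element of $V' = V \cup \{q\}$ is connected, so each $v \in V'$ is itself a single connected component of $\sum_{u \in V'} u$. Treating isomorphic structures as equal, $v$ therefore coincides with a unique basis element $w_{i(v)} \in W$, and by the uniqueness of the representation in \cref{reprezentacja-wek} its vector representation is simply the standard basis vector $\vec v = e_{i(v)} \in \N^k$. In particular $\vec q = e_j$ for the distinguished index $j$ with $w_j = q$, and likewise each $\vec v$ for $v \in V$ is a standard basis vector.

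Next, because the standard basis vectors $e_1, \ldots, e_k$ are linearly independent, $e_j$ belongs to $\linspan\{e_{i(v)} \mid v \in V\}$ if and only if $j = i(v)$ for some $v \in V$; since the elements of $W$ are pairwise non-isomorphic, this is equivalent to having some $v \in V$ with $v = q$. Combined with \cref{main-lemma-3}, this yields $V_0 \detbag q \Leftrightarrow q \in V$. To finish, recall that $V = \{v \in V_0 \mid q \subseteq_\sett v\}$; since $q \subseteq_\sett q$ via the identity homomorphism, $q \in V$ is in turn equivalent to $q \in V_0$, closing the argument.

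There is no real obstacle here: all the technical work is absorbed into \cref{main-lemma-3}, and the connectedness assumption degenerates the linear-algebraic membership test into a purely combinatorial one. The only thing one must be slightly careful about is keeping straight the identification of queries with structures modulo isomorphism, so that ``$v = q$'' and ``$v \cong q$'' are not treated as different statements.
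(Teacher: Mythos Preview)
Your proof is correct and is precisely the argument the paper has in mind: the paper does not spell out a proof of \cref{wnioseczek} beyond saying it ``easily follows from Lemma~\ref{main-lemma-3}'', and your proposal is exactly the natural way to cash this out --- connected queries become standard basis vectors in $W$, so the span condition in the Main Lemma degenerates to $q \in V$, which is $q \in V_0$ since $q \subseteq_\sett q$.
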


\section{Proof of Lemma \ref{main-lemma-3} $(\Rightarrow) $. Part 1.}\label{sec-main-th-3}

In this section we assume that $\vec q \notin \linspan \{ \vec v \mid v \in V \}$. And we are going to
show that $V_0 \not\detbag q$. To this end we need to find a pair of structures $D$ and $D'$ which is a counterexample for determinacy,
which means that:
\begin{enumerate}
    \item[(A)] $q(D) \neq q(D')$
    \item[(B)] $\forall v \in V \;\; v(D) = v(D')$ \hspace{5mm} 
    (B0) $\forall v \in V_0 \setminus V \;\; v(D) = v(D')$
\end{enumerate}

Notice that there is nothing in Definition \ref{pierwsza} that would tell us where to look for such a counterexample: $D$ and $D'$ are just {\bf any} structures 
in this definition. Our main discovery is that if such  $D$ and $D'$, forming a counterexample, can be found at all, then a
counterexample
 can also be found  in some  $k$-dimensional\footnote{Recall, that $k$ denotes, as always, the cardinality of $W$, the set of basis queries.} vector space that we are now going to introduce. And this is convenient, because  living in  a vector space one can use linear algebra tools.

\begin{definition}
For any set $S$ of $k$ structures (call them \emph{basis structures})
 let $\Ss$ be the set of all structures which can be  represented as sums of elements of $S$, 
that is $\Ss = \linspan^\N(S)$.
\end{definition}

Now, the totally informal idea is as follows. We know that $\vec q \notin \linspan \{ \vec v \mid v \in V \}$.
So there is vector $\vec z$ which is orthogonal to  $\vec v$ for each $v\in V$ but not to $\vec q$. Let us 
{\em somehow} define $D$ and $D'$ in such a way that  $\vec z$ is {\em ``the difference''} between $D$ and $D'$.
Then none of the  $v\in V$ will spot the difference between  $D$ and $D'$ but $q$ will.

\begin{definition}
Set $S$ of basis structures is \emph{decent} if for each $s \in S$ and for each $v \in V_0 \setminus V$ we have $ v(s) = 0$.
\end{definition}

It is easy to see that: 

\begin{observation}
If $S$ is \emph{decent}, then for each $D\in \Ss$ and for each  $v \in V_0 \setminus V$ we have $v(D)=0$. In consequence, 
if $S$ is \emph{decent}, then any pair $D, D'$ of structures from $\Ss$  satisfies condition (B0) above.
\end{observation}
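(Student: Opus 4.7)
The plan is to expand $v(D)$ as a product of homomorphism counts and use decency to kill one of the factors. Fix $D \in \Ss$ and $v \in V_0 \setminus V$. Since $D \in \linspan^\N(S)$, we may write $D = \sum_{s \in S} n_s \cdot s$ with $n_s \in \N$; and we decompose the frozen body of $v$ into its connected components, $v = \sum_i u_i$. Applying \cref{lem:hom-properties}(5) gives $v(D) = \prod_i \nhoms{u_i, D}$, and then \cref{lem:hom-properties}(1) applied to each (connected) $u_i$ rewrites every factor as $\sum_{s \in S} n_s \cdot \nhoms{u_i, s}$. In total,
\[
v(D) \;=\; \prod_i \sum_{s \in S} n_s \cdot \nhoms{u_i, s}.
\]

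Decency supplies, for every $s \in S$, the vanishing $v(s) = \prod_i \nhoms{u_i, s} = 0$, so to each $s$ is attached at least one index $i(s)$ with $\nhoms{u_{i(s)}, s} = 0$. The step that needs care is to select a single index $i^{*}$ that works against every $s \in S$ simultaneously: once such a uniform $i^{*}$ is identified, the factor indexed by $i^{*}$ in the display above is $\sum_s n_s \cdot 0 = 0$, and $v(D) = 0$ follows. I expect this uniform-choice step to be the main obstacle, and I would approach it by exploiting the defining property of $V_0 \setminus V$ --- namely, that $q \not\subseteq_\sett v$ means some fixed connected component $u_{i^{*}}$ of $v$ already has $\nhoms{u_{i^{*}}, q} = 0$ --- and by invoking the specific form of the basis $S$ built later in the paper to propagate this obstruction uniformly from $q$ to every $s \in S$.

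Once the first claim is established, the consequence about condition (B0) is immediate: for any pair $D, D' \in \Ss$ and any $v \in V_0 \setminus V$, both sides of the desired equality $v(D) = v(D')$ are already known to be zero, so the equality holds vacuously, which is exactly the content of condition (B0).
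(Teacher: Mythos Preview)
Your suspicion is well-founded: a uniform index $i^{*}$ cannot be chosen from decency alone, and in fact the observation as stated fails for arbitrary decent $S$. Take the schema $\{A,B,C,E\}$ of unary symbols, $q=\exists x,y\;C(x)\wedge E(y)$, and $V_0=\{v\}$ with $v=\exists x,y\;A(x)\wedge B(y)$. Then $\homs{v,q}=\emptyset$, so $V=\emptyset$, $V_0\setminus V=\{v\}$, and $W$ consists of the two components of $q$, giving $k=2$. Now let $S=\{s_1,s_2\}$ with $s_1$ a single $B$-atom and $s_2$ a single $A$-atom. Since $s_1$ contains no $A$-atom and $s_2$ no $B$-atom, we have $v(s_1)=v(s_2)=0$, so $S$ is decent. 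But $D=s_1+s_2\in\Ss$ contains one $A$-atom and one $B$-atom, whence $v(D)=1\neq 0$. The paper supplies no argument beyond ``it is easy to see'', so this is a gap in the paper itself, not only in your attempt.

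Your proposed salvage---falling back on the concrete $S$ built in Step~4 of Section~\ref{good-s-proof}, where every basis structure has the form $s'\times q$---is correct, and it can be streamlined: since $\times$ distributes over disjoint union, every $D\in\Ss$ for that $S$ is itself of the form $D''\times q$, and then $v(D)=v(D'')\cdot v(q)=0$ directly from \cref{lem:hom-properties}(3) and the defining property $v(q)=0$ of $V_0\setminus V$. Thus the proof of the Main Lemma is unaffected, but the observation should be understood as a property of that particular good $S$ rather than of decent sets in general.
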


\begin{definition}
\JCom{Uogólniłem trochę definicję, żeby można było używać tego pojęcia do $S^{(1)}$ w dowodzie \cref{goodS}}
For a set of structures $S = \{s_1, ..., s_{|S|}\}$ we define its \emph{evaluation matrix} $M_S \in \R^{k \times |S|}$ by the formula $M_S(i, j) = w_i(s_j)$.
\end{definition}

In other words, the $(i,j)$-entry of  $M_S$ is defined as the number of homomorphisms from $w_i$ to $s_j$.

\begin{definition}
$S$ is {\em good} when
 $S$ is decent and
 $M_S$ is nonsingular.

\end{definition}

Recall that the  set $W$, consisting of $k$ queries, is also a set of $k$ structures. What would happen if we just took $W$ as our set $S$ of basis structures? $W$ is of course always decent: if there were $w \in W, v \in V_0 \setminus V$ such that $v(w) > 0$, then, because $w(q) > 0$, we would get $v(q) > 0$. But $M_W$ is not always nonsingular:
\begin{SCfigure}[80]
    \includegraphics[scale=0.55]{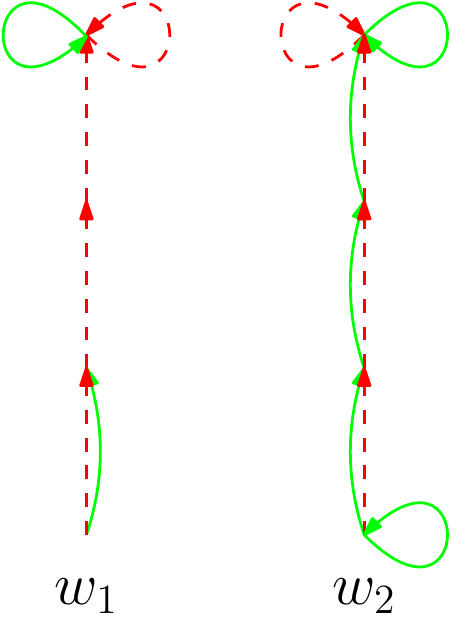}
    \caption{(\cref{ex:singular_M_example}) Here, $M_W$ is singular ($w_1$ and $w_2$ are structures over a schema 
    consisting of two binary relations, with $w_2$ having three additional green edges, compared to $w_1$).}
    \label{fig:singular_M_example}
\end{SCfigure}

    \begin{example}\label{ex:singular_M_example}
    Let $W$ consist of $w_1$ and $w_2$ as in~\cref{fig:singular_M_example}, then:
    $$M_W = \; \kbordermatrix{ & w_1 & w_2 \cr
          w_1 & 2 & 4 \cr
          w_2 & 1 & 2 }$$
          
    where $M_W(i,j) = \nhoms{w_i, w_j}$.
    \end{example}

Now, {\bf \small proof of \cref{main-lemma-3} will be completed once we  prove the following two lemmas}:

\begin{lemma}\label{goodS}
There exists a good set $S$ of basis structures.
\end{lemma}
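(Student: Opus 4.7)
The plan is to build $S$ from cartesian products of the basis queries $w_1,\ldots,w_k$, exploiting the following monotonicity: for any structure $G$ and any decent structure $D$, the product $G\times D$ is again decent, since \cref{lem:hom-properties}(3) gives $v(G\times D)=v(G)\cdot v(D)$ and hence vanishes whenever $v(D)=0$. So all that is needed is a supply of ``seed'' decent structures to multiply, and the natural seeds are the $w_i$'s themselves. First I would verify that each $w_i$ is decent: if some $v\in V_0\setminus V$ admitted a homomorphism into $w_i$, then composing it with the inclusion of $w_i$ into whichever $v^*\in V'$ contains $w_i$ as a connected component (and, if $v^*\in V$, with a homomorphism $v^*\to q$) would yield $v(q)>0$, contradicting $v\notin V$. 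Consequently every structure of the form $P(G,\vec t)=G\times w_1^{t_1}\times w_2^{t_2}\times\cdots\times w_k^{t_k}$ with $\vec t\in\N^k\setminus\{\vec 0\}$ is decent, and by items~3--4 of the same lemma its hom vector has $i$-th coordinate $w_i(G)\prod_{j=1}^k a_{ij}^{t_j}$, where $a_{ij}=w_i(w_j)$.

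The central claim will be that the hom vectors $\vec v(P(G,\vec t))$ span $\R^k$; once this is shown, picking $k$ of them with linearly independent vectors and placing the underlying structures into $S$ yields a good set. To prove the spanning claim I would proceed by contradiction: assume some nonzero $\vec c\in\Q^k$ is orthogonal to every such vector, so
\[
    \sum_{i=1}^k c_i\, w_i(G)\prod_{j=1}^k a_{ij}^{t_j}=0 \qquad \text{for every }G\text{ and every }\vec t\in\N^k\setminus\{\vec 0\}.
\]
Fixing $G$ and letting $\vec t$ vary, the standard linear independence of multivariate monomials $\vec t\mapsto \prod_j r_j^{t_j}$ for distinct positive tuples $\vec r$ (a Vandermonde-type induction on $k$, with care around zero entries using the paper's $0^0=1$ convention) forces the sum $\sum_i c_i w_i(G)$, suitably grouped by equivalence classes of identical rows of $M_W$, to satisfy a linear system that ultimately pushes $\vec c$ into a small subspace. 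Letting $G$ then vary and appealing to the main result of \cite{lovasz}---that the functions $w_i(\cdot)$ for pairwise non-isomorphic connected $w_i$ are linearly independent on the class of all finite structures---forces $\vec c=\vec 0$, contradicting the assumption.

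The hard part will be the Vandermonde step when the rows of $M_W$ are \emph{not} all distinct, i.e.\ when some pairs $w_i,w_{i'}$ agree on hom counts into every member of $W$ (as in \cref{ex:singular_M_example}). Then the naive linear independence of exponentials yields only $\sum_{i\in I}c_i w_i(G)=0$ summed over each equivalence class $I$ of identical rows, not $c_i w_i(G)=0$ termwise; so to close the gap one must apply \cite{lovasz} separately to each nontrivial partial-sum function $G\mapsto\sum_{i\in I}c_i w_i(G)$, which is itself a nontrivial $\Q$-linear combination of hom-functions and hence identically zero only when all its coefficients vanish. This is where the technical weight of the proof lies, and where the coupling between Lovász's theorem, the ``product with a decent structure'' trick, and the multivariate Vandermonde bookkeeping must be carefully orchestrated.
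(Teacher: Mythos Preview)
Your approach is correct but takes a genuinely different route from the paper's. The paper builds a good $S$ explicitly in four steps: it first invokes the co-Lov\'asz \cref{co-lovasz} only in its weak ``pairwise separation'' form to obtain finitely many test structures $s_i^{(1)}$; then merges them into a single structure $s^{(2)}$ by a base-$T$ weighted disjoint union so that the numbers $w_i(s^{(2)})$ become pairwise distinct; then takes powers $(s^{(2)})^{0},\ldots,(s^{(2)})^{k-1}$, turning the evaluation matrix into a one-variable Vandermonde matrix; and finally multiplies every structure by $q$ to enforce decency.

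You, by contrast, argue non-constructively that the hom-vectors of the decent structures $G\times\prod_j w_j^{t_j}$ already span $\R^k$, combining a multivariate exponential-independence argument in $\vec t$ with the linear independence of the hom-functions $G\mapsto|hom(w_i,G)|$ in $G$. This is more conceptual and sidesteps the base-$T$ trick, but it requires a strictly stronger input than the paper uses: you need the full \emph{linear independence} of hom-functions from \cite{lovasz}, whereas the paper gets by with the mere injectivity statement of \cref{co-lovasz}. Your multivariate Vandermonde step (distinct nonzero tuples $\vec a_i\in\N^k$ give linearly independent functions $\vec t\mapsto\prod_j a_{ij}^{t_j}$ on $\N^k\setminus\{\vec 0\}$) is correct and can be proved by the induction you sketch, using that no row $\vec a_i$ is zero since $a_{ii}\geq 1$; the grouping by equal rows then falls out automatically, and the final appeal to Lov\'asz closes the argument as you describe. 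The paper's approach buys an explicit $S$ and a lighter dependence on \cite{lovasz}; yours buys a cleaner conceptual picture at the price of heavier bookkeeping around zero entries and coinciding rows.
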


\begin{lemma}\label{usinggoodS}
If $\vec q \notin \linspan \{ \vec v \mid v \in V \}$ and $S$ is a good set of basis structures then there exist 
$D,D'\in \Ss$  satisfying conditions $(A)$ and $(B)$ above. 
\end{lemma}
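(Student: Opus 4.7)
The plan is to find $D, D' \in \Ss$ of the form $D = \sum_j c_j s_j$ and $D' = \sum_j c'_j s_j$ so that the vectors $\vec{w}(D) = (w_1(D),\ldots,w_k(D)) = M_S \vec c$ and $\vec{w}(D') = M_S \vec{c'}$ differ by a carefully chosen \emph{multiplicative} factor that annihilates every query in $V$ but not $q$. By \cref{fact:orthogonal-vector}, since $\vec{q} \notin \linspan\{\vec{v} : v \in V\}$ there is $\vec{z} \in \Q^k$ orthogonal to each $\vec{v}$ ($v \in V$) but with $\dotprod{\vec{z}}{\vec{q}} \neq 0$; after clearing denominators we may assume $\vec{z} \in \Z^k$.

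Since $M_S$ is nonsingular, \cref{cor:interior} supplies a rational point $\vec{p} \in M_S(\R_{\geq 0}^k) \cap \Q^k$ together with some $r > 0$ such that every $\vec{x}$ with $\|\vec{x} - \vec{p}\| < r$ lies in $M_S(\R_{\geq 0}^k)$. Because $M_S$ has non-negative integer entries and no zero row (a zero row would contradict nonsingularity), $\vec p$ itself has strictly positive coordinates. Pick a rational $\alpha > 1$ close enough to $1$ that the componentwise product $\vec{p}' = (\alpha^{z_1} p_1, \ldots, \alpha^{z_k} p_k)$ still satisfies $\|\vec{p}' - \vec{p}\| < r$. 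Both $M_S^{-1}\vec{p}$ and $M_S^{-1}\vec{p}'$ then lie in $\Q_{>0}^k$; choose $N \in \N$ clearing all their denominators and set $\vec{c} = N M_S^{-1}\vec{p}'$ and $\vec{c'} = N M_S^{-1}\vec{p}$, both in $\N^k$. Take $D = \sum_j c_j s_j$ and $D' = \sum_j c'_j s_j$.

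By construction $w_i(D) = N\alpha^{z_i} p_i$ and $w_i(D') = N p_i$, so by \cref{obs:query-eval},
\[ \frac{u(D)}{u(D')} \;=\; \prod_{i=1}^k \left(\frac{w_i(D)}{w_i(D')}\right)^{\vec{u}(i)} \;=\; \alpha^{\dotprod{\vec{z}}{\vec{u}}} \]
for every $u \in V'$. For $v \in V$ this ratio equals $1$ (as $\dotprod{\vec{z}}{\vec{v}} = 0$), which gives condition~(B); for $u = q$ the exponent is nonzero and $\alpha \neq 1$, which gives~(A). Condition~(B0) follows from the decency of $S$ via the observation that $v(D) = v(D') = 0$ for every $v \in V_0 \setminus V$.

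The technically sensitive point is the construction of $\vec{c}, \vec{c'}$: I must convert a continuous, multiplicative perturbation $\vec{p} \mapsto (\alpha^{z_1},\ldots,\alpha^{z_k})\cdot\vec{p}$ into honest natural-number structures. The rational interior point supplied by \cref{cor:interior} is precisely the slack that lets a small rational perturbation stay inside $M_S(\R_{\geq 0}^k)$, and integer scaling by $N$ converts the rational preimages under $M_S^{-1}$ back into valid $\vec{c}, \vec{c'} \in \N^k$. The elegance of the setup is that after taking logarithms, the linear orthogonality $\dotprod{\vec{z}}{\vec{v}} = 0$ translates exactly into the multiplicative identity $v(D) = v(D')$, while $\dotprod{\vec{z}}{\vec{q}} \neq 0$ prevents the same cancellation for $q$.
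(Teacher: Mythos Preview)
Your argument is correct and follows essentially the same route as the paper's proof: pick $\vec z\in\Z^k$ orthogonal to every $\vec v$ but not to $\vec q$, take a rational interior point $\vec p$ of $M_S(\R_{\geq 0}^k)$ via \cref{cor:interior}, perturb it multiplicatively by $\alpha^{\vec z}$ for a rational $\alpha$ close to~$1$, and scale back into $\N^k$. Your single scaling factor $N$ is a slightly cleaner bookkeeping than the paper's separate $c,c'$ followed by $cc'$, and your explicit observation that $\vec p$ has strictly positive coordinates (needed for the ratio $u(D)/u(D')$ to make sense) is a detail the paper glosses over; the justification you give for it is a bit loose (the ``no zero row'' clause is irrelevant), but the conclusion is right since the open ball around $\vec p$ lies inside $\R_{\geq 0}^k$.
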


{\bf \small For the proof of Lemma \ref{goodS} see Section \ref{good-s-proof}, and for the proof of Lemma \ref{usinggoodS} see 
Section \ref{usinggoodS-proof}}.
Notice that Lemma \ref{usinggoodS} would not be true without the assumption that $M_S$ is nonsingular:

  \begin{example}\label{ex:W_is_bad_S}

    Let $q = w_1$ and $V_0 = \set{w_2}$ ($w_1, w_2$ are still as in \cref{fig:singular_M_example}), so that 
    (according to Definition \ref{def-wu})  $W = \set{w_1, w_2}$. Also, since $w_1 \subseteq_{set} w_2$,
    we get that $V=V_0$.
    
    Since  $\vec q \notin \linspan \{ \vec v \mid v \in V \}$,  it follows from our Main Lemma that $ V\not\detbag q $. 
   So couldn't we just take $S=W$ (notice that, since $V=V_0$, such $S$ is trivially decent)  and  look for the 
    counterexample structures $D$ and $D'$ in $\Ss=\linspan^{\mathbb N}{(S)}$?
    
    This would be in vain. For any structure $\db \in \Ss$ the equality $\nhoms{w_1 = q, \db} = 2\cdot \nhoms{w_2, \db}$ holds.
    So for any pair of structures 
  $\db, \db'\in \Ss$ there will be
    $\;\;V_0(\db) = V_0(\db') \Rightarrow q(\db) = q(\db').$
    \end{example}
    
   Let us however reiterate: the above example does not contradict our Main Lemma. It only shows that $S = \set{w_1, w_2}$ is 
   not {\bf \small good} enough to serve as a basis for a counterexample pair $D,D'$. 

\section{Proof of Lemma \ref{goodS}}\label{good-s-proof}

Our proof relies on the following lemma  stated in \cite{CV93} and  proved, as \cite{AAKPW21} report, in the paper \cite{F95}, which is not easy to access.

\begin{lemma}\label{co-lovasz}
Two structures $G, G'$  are isomorphic if and only if $ |hom(G, H)| = |hom(G', H)|$ for every structure\footnote{We of course assume here that all the structures in question are over some fixed relational schema.} $H$.
\end{lemma}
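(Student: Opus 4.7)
The $(\Rightarrow)$ direction is immediate: an isomorphism $G \to G'$ induces a bijection $hom(G, H) \to hom(G', H)$ for every $H$. The work is in $(\Leftarrow)$, and my plan is to pass from homomorphism counts to \emph{injective} homomorphism counts via a Möbius-style inversion, and then extract $G \cong G'$ from the resulting identity.

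The starting point is the factorisation of every homomorphism through its kernel partition. Each $f \in hom(G, H)$ factors uniquely as the quotient $G \to G/\pi_f$ --- where $\pi_f$ identifies vertices sent to the same element of $H$ --- followed by an injective homomorphism $G/\pi_f \to H$. Grouping by the isomorphism class of the quotient yields the key identity
\[
|hom(G, H)| \;=\; \sum_{Q} c^G_Q \cdot |hom_{\mathrm{inj}}(Q, H)|,
\]
where $Q$ ranges over isomorphism classes of $\schema$-structures, $c^G_Q$ counts partitions $\pi$ of $dom(G)$ with $G/\pi \cong Q$, and $hom_{\mathrm{inj}}$ denotes the set of injective homomorphisms. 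The analogous identity holds for $G'$ with coefficients $c^{G'}_Q$.

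Next I argue that the family of functions $H \mapsto |hom_{\mathrm{inj}}(Q, H)|$, indexed by iso-classes $Q$, is linearly independent. Order iso-classes first by $|dom(Q)|$ and, within each size, by the number of atoms of $Q$ (both ascending). Then $|hom_{\mathrm{inj}}(Q, H)| = 0$ whenever $|dom(Q)| > |dom(H)|$, and also whenever $|dom(Q)| = |dom(H)|$ but $Q$ has strictly more atoms than $H$ (any bijective homomorphism sends distinct atoms to distinct atoms). Hence the infinite matrix $\bigl(|hom_{\mathrm{inj}}(Q, H)|\bigr)_{Q,H}$ is upper triangular under this order, with diagonal entries $|hom_{\mathrm{inj}}(Q, Q)| = |\mathrm{Aut}(Q)| \geq 1$. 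Therefore, for any finitely supported combination $\sum_Q a_Q |hom_{\mathrm{inj}}(Q, \cdot)| \equiv 0$, evaluating at the minimal (in our order) $Q_0$ with $a_{Q_0} \neq 0$ leaves only the diagonal term $a_{Q_0} \cdot |\mathrm{Aut}(Q_0)|$, a contradiction. Rows are thus linearly independent, and the hypothesis $|hom(G, H)| = |hom(G', H)|$ for all $H$ forces $c^G_Q = c^{G'}_Q$ for every $Q$.

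Finally I extract the isomorphism. The trivial partition is the unique partition of $dom(G)$ with quotient $\cong G$ (any non-trivial partition strictly reduces the vertex count), so $c^G_G = 1$. Hence $c^{G'}_G = 1 > 0$, meaning $dom(G')$ admits some partition with quotient $\cong G$, so $|dom(G)| \leq |dom(G')|$; by symmetry the two sizes are equal, forcing the witnessing partition of $dom(G')$ to be trivial and giving $G \cong G'$. The main obstacle in this plan is the linear-independence step: the secondary ordering by atom count must be set up carefully, and in the relational (not merely graph) setting one has to verify that injective homomorphisms between equally-sized $\schema$-structures really do force containment of atoms, a complication arising because quotients may collapse several source atoms to a single target atom.
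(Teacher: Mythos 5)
The paper does not actually prove this lemma: it attributes the statement to \cite{CV93} and \cite{F95} and remarks only that the proof is ``analogous'' to Lov\'asz's proof of the dual statement (the one with $|hom(H,G)|=|hom(H,G')|$), so you are being compared against a citation rather than an argument. Your proof is, in substance, the standard argument for this ``left-hand'' version of the Lov\'asz isomorphism theorem, and it works: the factorisation of each homomorphism through its kernel partition gives $|hom(G,H)|=\sum_Q c^G_Q\,|hom_{\mathrm{inj}}(Q,H)|$; the triangularity of the matrix $\bigl(|hom_{\mathrm{inj}}(Q,H)|\bigr)$ gives linear independence and hence $c^G_Q=c^{G'}_Q$ for all $Q$; and the observation $c^G_G=1$ finishes the job (a nice alternative to the more common route of inverting to get $|hom_{\mathrm{inj}}(G,H)|=|hom_{\mathrm{inj}}(G',H)|$ for all $H$ and then plugging in $H=G'$). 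Two points deserve one more line each. First, your order on iso-classes has ties: two non-isomorphic $Q,Q_0$ can have the same domain size and the same number of atoms, and for ``only the diagonal term survives'' you also need $|hom_{\mathrm{inj}}(Q,Q_0)|=0$ in that case; this holds because an injective homomorphism between equally-sized structures is bijective and maps atoms injectively, so with equal atom counts it maps them bijectively and its inverse is again a homomorphism, i.e.\ it is an isomorphism, contradicting $Q\not\cong Q_0$. Second, the same observation is what justifies $|hom_{\mathrm{inj}}(Q,Q)|=|\mathrm{Aut}(Q)|$ (a bijective endomorphism of a finite relational structure is an automorphism), which you use for the non-vanishing of the diagonal. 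With those two sentences added your argument is complete and self-contained, which is more than the paper itself offers for this lemma.
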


However, the proof of Lemma \ref{co-lovasz} is analogous to the proof of:

\begin{lemma}
Two structures $G, G'$  are isomorphic if and only if $ |hom(H, G)| = |hom(H, G')|$ for every structure $H$.
\end{lemma}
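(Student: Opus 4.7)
The $(\Rightarrow)$ direction is immediate: any isomorphism $\sigma\colon G\to G'$ induces a bijection $\mathrm{hom}(H,G)\to\mathrm{hom}(H,G')$ by $f\mapsto\sigma\circ f$. The content is the converse, which is the classical Lovász theorem adapted to arbitrary relational structures. My plan is to prove it by two successive Möbius inversions that transform the assumed equality of homomorphism counts first into an equality of injective homomorphism counts, and then into an equality of induced-embedding counts.

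First, let $\mathrm{inj}(H,G)$ denote the number of injective homomorphisms from $H$ to $G$. Every homomorphism $f\colon H\to G$ factors uniquely as the quotient map $H\to H/\pi$ onto the structure obtained by collapsing the fibres of $f$, followed by an injective homomorphism $H/\pi\to G$. Summing over partitions $\pi$ of $V(H)$ yields
\[
\mathrm{hom}(H,G)\;=\;\sum_{\pi}\mathrm{inj}(H/\pi,G),
\]
so a single Möbius inversion over the partition lattice of $V(H)$ expresses $\mathrm{inj}(H,G)$ as a fixed integer linear combination of $\mathrm{hom}(H',G)$ for various quotients $H'$ of $H$, with coefficients depending only on $H$. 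Consequently, the hypothesis transports to $\mathrm{inj}(H,G)=\mathrm{inj}(H,G')$ for every $H$.

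Taking $H=G$, the identity shows $\mathrm{inj}(G,G)\ge 1$, hence $\mathrm{inj}(G,G')\ge 1$: there is an injective homomorphism $G\to G'$. Taking $H$ to be a single vertex with no facts gives $|V(G)|=|V(G')|$, so this injective homomorphism is a bijection on underlying sets. However, a bijective homomorphism of relational structures need not be an isomorphism, because $G'$ might satisfy strictly more facts than the image of $G$. To close this gap I would perform a second Möbius inversion, this time on the finite lattice of structures on a fixed vertex set ordered by fact-set inclusion. Defining $\mathrm{ind}(H,G)$ to be the number of injections $V(H)\to V(G)$ whose image is an \emph{induced} copy of $H$, one has
\[
\mathrm{inj}(H,G)\;=\;\sum_{H'\supseteq H}\mathrm{ind}(H',G),
\]
where $H'$ ranges over structures on $V(H)$ whose fact-set contains that of $H$; inverting writes $\mathrm{ind}$ as an integer combination of $\mathrm{inj}$'s. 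Evaluating at $H=G$ gives $\mathrm{ind}(G,G')=\mathrm{ind}(G,G)\ge 1$ (the identity is an induced self-embedding), and combined with $|V(G)|=|V(G')|$ this yields an induced copy of $G$ filling all of $V(G')$, i.e.\ an isomorphism $G\cong G'$.

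The main obstacle is checking the second Möbius inversion carefully: one must verify that the ``superstructure'' lattice on $V(H)$ is finite and well-behaved, and that the quotient construction in the first inversion carries precisely the facts induced by the identifications (no more and no less), so that the factorisation of a homomorphism through its kernel is genuinely bijective. Both are routine for finite relational schemas, but these are the bookkeeping steps where a sloppy argument could break when passing from ordinary graphs to higher-arity structures.
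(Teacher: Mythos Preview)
The paper does not prove this lemma; it attributes the result to Lov\'asz and explicitly skips the argument. Your double M\"obius inversion---first from $\mathrm{hom}$ to $\mathrm{inj}$ via the kernel partition, then from $\mathrm{inj}$ to $\mathrm{ind}$ via the lattice of superstructures on $V(H)$---is precisely the classical Lov\'asz proof adapted to arbitrary finite relational signatures, and it is correct; the finiteness and quotient-construction caveats you flag are indeed routine for finite schemas.
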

in the paper \cite{lovasz}, so we think we can skip it here.\\\vspace{-1.5mm}

A {\bf good $S$ will now be constructed} in four steps:\\\vspace{-1mm}

\noindent
\textbf{Step 1.} $S^{(1)} = \{ s_1^{(1)}, ..., s_m^{(1)} \}$ can be any finite set such that
$$ \forall w \neq w'\in W ~ \exists i \leq m \;\;\;  |hom(w, s_i^{(1)})| \neq |hom(w',s_i^{(1)})| $$
Such $S^{(1)}$ can be found thanks to Lemma \ref{co-lovasz}. Indeed, because elements of $W$ are pairwise non-isomorphic, for any $w \neq w'$ there is a structure $H$ such that $w(H) \neq w'(H)$ - it is enough to take one such structure for every pair $w \neq w' \in W$.

In the Steps 2-4 we  construct a good $S$ from $S^{(1)}$ using addition and multiplication of structures. And, by Lemma \ref{lem:hom-properties}, addition and multiplication of structures correspond to addition and multiplication (elementwise) of columns of the evaluation matrix. So this part is more about linear algebra  than about homomorphism counting.\\

\noindent
\textbf{Step 2.} Let $T\in \mathbb N$ be  greater than any element of the  matrix $M_{S^{(1)}}$. Then the set  $S^{(2)}$ consists of a single 
structure $ s^{(2)} $ where:\vspace{-1mm}
$$ s^{(2)} = \sum_{i=1}^m T^i s_i^{(1)} $$
~
\vspace{-4mm}
\begin{observation}\label{obs-krok-1}
\sloppy Suppose $w,w' \in W$ and $w \neq w'$. Then $|hom(w,s^{(2)})|  \neq |hom(w',s^{(2)})|$.
\end{observation}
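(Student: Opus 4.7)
The plan is to unfold $|\hom(w, s^{(2)})|$ using the structural identities of \cref{lem:hom-properties} and then reduce the desired inequality to a standard base-$T$ representation argument.

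First, I would observe that every $w \in W$ is connected, since $W$ consists of connected components of $\sum_{v \in V'} v$. Hence parts (1) and (2) of \cref{lem:hom-properties} apply to $w$, giving
\[
|\hom(w, s^{(2)})| \;=\; \Bigl|\hom\!\Bigl(w, \sum_{i=1}^{m} T^{i} s_i^{(1)}\Bigr)\Bigr| \;=\; \sum_{i=1}^{m} T^{i}\, |\hom(w, s_i^{(1)})|,
\]
and the same identity with $w'$ in place of $w$. Subtracting, the question becomes whether the integer
\[
\Delta \;:=\; \sum_{i=1}^{m} c_i\, T^{i}, \qquad c_i \;:=\; |\hom(w, s_i^{(1)})| - |\hom(w', s_i^{(1)})|,
\]
is nonzero.

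Second, I would invoke the defining property of $S^{(1)}$ from Step 1 to conclude that at least one coefficient $c_i$ is nonzero; together with the choice of $T$ (strictly larger than every entry of $M_{S^{(1)}}$), each $c_i$ satisfies $|c_i| < T$. Now let $j$ be the largest index with $c_j \neq 0$. Then the tail satisfies
\[
\Bigl|\sum_{i=1}^{j-1} c_i T^{i}\Bigr| \;\le\; (T-1)\sum_{i=1}^{j-1} T^{i} \;=\; T^{j} - T \;<\; T^{j} \;\le\; |c_j|\, T^{j},
\]
so the leading term $c_j T^j$ cannot be cancelled by the lower-order terms, and hence $\Delta \neq 0$. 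This is just the usual ``unique base-$T$ representation'' observation, and it is the only nontrivial step.

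The main (and really the only) obstacle is choosing $T$ large enough to make the base-$T$ argument go through; since the $c_i$ are differences of nonnegative entries of $M_{S^{(1)}}$, the bound $T > \max M_{S^{(1)}}(i,j)$ built into Step 2 is precisely what is needed. No further combinatorial input about $w$ and $w'$ is required beyond the separation property established in Step 1.
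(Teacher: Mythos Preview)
Your argument is correct and is essentially the paper's own proof: expand $|\hom(w,s^{(2)})|$ via \cref{lem:hom-properties} into $\sum_i T^i\,|\hom(w,s_i^{(1)})|$ and then invoke uniqueness of base-$T$ representation. Your version is in fact slightly more careful---you make explicit that $w$ is connected (needed to apply \cref{lem:hom-properties}(1)(2)) and you spell out the leading-term-dominates estimate rather than just citing base-$T$ uniqueness---but the route is the same.
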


For the proof of Observation \ref{obs-krok-1} see Appendix D.

\noindent
\textbf{Step 3.} Let now $S^{(3)} = \{ s_1^{(3)}, ..., s_k^{(3)} \}$ be a set of $k$ structures, 
where $s_i^{(3)} = \left(s^{(2)} \right)^{i-1}$. We are going to prove that the  matrix $M_{S^{(3)}}$ is nonsingular. 
Recall that $M_{S^{(3)}}(i, j) = |hom(w_i,s_j^{(3)})|$. 
Notice that: $$ |hom(w_i,s_j^{(3)})| = |hom(w_i,\left( s^{(2)} \right)^{j-1})| = |hom(w_i,s^{(2)})|^{j-1}$$

Then use the following lemma, which is proven in Appendix D:

\begin{lemma}\label{ilepierwiastkow}
Let $a_1, ..., a_k$ be pairwise-distinct real numbers. Then the matrix $A \in \R^{k \times k}$ defined as $ A(i, j) = a_i^{j-1}$ is nonsingular.
\end{lemma}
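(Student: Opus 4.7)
The matrix $A$ is the classical Vandermonde matrix (up to transposition), and the statement asserts its well-known nonsingularity when the $a_i$ are distinct. The plan is to avoid invoking the closed-form determinant $\prod_{i<j}(a_j - a_i)$ and instead give a short polynomial argument, which is self-contained and arguably cleaner.

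Concretely, I would argue by contradiction. Suppose $A$ is singular; then its columns are linearly dependent, so there exist coefficients $c_0, c_1, \ldots, c_{k-1} \in \R$, not all zero, such that $\sum_{j=0}^{k-1} c_j \, a_i^{j} = 0$ for every $i \in \{1, \ldots, k\}$. Interpreting these coefficients as a polynomial $p(x) = c_0 + c_1 x + \cdots + c_{k-1} x^{k-1}$ of degree at most $k-1$, the displayed equations say exactly that $p$ vanishes at the $k$ pairwise-distinct points $a_1, \ldots, a_k$. Since a nonzero polynomial of degree at most $k-1$ has at most $k-1$ roots, $p$ must be identically zero, forcing $c_0 = \cdots = c_{k-1} = 0$, contradicting our choice of coefficients.

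The only step requiring even momentary care is the translation between linear dependence of the columns of $A$ and vanishing of a polynomial of bounded degree at the points $a_1, \ldots, a_k$; once that observation is made, the remainder is just the elementary fact that a polynomial of degree at most $k-1$ is determined by its values at $k$ distinct points. I therefore do not expect any genuine obstacle, and the proof fits comfortably in a single short paragraph in the final write-up.
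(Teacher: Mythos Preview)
Your proposal is correct and essentially identical to the paper's own proof: both interpret a linear dependence among the columns as the coefficients of a polynomial of degree at most $k-1$ that vanishes at the $k$ distinct points $a_1,\ldots,a_k$, and conclude from the root bound that the polynomial (and hence the dependence) is trivial. The only cosmetic difference is that the paper phrases it directly (any linear combination yielding zero must be trivial) while you phrase it by contradiction.
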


\noindent
\textbf{Step 4.} Now, $S^{(3)}$ is almost good. Almost, because we are still not sure if it is \emph{decent}. 
So let $S^{(4)} = \{ s_1^{(4)}, ..., s_k^{(4)} \}$ where $s_i^{(4)} = s_i^{(3)} \times q$. Observe that $M_{S^{(4)}}$ is just $M_{S^{(3)}}$ where $i$-th row has been multiplied by $w_i(q)$. However we know that $w_i$ is a subquery of some query $v \in V \cup \{ q \}$ - such a $v$ satisfies $v(q) > 0$. Therefore, $w_i(q) > 0$ and it is well-known that multiplying a matrix row by a non-zero factor doesn't affect its (non)singularity.

Let's observe that $S^{(4)}$ is \emph{decent}, that is, $\forall v \in V_0 \setminus V, s \in S^{(4)} ~ v(s) = 0$. Indeed, any $s \in S$ is of form $s = s' \times q$, so for any $v \in V_0 \setminus V$ we have $v(s) = v(s')v(q)$ and, by definition of $V$, $v(q) = 0$.


To sum up, we have found a good set of basis structures $S^{(4)}$. 
{\bf From now on,} we put $S = S^{(4)}$ and $ s_i = s_i^{(4)}$ for $i \in \range{k}$. We will also write $M$ instead of $M_S$.

\section{Proof of Lemma \ref{usinggoodS}}\label{usinggoodS-proof}

First some formulae which we will need:

\begin{definition}[Vector representation of a structure $s \in \Ss$]
\hspace{1mm}
For $s = \sum_{i=1}^k a_i s_i$ we define $\vec s =  \begin{bmatrix} a_1 \\ \vdots \\ a_k \end{bmatrix}$
\end{definition}

\begin{definition}
\hspace{1mm}
\begin{enumerate}
    \setlength\itemsep{0.8em}
    \item Let $\vec u, \vec v \in \R^k$. Then
    $\vec u \circ \vec v = \begin{bmatrix} \vec u(1)\vec v(1) \\ \vdots \\ \vec u(k)\vec v(k) \end{bmatrix}$
    
    \item Let $\vec u, \vec v \in \R_{\geq 0}^k$. Then 
    $\vecpow{\vec u}{\vec v} = \prod_{i=1}^k \vec u(i)^{\vec v(i)}$. 
    
    \item Let $t \in \R_+, \vec u \in \R^k$. Then
    $t^{\vec u} = \begin{bmatrix} t^{\vec u(1)} \\ \vdots \\ t^{\vec u(k)} \end{bmatrix}$
\end{enumerate}
\end{definition}
\begin{observation}\label{obs:vecpow}
\begin{enumerate}
    \item $ \vecpow{(\vec u \circ \vec v)}{\vec w} = (\vecpow{\vec u}{\vec w})(\vecpow{\vec v}{\vec w}) $ 
    \item $ \vecpow{t^{\vec u}}{\vec v} = t^{\dotprod{\vec u}{\vec v}} $
\end{enumerate}
\end{observation}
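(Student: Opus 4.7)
Both identities should follow directly from unfolding the three definitions and then rearranging finite products of non-negative reals. The plan is to compute the $i$-th coordinate of the inner object in each case and then pass it through the scalar definition $\vecpow{\cdot}{\cdot} = \prod_{i=1}^k (\cdot)^{(\cdot)}$.

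For identity (1), the $i$-th coordinate of $\vec u \circ \vec v$ is $\vec u(i)\vec v(i)$ by definition. Plugging into the scalar power gives
\[
\vecpow{(\vec u \circ \vec v)}{\vec w} \;=\; \prod_{i=1}^k \bigl(\vec u(i)\vec v(i)\bigr)^{\vec w(i)} \;=\; \prod_{i=1}^k \vec u(i)^{\vec w(i)} \cdot \prod_{i=1}^k \vec v(i)^{\vec w(i)} \;=\; \bigl(\vecpow{\vec u}{\vec w}\bigr)\bigl(\vecpow{\vec v}{\vec w}\bigr),
\]
where the middle equality splits the product of powers of a product. For identity (2), the $i$-th coordinate of $t^{\vec u}$ is $t^{\vec u(i)}$, so
\[
\vecpow{t^{\vec u}}{\vec v} \;=\; \prod_{i=1}^k \bigl(t^{\vec u(i)}\bigr)^{\vec v(i)} \;=\; \prod_{i=1}^k t^{\vec u(i)\vec v(i)} \;=\; t^{\sum_{i=1}^k \vec u(i)\vec v(i)} \;=\; t^{\dotprod{\vec u}{\vec v}}.
\]
The only conceivable obstacle is handling degenerate exponents; however the paper's explicit convention $0^0 = 1$ makes the first identity also hold when some $\vec u(i)$ or $\vec v(i)$ vanishes with $\vec w(i)=0$, and the hypothesis $\vec u, \vec v \in \R_{\geq 0}^k$ together with $t \in \R_+$ in part (2) ensures no undefined expressions arise. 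So the proof is essentially a one-line application of the power-of-a-product and power-of-a-power rules, applied coordinate-wise inside a finite product.
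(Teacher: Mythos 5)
Your proof is correct and is essentially identical to the paper's own: both unfold the coordinate-wise definitions and apply the power-of-a-product and power-of-a-power rules inside the finite product. Your extra remark about the $0^0=1$ convention is a sensible (if not strictly necessary) addition.
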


\begin{lemma}\label{lem:eval}
Let $v \in V, s \in \Ss$. Then $v(s) = \vecpow{(M \vec s)}{\vec v}$.
\end{lemma}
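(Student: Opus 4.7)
The plan is to unwind both sides of the claimed equality via the two natural decompositions available: on the left, $v$ is decomposed as a combination of basis queries $w_i$; on the right, $s$ is decomposed as a combination of basis structures $s_j$. These decompositions are tied together precisely by the entries of $M$, and the identity drops out.

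More concretely, I would first rewrite the left-hand side using Observation \ref{obs:query-eval}, which, since $v \in V \subseteq V'$, gives
\[
v(s) \;=\; \prod_{i=1}^k w_i(s)^{\vec v(i)}.
\]
The heart of the proof is therefore the single scalar identity
\[
w_i(s) \;=\; (M \vec s)(i) \qquad \text{for each } i \in \{1, \ldots, k\}.
\]
For this I would use that every $w_i \in W$ is, by \cref{def-wu}, a connected component of $\sum_{v \in V'} v$ and is hence a connected structure. Writing $s = \sum_{j=1}^k \vec s(j)\, s_j$ and applying parts (1) and (2) of \cref{lem:hom-properties} (both requiring connectedness of $w_i$), we get
\[
w_i(s) \;=\; \sum_{j=1}^k \vec s(j) \cdot w_i(s_j) \;=\; \sum_{j=1}^k M(i,j)\, \vec s(j) \;=\; (M\vec s)(i),
\]
where the middle equality is just the definition $M(i,j) = w_i(s_j)$ of the evaluation matrix.

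Plugging this back into the product over $i$ and applying the definition of $\vecpow{\cdot}{\cdot}$ finishes the proof:
\[
v(s) \;=\; \prod_{i=1}^k (M\vec s)(i)^{\vec v(i)} \;=\; \vecpow{(M\vec s)}{\vec v}.
\]
There is no real obstacle here; the only subtle point worth flagging is the use of connectedness of the $w_i$ to split a homomorphism count over a sum of structures linearly (parts (1)--(2) of \cref{lem:hom-properties} fail without it). This is exactly the reason $W$ was defined to be the set of connected components of $\sum_{v \in V'} v$ rather than, say, the queries $v$ themselves.
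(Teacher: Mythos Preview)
Your proof is correct and essentially identical to the paper's own argument: both establish the key scalar identity $w_i(s) = (M\vec s)(i)$ via the linearity of $\nhoms{w_i,\,\cdot\,}$ over disjoint unions (\cref{lem:hom-properties}(1)--(2)), then combine with the product formula for $v(s)$. The only cosmetic difference is direction---you start from $v(s)$ and unfold toward $\vecpow{(M\vec s)}{\vec v}$, while the paper starts from $(M\vec s)(i)$---and you make the use of connectedness of the $w_i$ explicit, which the paper leaves implicit.
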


For the proof of \cref{lem:eval} see Appendix D.

\subsection{The set $\Ps$ and the cone $\Cs$}

So far the objects of our interest in this proof lived in two $k$-dimensional vector spaces. One was 
the space of queries, with $W$ as the basis.  Another one was the space $\Ss$ of structures, with basis $S$, where 
we are looking for the candidates for $D$ and $D'$. 

But we also need the third such $k$-dimensional space. Imagine you take some structure $s\in \Ss$. And you ask
what will be the results of applying the $k$ queries from $W$ to $s$. What you get is a $k$-dimensional vector
of natural numbers, which lives in the space of all possible (and impossible) answer vectors.

\begin{definition}
 $\Ps = \{ M \vec s \mid s \in \Ss \} = \{ M \vec u \mid \vec u \in \N^k \}$
  \end{definition}
  
$\Ps$ is the subset of our new space consisting of the actual answer vectors, generated by real structures 
  from $\Ss$. A related notion is:

\begin{definition}
$\Cs = \linspan^{\R_{\geq 0}} \{ M \vec s \mid s \in S \} = \linspan^{\R_{\geq 0}}\{ M e_i \mid i \in \range{k} \}$. In other words, $\Cs$ is a convex cone generated by basis standard vectors multiplied by matrix $M$.
\end{definition}

The following easy observation shows that $\Ps$ is a subset of $\Cs$. A proper subset, since
only vectors of natural numbers can be in $\Ps$. And there is even no reason to think that $\Ps=\Cs\cap  \N^k$.

\begin{observation}[easy]
$ \Cs = M(\R_{\geq 0}^k) = \linspan^{\R_{\geq 0}} \{ M \vec s \mid s \in \Ss \} $
\end{observation}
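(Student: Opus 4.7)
The plan is to verify the two equalities in the observation by straightforward linear algebra, unpacking the definitions of $\Cs$, $\Ss$, and the cone $M(\R_{\geq 0}^k)$. There is no real obstacle here; the statement is essentially a restatement of standard facts about the image of a nonnegative orthant under a linear map.

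For the first equality $\Cs = M(\R_{\geq 0}^k)$, I would start from the definition $\Cs = \linspan^{\R_{\geq 0}}\{M e_i \mid i \in \range{k}\}$. For any $\vec x \in \R_{\geq 0}^k$ we may write $\vec x = \sum_{i=1}^k x(i) e_i$ with each $x(i) \geq 0$, and then $M \vec x = \sum_{i=1}^k x(i) (M e_i)$ is a nonnegative combination of the vectors $M e_i$, hence an element of $\Cs$. Conversely, every nonnegative combination $\sum_{i=1}^k \lambda_i (M e_i)$ with $\lambda_i \geq 0$ equals $M \vec \lambda$ for $\vec \lambda = (\lambda_1,\dots,\lambda_k) \in \R_{\geq 0}^k$.

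For the second equality $M(\R_{\geq 0}^k) = \linspan^{\R_{\geq 0}}\{M \vec s \mid s \in \Ss\}$, the inclusion $\supseteq$ is immediate: every $s \in \Ss$ has $\vec s \in \N^k \subseteq \R_{\geq 0}^k$, so each $M \vec s$ lies in the cone $M(\R_{\geq 0}^k)$, and this cone is closed under nonnegative linear combinations. For the inclusion $\subseteq$, I would observe that each basis structure $s_i \in S \subseteq \Ss$ has vector representation $\vec{s_i} = e_i$, so $\{M \vec s \mid s \in \Ss\}$ already contains $\{M e_i \mid i \in \range{k}\}$; therefore
\[
\linspan^{\R_{\geq 0}}\{M \vec s \mid s \in \Ss\} \supseteq \linspan^{\R_{\geq 0}}\{M e_i \mid i \in \range{k}\} = M(\R_{\geq 0}^k),
\]
where the last equality is the one just established.

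Chaining these two equalities gives $\Cs = M(\R_{\geq 0}^k) = \linspan^{\R_{\geq 0}}\{M \vec s \mid s \in \Ss\}$, as claimed. The only step that deserves a word of care is the identification $\vec{s_i} = e_i$, which simply records that the $s_i$ are by definition the chosen basis of $\Ss$; everything else is rewriting $M\vec x$ as a nonnegative combination of columns of $M$.
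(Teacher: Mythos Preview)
Your proof is correct. The paper itself does not prove this observation (it is labeled ``easy'' and left without argument), and your explicit unpacking of the definitions---writing $M\vec x$ as a nonnegative combination of the columns $Me_i$, and noting that each basis structure $s_i\in S\subseteq\Ss$ has $\vec{s_i}=e_i$---is exactly the routine verification the authors have in mind.
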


\begin{example}\label{ex:C-and-P}
Let $w_1, w_2$ be as in \cref{fig:singular_M_example}. 
Let $s_1$  be a single vertex, with red and green loops and let $ s_2 = w_2$  Then:\vspace{-2mm}
$$M_S = \; \kbordermatrix{ & s_1 & s_2 \cr
          w_1 & 1 & 4 \cr
          w_2 & 1 & 2 }$$

Then $\Cs$ and $\Ps$ are as in $\cref{fig:C-and-P}$. Notice that $M_S$ is now non-singular. This observation 
is not unrelated to the fact that the grey area in $\cref{fig:C-and-P}$ has non-empty interior.
\end{example}

\begin{SCfigure}
    \centering
    \begin{tikzpicture}[scale=0.5]
      \filldraw[thin,gray!20] (0,0)--(6,3)--(6,6)--cycle;
      \draw[thin,gray!40] grid (6,6);
      \filldraw [red!40] (0,0) circle (2pt);
      \filldraw [red!40] (4,2) circle (2pt);
      \filldraw [red!40] (1,1) circle (2pt);
      \filldraw [red!40] (2,2) circle (2pt);
      \filldraw [red!40] (3,3) circle (2pt);
      \filldraw [red!40] (4,4) circle (2pt);
      \filldraw [red!40] (5,5) circle (2pt);
      \filldraw [red!40] (6,6) circle (2pt);
      \filldraw [red!40] (5,3) circle (2pt);
      \filldraw [red!40] (6,4) circle (2pt);
      \draw[->] (0,0)--(6,0) node[right]{$x$};
      \draw[->] (0,0)--(0,6) node[above]{$y$};
      \draw[line width=1pt,-stealth](0,0)--(4,2);
      \draw[line width=1pt,-stealth](0,0)--(1,1);
    \end{tikzpicture}
    \caption{(\cref{ex:C-and-P})  The $x$-coordinate represents the answer to $w_1$, and the $y$-coordinate to $w_2$. Red dots correspond to $\Ps$ and the gray area to $\Cs$. The arrows represent column vectors of $M_S$.}
    \label{fig:C-and-P}
\end{SCfigure}
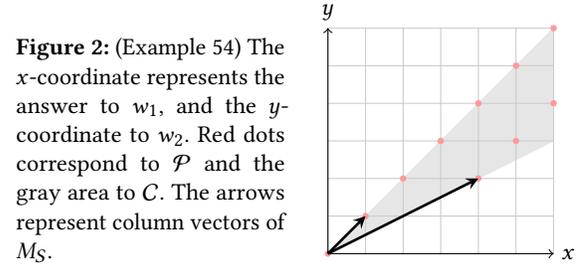

\subsection{It is here where things finally happen}

We spent several pages  pushing a rabbit into the hat. Now we are finally going to pull it out.
First, we notice that while not all the vectors in $\Cs \cap \Q^k$ are in $\Ps$, all of them are somehow related to $\Ps$:

\begin{lemma}\label{lem:mult-by-c}
Let $\vec u \in \Cs \cap \Q^k$. Then there exists $ c\in \N_+$ with $c \vec u \in \Ps$.
\end{lemma}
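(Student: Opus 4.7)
The plan is to unpack what it means for $\vec u$ to lie in $\Cs \cap \Q^k$ and then clear denominators. Since $\Cs = M(\R_{\geq 0}^k)$, there exists $\vec x \in \R_{\geq 0}^k$ with $M\vec x = \vec u$. Because $S$ is good, the matrix $M = M_S$ is nonsingular, so this $\vec x$ is unique: $\vec x = M^{-1}\vec u$.

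Next I would observe that $\vec x$ must actually be a vector of non-negative rationals. The entries of $M$ are homomorphism counts $w_i(s_j) \in \N$, so $M$ is an integer matrix, and hence $M^{-1}$ has rational entries (its entries are given by the cofactor formula divided by $\det M \in \Z \setminus \{0\}$). Since $\vec u \in \Q^k$, this forces $\vec x = M^{-1}\vec u \in \Q^k$, and combining with $\vec x \in \R_{\geq 0}^k$ gives $\vec x \in \Q_{\geq 0}^k$.

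Finally, take $c \in \N_+$ to be a common denominator of the $k$ coordinates of $\vec x$ (for instance, the product or LCM of the denominators of $\vec x(1),\dots,\vec x(k)$ written in lowest terms). Then $c\vec x \in \N^k$, so $c\vec u = M(c\vec x) \in \{M \vec y \mid \vec y \in \N^k\} = \Ps$, which is exactly what we want.

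There is no genuine obstacle here; the lemma is essentially just the observation that the image of $\N^k$ under the nonsingular integer matrix $M$ is ``dense up to scaling'' inside the rational points of the real cone $M(\R_{\geq 0}^k)$. The only things used are the nonsingularity of $M$ (guaranteed by goodness of $S$) and the integrality of its entries (automatic from the definition of the evaluation matrix).
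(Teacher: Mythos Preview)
Your proof is correct and follows essentially the same approach as the paper: compute $\vec x = M^{-1}\vec u$, observe that it lies in $\Q_{\geq 0}^k$, and clear denominators. If anything, your write-up is a bit more careful than the paper's in spelling out why $M^{-1}$ has rational entries.
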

\vspace{-3mm}
\begin{proof}
$M$ is nonsingular\footnote{Finally we are using this nonsingularity. But it is 
 the proof of Lemma \ref{lem:exist-p} where it is really fundamentally needed.} so there exists $M^{-1}\in \Q^{k \times k}$.
 Let $\vec \alpha = M^{-1} \vec u$. Since  $\vec u \in \Cs \cap \Q^k$ we have that $\vec \alpha \in \Cs \cap \Q^k$.
Since $\vec u \in \{ M \vec v \mid \vec v \in \R_{\geq 0}^k \}$ we know that
$\vec \alpha \in \R_{\geq 0}^k $.
So there is $c \in \N_+$ such that\footnote{Take a common multiple of all denominators of 
the coordinates of
$\vec \alpha$.}  $c \vec \alpha \in \N_+$. Now, $c \vec u = c (M \vec \alpha) = M (c \vec \alpha) \in \Ps$.
\end{proof}

\begin{lemma}\label{lem:exist-p}
There are $\vec p, \vec {p'} \in \Cs \cap \Q^k$ such that:

\hspace*{6mm}(1) $\forall v \in V ~ \vecpow{\vec p}{\vec v} = \vecpow{\vec {p'}}{\vec v}$ \hfill
  (2)  $\vecpow{\vec p}{\vec q} \neq \vecpow{\vec {p'}}{\vec q}$ \hspace*{6mm}

\end{lemma}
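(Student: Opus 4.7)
The plan is to build $\vec{p'}$ as a multiplicative perturbation of $\vec p$ of the form $\vec p \circ t^{\vec z}$, along a carefully chosen direction $\vec z \in \Q^k$ and for a rational $t > 0$ very close to (but different from) $1$. Since by hypothesis $\vec q \notin \linspan \{ \vec v \mid v \in V \}$, \cref{fact:orthogonal-vector} supplies $\vec z \in \Q^k$ that is orthogonal to $\vec v$ for every $v \in V$ but satisfies $\langle \vec z, \vec q \rangle \neq 0$. Then \cref{obs:vecpow} yields $\vecpow{\vec{p'}}{\vec u} = \vecpow{\vec p}{\vec u}\cdot t^{\langle \vec z, \vec u\rangle}$, which collapses to $\vecpow{\vec p}{\vec v}$ whenever $v\in V$, and which differs from $\vecpow{\vec p}{\vec q}$ as long as $t\neq 1$. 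So (1) and (2) follow immediately once legitimate $\vec p$ and $\vec{p'}$ are produced.

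To produce $\vec p$, I would invoke \cref{cor:interior}, which applies because $M$ is nonsingular (as $S$ is good): it yields some $\vec p \in M(\R_{\geq 0}^k)\cap \Q^k = \Cs \cap \Q^k$ together with a radius $r > 0$ such that the ball $B(\vec p, r)$ is entirely contained in $\Cs$. A small but important observation is that every coordinate of $\vec p$ is strictly positive. Indeed, $\vec p$ is an interior point of $\Cs$, hence of the form $M\vec \alpha$ for some $\vec \alpha \in \R_{>0}^k$; since the entries of $M$ are non-negative integers and each row of $M$ is nonzero (by nonsingularity), every coordinate of $M\vec \alpha$ is a strictly positive sum.

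Next I would choose $t$ so that $\vec{p'} = \vec p \circ t^{\vec z}$ lands in $\Cs \cap \Q^k$. Clearing denominators, fix $N \in \N_+$ with $N\vec z \in \Z^k$, and write $t = u^N$ for a rational parameter $u > 0$. Then each coordinate of $t^{\vec z}$ is $u^{N\vec z(i)} \in \Q$, so $\vec{p'} \in \Q^k$; and because $\vec p$ has only positive coordinates, so does $\vec{p'}$. The map $u \mapsto \vec p \circ (u^N)^{\vec z}$ is continuous and equals $\vec p$ at $u = 1$, so by \cref{fact:q-dense} I can pick $u \in \Q_{>0}$ with $u \neq 1$ but close enough to $1$ that $\|\vec{p'}-\vec p\| < r$; this places $\vec{p'}$ in $B(\vec p, r)\subseteq \Cs$, hence in $\Cs \cap \Q^k$.

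Conclusions (1) and (2) are then just the computation sketched in the first paragraph, using \cref{obs:vecpow} together with $\langle \vec z, \vec v\rangle = 0$ for $v \in V$ and $\langle \vec z, \vec q\rangle \neq 0$. The main conceptual hurdle is really the reframing: passing from the additive (linear-algebra) statement $\vec z \perp \vec v$ delivered by \cref{fact:orthogonal-vector} to a multiplicative statement about power products via the substitution $\vec{p'} = \vec p \circ t^{\vec z}$. Once that substitution is found, the only thing to check is that such a perturbation can be taken both rational and inside $\Cs$, which \cref{cor:interior} makes possible precisely because $M$ is nonsingular — this is the reason the definition of ``good'' had to include nonsingularity of $M_S$.
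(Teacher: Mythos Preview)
Your proof is correct and follows essentially the same approach as the paper's: choose $\vec z$ via \cref{fact:orthogonal-vector}, take $\vec p$ from \cref{cor:interior}, and set $\vec{p'} = \vec p \circ t^{\vec z}$ for a rational $t$ close to $1$, verifying (1) and (2) through \cref{obs:vecpow}. The only cosmetic difference is that the paper first rescales $\vec z$ into $\Z^k$ and then chooses $t\in\Q$, whereas you keep $\vec z\in\Q^k$ and write $t=u^N$; both devices serve the same purpose of forcing $t^{\vec z}\in\Q^k$. Your explicit remark that all coordinates of $\vec p$ are strictly positive is a welcome addition --- it is what guarantees $\vecpow{\vec p}{\vec q}\neq 0$, without which the final inequality $t^{\langle\vec z,\vec q\rangle}(\vecpow{\vec p}{\vec q})\neq \vecpow{\vec p}{\vec q}$ would not follow; the paper leaves this implicit.
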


\noindent
Before we prove {\bf \small Lemma \ref{lem:exist-p}} let us show that it {\bf \small implies Lemma \ref{usinggoodS}}:\\
Indeed, if we find $p, p'$ as in \cref{lem:exist-p}, then, by \cref{lem:mult-by-c} we can find $c, c' \in \N_+$ such that $c\vec{p}, c'\vec{p'} \in \Ps$. Of course then $cc'\vec{p}, cc'\vec{p'} \in \Ps$ too. Let $\vec{cc'} = \begin{bmatrix} cc' \\ \vdots \\ cc' \end{bmatrix}$. Then, for $v \in V \cup \{ q \}$ we have 

\begin{align*} 
&(\vecpow{cc' \vec p}{\vec v}) - (\vecpow{cc' \vec {p'}}{\vec v}) = \\
&= (\vecpow{\vec{cc'}}{\vec v})(\vecpow{\vec p}{\vec v}) - (\vecpow{\vec{cc'}}{\vec v})(\vecpow{\vec {p'}}{\vec v}) & \text{(by \cref{obs:vecpow})} \\
&= (\vecpow{\vec{cc'}}{\vec v})((\vecpow{\vec p}{\vec v}) - (\vecpow{\vec {p'}}{\vec v}))
\end{align*}
Because $cc' > 0$, also $(\vecpow{\vec{cc'}}{\vec v}) > 0$ and we get:\\
$ (\vecpow{cc' \vec p}{\vec v}) - (\vecpow{cc' \vec {p'}}{\vec v}) = 0\;\;\;\;$ iff $\;\;\;\;(\vecpow{\vec p}{\vec v}) - (\vecpow{\vec {p'}}{\vec v}) = 0$.\\
Then we take $s, s' \in \Ss$ such that $M \vec s = cc' \vec p, M \vec {s'} = cc' \vec {p'}$. By \cref{lem:eval} we have:
\begin{enumerate}
    \item for $v \in V$, $ v(s) = \vecpow{cc' \vec p}{\vec v} = \vecpow{cc' \vec {p'}}{\vec v} = v(s') $
    \item $ q(s) = \vecpow{cc' \vec p}{\vec q} \neq \vecpow{cc' \vec {p'}}{\vec q} = q(s') $
\end{enumerate}
So $D = s, D' = s'$ are the structures as postulated by Lemma \ref{usinggoodS}.\\

\vspace{-3mm}
\noindent
The last thing needed for the proof of Lemma \ref{usinggoodS} is:\\
\vspace{-3mm}

\noindent
{\sc proof of Lemma \ref{lem:exist-p}.}
Take $\vec {z_0} \in \Q^k$ such that (1) $\forall v \in V ~ \dotprod{\vec {z_0}}{\vec v} = 0$ and (2) $\dotprod{\vec {z_0}}{\vec q} \neq 0$.
Such $\vec {z_0}$ exists thanks to  \cref{fact:orthogonal-vector} and to 
the assumption that $\vec q \notin \linspan \{ \vec v \mid v \in V \}$. Then take $d \in \N_+$ such that $d \vec {z_0} \in \Z^k$. Let $\vec z = d \vec {z_0}$. Clearly, $\vec z$ satisfies conditions (1) and (2) too.

Let $\vec p$ and $ r$ be as in \cref{cor:interior}. This means that   $\vec p \in \Cs \cap \Q^k$  and the ball with center $\vec p$ and radius $r$ is contained in $\Cs$, with $r>0$. So we already have $\vec p$ and we will find $\vec{p'}$ in this ball.
\vspace{-1mm}
\begin{lemma}\label{lemat-ce}
There exists $t \in \R_+ \setminus \{ 1 \}$ such that $t^{\vec z} \circ \vec p \in \Cs \cap \Q^k$.
\end{lemma}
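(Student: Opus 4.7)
\smallskip
\noindent
\textbf{Proof plan for Lemma \ref{lemat-ce}.}
The plan is to realize $t^{\vec z} \circ \vec p$ as a continuous function of $t$ and exploit the fact that, by \cref{cor:interior}, the point $\vec p$ sits in the interior of $\Cs$. More precisely, consider the map $f : \R_+ \to \R^k$ defined by $f(t) = t^{\vec z} \circ \vec p$, whose $i$-th coordinate is $t^{z(i)} p(i)$. Since each $z(i)$ is a fixed integer, every coordinate is a continuous function of $t$, so $f$ is continuous; moreover $f(1) = \vec p$ because $1^{z(i)} = 1$. By the defining property of $\vec p$ (the ball of radius $r$ around $\vec p$ is contained in $\Cs$), there is some neighbourhood $(1 - \delta, 1 + \delta)$ of $1$ such that $f(t) \in \Cs$ for every $t$ in this neighbourhood.

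Next I would take care of the rationality requirement. If we insist that $t$ be a positive rational number, then for each $i$ the value $t^{z(i)}$ is rational (because $z(i) \in \Z$, and recall that we adopted the convention $0^0 = 1$; note also that $p(i) = 0$ forces the product to be $0$, trivially rational). Since $p(i) \in \Q$ by \cref{cor:interior}, the product $t^{z(i)} p(i)$ lies in $\Q$, so $f(t) \in \Q^k$. By \cref{fact:q-dense} there are rationals arbitrarily close to $1$, hence we may pick some $t \in \Q \cap (1 - \delta, 1 + \delta)$ with $t \neq 1$ and $t > 0$. For this $t$ we have $t^{\vec z} \circ \vec p \in \Cs \cap \Q^k$, as required.

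I do not see a genuine obstacle here; the argument is essentially ``continuity plus density of the rationals''. The only thing one has to be mildly careful about is the interpretation of $t^{z(i)}$ when $z(i)$ may be negative or zero, but this is harmless: we are taking $t > 0$, and integer powers of positive rationals are rational. The role of \cref{cor:interior} is exactly to provide the rational interior point $\vec p$ together with the ball that turns the continuity argument into membership in $\Cs$.
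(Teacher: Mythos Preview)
Your proposal is correct and follows essentially the same route as the paper: continuity of $t \mapsto t^{\vec z}\circ\vec p$ at $t=1$ together with the interior-ball property of $\vec p$ gives a whole interval of admissible $t$'s, and picking a rational $t\neq 1$ therein yields rational coordinates because $\vec z\in\Z^k$ and $\vec p\in\Q^k$. The paper's proof is just a terser version of exactly this argument.
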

\vspace{-3mm}
\begin{proof}
Observe that the function $t \mapsto t^{\vec z} \circ \vec p$ is continuous and it maps $1$ to $\vec p$. 
Thus there is $\delta > 0$ such that:
$$ \forall t \in (1 - \delta, 1 + \delta) \;\;  \| t^{\vec z} \circ \vec p - \vec p \| < r $$
It is now enough\footnote{
It is here where one needs $\vec z$ to be in  $\Z^k$, otherwise $t^{\vec z}$ would not be rational.} to take any $t \neq 1$ in $(1 - \delta, 1 + \delta) \cap \Q$.
\end{proof}

Let $\vec{p'} = t^{\vec z} \circ p$ where $t$ is as in Lemma \ref{lemat-ce}. By \cref{obs:vecpow}:

\noindent
{\large\textbullet~}For $v \in V$,
    $\vecpow{\vec{p'}}{\vec v} = \vecpow{(t^{\vec z} \circ p)}{\vec v} = (\vecpow{t^{\vec z}}{\vec v})(\vecpow{\vec p}{\vec v}) = t^{\dotprod{\vec z}{\vec v}} (\vecpow{\vec p}{\vec v}) = \vecpow{\vec p}{\vec v}$\\
   {\textbullet~}~$\vecpow{\vec{p'}}{\vec q} = \vecpow{(t^{\vec z} \circ p)}{\vec q} = (\vecpow{t^{\vec z}}{\vec q})(\vecpow{\vec p}{\vec q}) = t^{\dotprod{\vec z}{\vec q}} (\vecpow{\vec p}{\vec q}) \neq \vecpow{\vec p}{\vec q}$

   \noindent
This ends the proof of Lemma \ref{lem:exist-p}, of Lemma \ref{usinggoodS} and of Theorem \ref{th-3}.



\bibliography{references}
\bibliographystyle{ieeetr}

\newpage


\section{Appendix A --- The UCQ case}\label{sec-UCQ}

This section is entirely devoted to the proof of \cref{th-2}. 

As our source of undecidability we take {\bf Hilbert's Tenth Problem}.
 It is well known that the following problem is undecidable:
\begin{problem}\label{problem:H10}
Given a polynomial equation with finite number of unknowns and integer coefficients, determine whether it has a solution such that every unknown is a natural number.
\end{problem}

An instances of Hilbert's Tenth Problem can be seen as a set of monomials (with integer  coefficients).
For a given monomial $m$  we will denote with $\coeff(m)$ its coefficient and with $m(x)$ 
we denote degree of $m$ with respect to  $x$ in $m$ (if $x$ in not present in $m$ then $m(x) = 0$).

In order to prove  \cref{th-2} we will construct a reduction from (the complement) of 
Hilbert's Tenth Problem. As an instance of that problem we are given a set $\hinstance = \set{m_1, m_2, \ldots, m_k}$ of monomials. Let $x_1, x_2, \ldots, x_n$ be the unknowns present in $\hinstance$. 
 We are going to produce  a schema $\schema$, a boolean UCQ $\query$ and a set  $\views$ of boolean UCQ such that $\hinstance$ has {\bf no} solution if and only 
 if $V\detbag q$.

We start with $\schema$, which will consist\footnote{One needs to mention here  that our nullary predicates  come from \cite{SV05} and \cite{M20} 
and our unary predicates come from \cite{IR95}.} of nullary and unary predicates: $H, C, X_1(x), \ldots, X_n(x)$.
For a structure $\db$ and for $R \in \schema$ let us denote, with $\db_{R}$,  the number of atoms of relation $R$ in structure $\db$.
Notice that, since $H$ and $C$ are nullary,   $\db_{H},\db_{C}\in \{0,1\}$  for each $\db$.

Now. the general idea, that one could have in mind, is that  the upcoming set of boolean CQs $\views$ will make sure that any pair of 
\emph{distinct} structures $\db, \db'$ over $\schema$ satisfying $\views(\db) = \views(\db')$ is equal on $X_i$ and differs on $H$ and $C$.

Before we can define $\views$ let us construct two UCQs $\Psi_N$ and $\Psi_P$. 
First, for every monomial $m$ we define the following boolean CQ:

$$\Phi_m = \exists^* \bigwedge_{X_i \in \Sigma}\bigwedge_{j = 1}^{m(x_i)} X_i(y_{i,j})$$

where the quantifier $\exists^*$ binds all the variables $y_{i,j}$ that occur in the formula.

For a structure $\db$  over $\schema$ and for $m\in \hinstance$ let $m_D$  be the value of 
$m$ after substituting, for each unknown $x_i$ in $m$,  the number $\db_{X_i}$. For a solution $\hsolution$ of instance $\hinstance$ we write $m_\hsolution$ to denote the value of $m$ after substituting each unknown $x_i$ with its value in solution $\hsolution$. \JCom{Trochę mnie boli, że piszemy tutaj strasznie długie zdania zamiast używać normalnej matematycznej notacji dla wielomianów: $m(D_{X_1}, D_{X_2}, ..., D_{X_n}), m(\bar f)$. Pewnie dlatego, żeby nie myliło się z $m(x)$, ale może można by to jeszcze zmienić}

\begin{lemma}\label{lem:ucq-phi-homs}
For each $D$ and each  $m\in \hinstance$:
$$ m_D = \coeff(m) \cdot \Phi_m(\db)$$
\end{lemma}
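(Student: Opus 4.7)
The plan is to unwind the definitions on both sides and match them. By construction, the frozen body of $\Phi_m$ consists of $\sum_{i} m(x_i)$ atoms, one for each pair $(i,j)$ with $i \in \{1,\ldots,n\}$ and $j \in \{1, \ldots, m(x_i)\}$, namely the atom $X_i(y_{i,j})$. Crucially, the variables $y_{i,j}$ are pairwise distinct and each variable occurs in exactly one atom, so the frozen body is (isomorphic to) the disjoint union $\sum_{i=1}^{n} m(x_i) \cdot [X_i]$, where $[X_i]$ denotes the one-atom structure $\{X_i(y)\}$ on a single fresh element.

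Next I would compute $\Phi_m(D) = |\homs{\Phi_m, D}|$ by independence: because each variable appears in a single atom and the atoms share no variables, choosing a homomorphism is equivalent to independently choosing, for every pair $(i,j)$, an element $a \in \domain{D}$ satisfying $X_i(a) \in D$. There are exactly $D_{X_i}$ such elements, and the choices are independent across the $\sum_i m(x_i)$ variables, so
\[
\Phi_m(D) \;=\; \prod_{i=1}^{n} (D_{X_i})^{m(x_i)}.
\]
Formally, this also follows from item~\ref{eq:hom-properties-5} of Lemma \ref{lem:hom-properties} applied to the decomposition of the frozen body as a disjoint union, together with the trivial fact that $|\homs{[X_i], D}| = D_{X_i}$.

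Finally, by the definition of $m_D$, substituting $D_{X_i}$ for $x_i$ in the monomial $m = \coeff(m)\cdot\prod_{i=1}^{n} x_i^{m(x_i)}$ gives
\[
m_D \;=\; \coeff(m)\cdot \prod_{i=1}^{n} (D_{X_i})^{m(x_i)} \;=\; \coeff(m)\cdot \Phi_m(D),
\]
which is exactly the desired equality. There is no real obstacle here: the only thing to be careful about is the convention $0^0 = 1$, which is consistent with the case $D_{X_i} = 0$ and $m(x_i) = 0$ (an empty conjunct over $X_i$ contributes a single homomorphism, and $x_i^0 = 1$ in the monomial), and this is already fixed by the Notational Convention in Section 2.
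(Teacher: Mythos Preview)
Your proof is correct and follows exactly the same approach as the paper: the paper's proof consists of the single line ``It follows from \cref{lem:hom-properties}~(\ref{eq:hom-properties-5}),'' and you have simply unpacked this reference by spelling out the disjoint-union decomposition of the frozen body and the resulting product formula. There is nothing to add.
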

\begin{proof}
It follows from~\cref{lem:hom-properties}~(\ref{eq:hom-properties-5}).
\end{proof}

Let $P$ be subset of $\hinstance$ containing monomials with positive coefficients and let $N$ contain monomials with negative coefficients, then define: 
$$\Psi_P = \bigvee_{m \in P} \bigvee_{i = 1}^{\coeff(m)} \Phi_{m} \wedge H, \hspace{0.5cm} \Psi_N = \bigvee_{m \in N} \bigvee_{i = 1}^{\coeff(m)} \Phi_{m} \wedge C.$$

\begin{lemma}\label{lem:ucq-psi-p}
For each $D$ it holds that: 
$$\db_{H}\cdot \sum_{m \in P} m_D = \Psi_P(\db).$$
\end{lemma}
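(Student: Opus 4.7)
The plan is to unfold the definition of $\Psi_P(\db)$ as a sum of the results of its disjuncts, then compute each disjunct separately.

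First, I would observe that, because $H$ is a nullary predicate, the frozen body of the boolean CQ $\Phi_m \wedge H$ differs from the frozen body of $\Phi_m$ only by the presence of the atom $H$. Hence any homomorphism from the frozen body of $\Phi_m \wedge H$ to $\db$ exists iff $H\in \db$ and extends/restricts canonically. This gives the pointwise identity
\[
(\Phi_m \wedge H)(\db) \;=\; \db_H \cdot \Phi_m(\db),
\]
with $\db_H\in\{0,1\}$. This is the only nontrivial step and it is really just a bookkeeping remark about nullary predicates.

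Next, applying the definition of the result of a boolean UCQ as a sum over disjuncts and noting that the inner union over $i\in\{1,\dots,\coeff(m)\}$ simply repeats the same CQ $\coeff(m)$ times, I would write
\[
\Psi_P(\db) \;=\; \sum_{m\in P}\sum_{i=1}^{\coeff(m)} (\Phi_m\wedge H)(\db)
\;=\; \sum_{m\in P} \coeff(m)\cdot \db_H\cdot \Phi_m(\db).
\]

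Finally, I would factor out $\db_H$ (which does not depend on $m$) and invoke Lemma~\ref{lem:ucq-phi-homs}, which states $m_\db = \coeff(m)\cdot \Phi_m(\db)$, to rewrite each summand:
\[
\Psi_P(\db) \;=\; \db_H \cdot \sum_{m\in P} \coeff(m)\cdot \Phi_m(\db) \;=\; \db_H \cdot \sum_{m\in P} m_\db,
\]
which is exactly the claimed equality. No step here is really difficult; the only place one could slip is in confusing the role of the nullary atom $H$ in a CQ versus in a structure, so I would spell that out carefully in the proof.
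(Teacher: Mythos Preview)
Your proof is correct and follows essentially the same route as the paper: unfold the UCQ as a sum over disjuncts, split off the nullary atom $H$ to get the factor $\db_H$, collapse the inner sum to a factor $\coeff(m)$, and finish with Lemma~\ref{lem:ucq-phi-homs}. The only cosmetic difference is that the paper justifies $(\Phi_m\wedge H)(\db)=\db_H\cdot\Phi_m(\db)$ by citing \cref{lem:hom-properties}(\ref{eq:hom-properties-5}), whereas you argue it directly from $H$ being nullary; both are the same observation.
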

\begin{proof}
\begin{align*}
\Psi_P(\db) & = \sum_{m\in P}\sum_{i = 1}^{\coeff(m)} (\Phi_P \wedge H)(\db) \tag{\cref{lem:hom-properties}}\\
                    & = \sum_{m\in P}\sum_{i = 1}^{\coeff(m)} \db_H \cdot \Phi_P(\db)  \\
                    & = \db_H \cdot \sum_{m\in P}\sum_{i = 1}^{\coeff(m)} \Phi_P(\db) \\
                    & = \db_H \cdot \sum_{m\in P}\coeff(m)\cdot \Phi_P(\db) \\
                    & = \db_H \cdot \sum_{m\in P}m_\db \tag{\cref{lem:ucq-phi-homs}}
\end{align*}
\end{proof}

\begin{lemma}\label{lem:ucq-psi-n}
For each $D$ it holds that:
$$\db_{C}\cdot \sum_{m \in N} m_D = -\Psi_N(\db)$$
\end{lemma}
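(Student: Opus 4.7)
The plan is to mirror the proof of the previous lemma, adjusting for the sign convention. Since $m \in N$ means $\coeff(m) < 0$, the inner disjunction $\bigvee_{i=1}^{\coeff(m)}$ in the definition of $\Psi_N$ must be read as $\bigvee_{i=1}^{|\coeff(m)|}$ (which is the only sensible reading). The whole argument then boils down to tracking a single sign flip: for $m \in N$ one has $|\coeff(m)| = -\coeff(m)$.

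First, I would expand $\Psi_N(D)$ as a sum using the definition of UCQ evaluation (the boolean UCQ result is the sum of the results of its disjuncts):
\begin{align*}
\Psi_N(D) &= \sum_{m \in N} \sum_{i=1}^{|\coeff(m)|} (\Phi_m \wedge C)(D).
\end{align*}
Next, since $C$ is a nullary symbol, any homomorphism of $\Phi_m \wedge C$ into $D$ exists iff $D_C = 1$, so $(\Phi_m \wedge C)(D) = D_C \cdot \Phi_m(D)$. Factor out $D_C$ and collapse the inner sum to obtain $D_C \cdot \sum_{m \in N} |\coeff(m)| \cdot \Phi_m(D)$.

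Then apply \cref{lem:ucq-phi-homs}, which gives $\coeff(m) \cdot \Phi_m(D) = m_D$, so $|\coeff(m)| \cdot \Phi_m(D) = -\coeff(m) \cdot \Phi_m(D) = -m_D$ for $m \in N$. Substituting yields
\begin{align*}
\Psi_N(D) &= D_C \cdot \sum_{m \in N} (-m_D) = -D_C \cdot \sum_{m \in N} m_D,
\end{align*}
which after multiplying through by $-1$ is exactly the desired identity.

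No step is really an obstacle here; the only subtle point is the interpretation of the indexing $\bigvee_{i=1}^{\coeff(m)}$ when $\coeff(m) < 0$, which we resolve by taking absolute values, and then the sign $-1$ of $\coeff(m)/|\coeff(m)|$ is exactly what produces the minus sign on the right-hand side of the statement. The rest is the same bookkeeping as in the proof of the $\Psi_P$ case.
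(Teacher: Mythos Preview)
Your proof is correct and follows exactly the approach the paper intends: the paper's own proof simply reads ``Analogous to the proof of~\cref{lem:ucq-psi-p},'' and your write-up spells out precisely that analogy, including the necessary observation that for $m \in N$ one must read the disjunction bound as $|\coeff(m)| = -\coeff(m)$, which is what produces the minus sign.
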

\begin{proof}
Analogous to the proof of~\cref{lem:ucq-psi-p}.
\end{proof}


Finally we are able to define a query $\query$ and a set of queries $\views$. Our boolean UCQ $\query$ will simply be equal to $H$. The set $\views$ will contain the following boolean UCQs :
\begin{itemize}
    \item $V_1 = H \vee C$,
    \item $V_{x_i} = \exists{y}\;X_i(y)$ for each $X_i$ in schema $\schema$,
    \item $V_\hinstance = \Psi_P \vee \Psi_N$.
\end{itemize}

The above definition of $\views$ implies the following property:
\begin{lemma}\label{lem:ucq-counter-property}
For every pair of distinct structures $\db, \db'$ such that $\views(\db) = \views(\db')$ the following holds:
    $$\db_{X_i} = \db'_{X_i}, \hspace{0.5cm}
    \db_{H} = \db'_{C},  \hspace{0.5cm}
    \db_{C} = \db'_{H}, \hspace{0.5cm}
    \db_{H} \neq \db_{C}$$

\end{lemma}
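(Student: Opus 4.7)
The plan is to extract, one view at a time, the algebraic constraint that $\views(\db) = \views(\db')$ imposes on the counts $\db_R$, and then finish with a small case analysis on the nullary values. First, $V_{x_i} = \exists y\,X_i(y)$ has a single atom in its frozen body, so $V_{x_i}(\db) = \db_{X_i}$; the equality $V_{x_i}(\db) = V_{x_i}(\db')$ thus immediately gives $\db_{X_i} = \db'_{X_i}$ for every $i$, the first of the four conclusions. In particular $m_\db = m_{\db'}$ for every monomial $m$, and I will write $P_\db := \sum_{m \in P} m_\db$ and $N_\db := \sum_{m \in N} m_\db$ for these common values.

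Next, since $H$ and $C$ are nullary, $V_1 = H \vee C$ evaluates under bag semantics to $\db_H + \db_C$, so $V_1(\db) = V_1(\db')$ yields
\[
\db_H + \db_C \;=\; \db'_H + \db'_C. \quad (\star)
\]
By \cref{lem:ucq-psi-p} and \cref{lem:ucq-psi-n}, $V_\hinstance(\db) = \db_H P_\db - \db_C N_\db$, and symmetrically for $\db'$; combining with $(\star)$ the equality $V_\hinstance(\db) = V_\hinstance(\db')$ collapses to $(\db_H - \db'_H)(P_\db + N_\db) = 0$.

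It then remains to argue that distinctness of $\db$ and $\db'$ forces $\db_H \neq \db'_H$: since the schema contains only the nullary $H, C$ and the unary $X_i$, and the previous step already equates all unary counts, any disagreement between $\db$ and $\db'$ must appear on $H$ or $C$; combined with $(\star)$ it then forces disagreement on both. Since all four nullary values lie in $\{0,1\}$, without loss of generality $\db_H = 1$ and $\db'_H = 0$, and $(\star)$ gives $\db_C = 0$ and $\db'_C = 1$. The three remaining conclusions $\db_H = \db'_C$, $\db_C = \db'_H$ and $\db_H \neq \db_C$ follow at once.

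The one genuinely delicate step is the implication ``distinctness implies $\db_H \neq \db'_H$''; it rests on the convention, natural for this nullary-plus-unary schema, that structures are identified once they agree on the nullary truth values and on all unary counts. Notice that the identity $(\db_H - \db'_H)(P_\db + N_\db) = 0$ extracted from $V_\hinstance$ is not actually consumed by the four conclusions stated in the lemma, but it is precisely the identity that later couples the existence of a counterexample pair $(\db, \db')$ to the solvability of the Hilbert's tenth instance $\hinstance$, thereby powering the overall undecidability reduction of \cref{th-2}.
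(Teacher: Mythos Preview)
Your argument is correct and follows essentially the same line as the paper's: use $V_{x_i}$ for the unary equalities, use $V_1$ to constrain the nullary values, and invoke distinctness to rule out the cases where $\db_H=\db'_H$ and $\db_C=\db'_C$. The paper presents the $V_1$ step as a four-case enumeration on the value $V_1(\db)\in\{0,1,2\}$ rather than via your equation~$(\star)$, and it does not touch $V_\hinstance$ at all in this lemma (that view is only used in the subsequent lemma); your derivation of $(\db_H-\db'_H)(P_\db+N_\db)=0$ is correct but, as you yourself note, superfluous here. Your explicit flagging of the ``structures identified by counts'' convention is actually more careful than the paper, which relies on the same identification silently.
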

\begin{proof}
Property $\db_{X_i} = \db'_{X_i}$ is obvious given views $V_{x_i}$. From $V_1(\db) = V_1(\db')$ we get following possibilities:
\begin{enumerate}
    \item  $\quad \db_{H} = \db'_{C},  \quad \db_{C} = \db'_{H}, \quad \db_{H} \neq \db_{C}$ \hfill ($V_1(\db) = 1$) 
    \item  $\quad \db_{H} = \db'_{H},  \quad \db_{C} = \db'_{C}, \quad \db_{H} \neq \db_{C}$ \hfill ($V_1(\db) = 1$) 
    \item  $\quad \db_{H} = \db'_{H} = \db_{C} = \db'_{C} = 0$ \hfill ($V_1(\db) = 0$) 
    \item  $\quad \db_{H} = \db'_{H} = \db_{C} = \db'_{C} = 1$ \hfill ($V_1(\db) = 2$) 
\end{enumerate}

From $\db_{X_i} = \db'_{X_i}$ and the fact that $\db \neq \db'$ we conclude that only $(1)$ can hold.
\end{proof}

Thus whenever we will have two different structures $\db, \db'$ satisfying $\views(\db) = \views(\db')$ we will assume without loss of generality that: $\db_{H} = \db'_{C} = 1$ and $\db_{C} = \db'_{H} = 0$. Notice that this implies that $q(D)\neq q(D')$ for such $D$ and $D'$.

To finish the proof of Theorem \ref{th-2} it is now enough to show:

\begin{lemma}\label{lem:ucq-h10-is-ucq-det}
There exists a pair of different structures $\db, \db'$ over schema $\schema$ that satisfies $\views(\db) = \views(\db')$ if and only if $\hinstance$ has a solution over natural numbers.
\end{lemma}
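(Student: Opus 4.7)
The plan is to prove both directions by combining the structural characterization from Lemma~\ref{lem:ucq-counter-property} with the evaluation formulas for $\Psi_P$ and $\Psi_N$ given by Lemmas~\ref{lem:ucq-psi-p} and~\ref{lem:ucq-psi-n}. The key identity that does all the work in both directions is:
$$V_I(\db) - V_I(\db') = \sum_{m \in \hinstance} m_\db$$
once one substitutes the specific values of $\db_H, \db_C, \db'_H, \db'_C$ dictated by Lemma~\ref{lem:ucq-counter-property}.

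For the $(\Rightarrow)$ direction, I start from a pair of distinct structures $\db \neq \db'$ with $\views(\db) = \views(\db')$. Lemma~\ref{lem:ucq-counter-property} (together with the WLOG convention fixed right after it) gives $\db_{X_i} = \db'_{X_i}$ for every $i$, together with $\db_H = \db'_C = 1$ and $\db_C = \db'_H = 0$. Since $V_I = \Psi_P \vee \Psi_N$, Lemmas~\ref{lem:ucq-psi-p} and~\ref{lem:ucq-psi-n} give
$$V_I(\db) = \db_H \textstyle\sum_{m \in P} m_\db \;-\; \db_C \textstyle\sum_{m \in N} m_\db \;=\; \textstyle\sum_{m \in P} m_\db,$$
$$V_I(\db') = \db'_H \textstyle\sum_{m \in P} m_{\db'} \;-\; \db'_C \textstyle\sum_{m \in N} m_{\db'} \;=\; -\textstyle\sum_{m \in N} m_\db,$$
where the last equality uses $\db_{X_i} = \db'_{X_i}$, hence $m_\db = m_{\db'}$. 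Equating the two expressions yields $\sum_{m \in \hinstance} m_\db = 0$, so setting $x_i := \db_{X_i} \in \mathbb{N}$ produces a solution of $\hinstance$.

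For the $(\Leftarrow)$ direction, given a solution $\hsolution = (f_1, \ldots, f_n) \in \mathbb{N}^n$ of $\hinstance$, I construct $\db$ and $\db'$ explicitly. Both contain exactly $f_i$ distinct atoms of $X_i$ (over disjoint fresh constants), $\db$ contains the nullary atom $H$ but not $C$, and $\db'$ contains $C$ but not $H$. Then $\db \neq \db'$, and I verify $\views(\db) = \views(\db')$ view by view: $V_{x_i}(\db) = f_i = V_{x_i}(\db')$ because each homomorphism of $\exists y\, X_i(y)$ corresponds to a distinct $X_i$-atom; $V_1(\db) = 1 = V_1(\db')$ since exactly one of $H, C$ holds in each; finally, the same calculation as above shows $V_I(\db) = \sum_{m \in P} m_\db$ and $V_I(\db') = -\sum_{m \in N} m_\db$, and these are equal precisely because $\hsolution$ is a solution, i.e.\ $\sum_{m \in \hinstance} m_\db = 0$.

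The only real bookkeeping challenge is sign tracking in the $\Psi_N$ term: the coefficients of monomials in $N$ are negative, but $\db_H, \db_C$ and the homomorphism counts $\Phi_m(\db)$ are all nonnegative, so one must carefully distinguish $\coeff(m)$ from $|\coeff(m)|$ when unfolding the disjunctions in $\Psi_N$. Lemma~\ref{lem:ucq-psi-n} already absorbs this, so after invoking it the argument reduces to the single linear equation $\sum_{m \in \hinstance} m_\db = 0$, which is visibly equivalent to $\hinstance$ being solvable over $\mathbb{N}$.
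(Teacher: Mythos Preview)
Your proof is correct and follows essentially the same approach as the paper's: both directions hinge on the identity $V_\hinstance(\db) - V_\hinstance(\db') = \sum_{m \in \hinstance} m_\db$ once the values of $\db_H, \db_C, \db'_H, \db'_C$ are pinned down by Lemma~\ref{lem:ucq-counter-property}, and the construction in the $(\Leftarrow)$ direction is identical. You are slightly more thorough in explicitly verifying $V_1$ and $V_{x_i}$ in the $(\Leftarrow)$ direction, which the paper leaves implicit.
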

\begin{proof}
($\Leftarrow$). Let $\hsolution$ be a solution over $\N$ of $\hinstance$ and let $a_i$ be a value of $x_i$ in $\hsolution$. Then let $\db$ and $\db'$ be such that:
\begin{itemize}
    \item $\db_H = 1$, $\db'_H = 0$, $\db_C = 0$, $\db'_C = 1$,
    \item $\db_{X_i} = \db'_{X_i} = a_i$,
\end{itemize}
From~\cref{lem:ucq-psi-p,lem:ucq-psi-n} we show that:
$$ V_\hinstance(\db) - V_\hinstance(\db') = \sum_{m \in P} m_\hsolution + \sum_{m \in N} m_\hsolution = \sum_{m \in \hinstance} m_\hsolution = 0$$

($\Rightarrow$). 
Now we will show that $\sum_{m \in \hinstance} m_D = 0$. From \cref{lem:ucq-psi-p,lem:ucq-psi-n,lem:ucq-counter-property} we get:
\begin{align*}
    V_\hinstance(\db) &= V_\hinstance(\db') \\
    \Psi_P(\db) + \Psi_N(\db) &= \Psi_P(\db') + \Psi_N(\db') \\
    \Psi_P(\db) &= \Psi_N(\db') \\
    \sum_{m \in P} m_D  &=  - \sum_{m \in N} m_D \\
    \sum_{m \in P} m_D  + \sum_{m \in N} m_D &= 0\\
    \sum_{m \in \hinstance} m_D  &= 0 
\end{align*}
\end{proof}


\section{Appendix B. Proof of Lemma \ref{lem:path-queries}($\Rightarrow$).}

Suppose there is no path, in $G_{q,V}$, from $\varepsilon$ to $q$.
We will show that in such case $V$ does not determine $q$. Let structure $D$ be defined in the following way:

\begin{itemize}
    \item $dom(D) = \{ [w,j] \mid w \text{ is a prefix of }q, j\in\{0,1\} \}$
    \item For $[w, i], [u, j] \in dom(D), R \in \schema$ we have $R([w,i], [u,j]) \in D$ if and only if 
$ u = wR$ and $i = j$.
\end{itemize}

\noindent
So $D$ is just $q + q$, that is the union of two disjoint frozen bodies of $q$.
It follows easily from the definition that  $\langle [\varepsilon,0], [q,0]\rangle\in q(D)$ with multiplicity 1.\\


For $w,u\in G_{q,V}$ we define $w\thicksim u$ if either both $w$ and $u$ are reachable, in graph $G_{q,V}$, from $\varepsilon$ or if none
of them is. Clearly,  $w\thicksim w'$  is an equivalence relation with two equivalence classes, and with $\varepsilon \not\thicksim q$.\\

We will now define the second structure, our $D'$, with the same domain as $D$, but different atoms:

For $u = wR$:
\begin{itemize} 
\item
$R([w,i], [u,i]) \in D'$ if and only if $u = wR$ and $w\thicksim u$;

\item if $i\neq j$ then $R([w,i], [u,j]) \in D'$ if and only if  $u = wR$ and $w\not\thicksim u$.
\end{itemize}

\noindent

Notice that this means that  if there is any  path in $D'$ from some $[w,i]$ to $[u,j]$ then:
\begin{center}
$i=j$ if and only if $w\thicksim u$.
\end{center}

Which in particular means that $\langle [\varepsilon,0], [q,0]\rangle\not\in q(D')$ and hence $q(D)\neq q(D')$.

But on the other hand, for $v \in V$, it is very easy to see that if $uvu'=q$ for some $u,u'\in \schema^*$ and if $i\in \{0,1\}$
then 
$\langle [u,i], [uv,i]\rangle\in v(D)$, with multiplicity 1, and that there are no other tuples in $V(D)$.
And it is also not hard to verify that such $V(D)$ exacty equals $V(D')$. This is since, if 
$u$ and $v$ are as above, then $u\thicksim uv$.


\section{Appendix C. Proofs of some lemmas needed for Theorem \ref{th-1}.}

\subsection{Proof of Lemma \ref{niby-thue}}
We only show the first claim, as the other one is symmetric.
The proof will be by induction with respect to $|w|$:

\noindent
(1) $|w| = |q|$. Then $w = q$ and we are done.\\
 (2)  $|w| > |q|$. Then there is $i$ such that $\iota_i = -1$.
  By \cref{def:walk} (1) we know that $\iota_1 = 1$, so there exists $j < i$ such that $\iota_{j} = 1$ and $\iota_{j+1} = -1$. Then, by \cref{def:walk} (3) we conclude that $A_{j} = A_{j+1}=A$ for some $A\in \schema$.
    This means that $w=uAA^{-1}u'$ for some $u,u'\in \bar\schema^*$. It is easy to see that the word $uu'$ constitutes a $q$-walk. And it is shorter than $w$. So, by the hypothesis, we have $uu' \thuetrans q$. And of course there 
    is also $w\thue uu'$, so we get $w \thuetrans q$. \hfill*$\square$

\subsection{Proof of Lemma \ref{cor:reductions}}
(1) If  $u\thue u'$ then there are $w,w'\in \bar\schema^*$ and $R\in \schema$ such that 
    $u=wRR^{-1}w'$ and $u'=ww'$. Then, using Lemma \ref{zawieranie-id}:\\\vspace{-2mm}
    
    $H_u=H_{wRR^{-1}w'} = H_{w'} H_R^{-1} H_R H_w =$\\
    \hspace*{4mm}$=H_{w'} \rel{h_R}^{-1} \rel{h_R} H_w \subseteq H_{w'} I H_{w} = H_{ww'}=H_{u'}$\\\vspace{-2mm}
    
    \noindent
 (2)   If  $u\theu u'$ then there are $w,w'\in \bar\schema^*$ and $R\in \schema$ such that
    $u=wR^{-1}Rw'$ and $u'=ww'$.  Then, again using Lemma \ref{zawieranie-id}: \\\vspace{-2mm}
    
    $H_u=H_{wR^{-1}Rw'} = H_{w'} H_R H_R^{-1} H_w =$\\
    \hspace*{4mm}$=H_{w'} \rel{h_R} \rel{h_R}^{-1}W H_w \supseteq H_{w'} I H_w = H_{ww'}=H_{u'}$ \hfill $\square$


\section{Appendix D.  Proofs of some lemmas needed for Theorem \ref{th-3}.}

\subsection{Proof of Lemma \ref{main-lemma-3} $(\Leftarrow) $.}\label{sec-main-th-2}

Let  $D, D'$ be some structures  such that $v(D) = v(D')$ for each  $v \in V_0$. 
We need to show that $q(D) = q(D')$. There are two cases:\\

\noindent
\textbullet~ {\bf Case 1:} $\exists v \in V \; v(D) = 0$.

    Then of course also $v(D') = 0$. By Observation \ref{najprostsza} this implies that $q(D) = 0$. Likewise we get that  $q(D') = 0$, so $q(D) = q(D')$.\\
    
    \noindent
\textbullet~ {\bf Case 2:} $\forall v \in V ~ v(D) \neq 0$.

    Take $\alpha_1, ..., \alpha_k \in \R$ such that $\vec q = \sum_{i = 1}^{|V|} \alpha \vec v_i$.
    \begin{align*}
        q(D) &= \prod_{i=1}^k w_i(D)^{\vec q(i)} & & \text{(by \cref{obs:query-eval})}\\
        &= \prod_{i=1}^k w_i(D)^{\sum_{j=1}^{|V|} \alpha_j \vec v_j(i)} \\
        &= \prod_{i=1}^k \prod_{j=1}^{|V|} \left(w_i(D)^{\vec v_j(i)} \right)^{\alpha_j} \\
        &= \prod_{j=1}^{|V|} \left(\prod_{i=1}^k w_i(D)^{\vec v_j(i)} \right)^{\alpha_j} \\
        &= \prod_{j=1}^{|V|} v_j(D)^{\alpha_j} & & \text{(by \cref{obs:query-eval} again)}
    \end{align*}

Note that since for every $j \in \{ 1, ..., k \}$ we have $ v_j(D) > 0$, the expression $\prod_{j=1}^{|V|} v_j(D)^{\alpha_j}$ is correct, even if for some $j$ the number $\alpha_j$ is not  natural.

Likewise, we show that $q(D') = \prod_{j=1}^{|V|} v_j(D')^{\alpha_j}$. However, we know that for $j \in \{ 1, ..., k \}$ 
it holds that $v_j(D) = v_j(D')$, so this implies that $q(D') = \prod_{j=1}^{|V|} v_j(D)^{\alpha_j} = q(D)$. \hfill $\square$

\subsection{Proof of Observation \ref{obs-krok-1} }
\noindent
 By Lemma \ref{lem:hom-properties} we have:\vspace{-2mm}
$$|hom(w,s^{(2)})| = |hom(w, \sum_{i=1}^m T^i s_i^{(1)})| = \sum_{i=1}^m T^i |hom(w, s_i^{(1)})|$$ 
Likewise $ |hom(w',s^{(2)})| = \sum_{i=1}^m T^i |hom(w',s_i^{(1)})|$. 

Notice that this means that the sequence:
$$|hom(w, s_m^{(1)})|;|hom(w, s_{m-1}^{(1)})|;\ldots |hom(w, s_1^{(1)})|;0 $$ 
is a representation\footnote{Recall that each of $|hom(w, s_i^{(1)})|$ is smaller than $T$.}  of  $|hom(w,s^{(2)})|$ in radix $T$.
And:
$$|hom(w', s_m^{(1)})|;|hom(w', s_{m-1}^{(1)})|;\ldots |hom(w', s_1^{(1)})|;0 $$ 
\noindent
is a representation of  $|hom(w',s^{(2)})|$ in radix $T$. The two representations are different since 
$|hom(w, s_i^{(1)})|\neq$ $|hom(w', s_i^{(1)})|$ for some $i$ (by Step 1). So the two numbers must be different too. \hfill$\square$

\subsection{Proof of Lemma \ref{ilepierwiastkow}}


 Take any $\alpha_1, ..., \alpha_k \in \R$ such that for all $i \in \range{k}$, $\sum_{j=1}^{k} \alpha_j A(i, j) = 0$. 
We will show that then $\alpha_1 = \dots = \alpha_k = 0$. Let us define a polynomial $P(X) = \alpha_1 + \alpha_2 X + ... + \alpha_k X^{k-1}$. Then $\sum_{j=1}^{k} \alpha_j A(i, j) = \sum_{j=1}^{k} \alpha_j a_i^{j-1} = P(a_i)$. Because $a_1, ..., a_k$ are pairwise distinct, we know that $P$ has at least $k$ zeros. But the degree of $P$ is at most $k-1$, hence $P = 0$ and all of its coefficients are $0$.\hfill$\square$

\subsection{Proof of Observation \ref{obs:vecpow}}
\begin{enumerate}
    \item
    \begin{align*}
        \vecpow{(\vec u \circ \vec v)}{\vec w} &= \prod_{i=1}^k (\vec u(i) \vec v(i))^{\vec w(i)} \\
        &= \prod_{i=1}^k (\vec u(i))^{\vec w(i)} \prod_{i=1}^k (\vec v(i))^{\vec w(i)} \\
        &= (\vecpow{\vec u}{\vec w})(\vecpow{\vec v}{\vec w})
    \end{align*}
    \item 
    \begin{align*}
    \vecpow{t^{\vec u}}{\vec v} &= \prod_{i=1}^k \left(t^{\vec u(i)} \right)^{\vec v(i)}
    = \prod_{i=1}^k t^{\vec u(i) \vec v(i)} \\
    &= t^{\sum_{i=1}^k \vec u(i) \vec v(i)}
    = t^{\dotprod{\vec u}{\vec v}}
    \end{align*}
\end{enumerate}

\subsection{Proof of Lemma \ref{lem:eval}}

\begin{align*}
(M \vec s)(i) &= \sum_{j=1}^k M(i, j)\vec s(j) \\
&= \sum_{j=1}^k w_i(s_j) \vec s(j) & \text{(by definition of $M$)} \\
&= w_i\left( \sum_{j=1}^k \vec s(j) s_j \right) & \text{(by \cref{lem:hom-properties})} \\
&= w_i(s) & \text{(by definition of $\vec s$)}
\end{align*}
\begin{align*}
    \vecpow{(M \vec s)}{\vec v} &= \prod_{i=1}^k (M \vec s)(i)^{\vec v(i)} \\
    &= \prod_{i=1}^k w_i(s)^{\vec v(i)} \\
    &= \left( \sum_{i=1}^k \vec v(i) w_i \right)(s) & \text{(by \cref{lem:hom-properties})} \\
\end{align*}
\hfill $\square$

\end{document}